\documentclass[journal]{IEEEtran}
\newcounter{boxlblcounter}  
% \hfill fills the label box

%\newcommand{\Jww}{J_{\underline{w} \hspace{0.3mm} \underline{w}}}
%\newcommand{\Jbb}{J_{\underline{b} \hspace{0.3mm} \underline{b}}}
%\newcommand{\Jtb}{J_{\theta \underline{b}}}
%\newcommand{\Jbt}{J_{\underline{b} \theta }}
%\newcommand{\Jwt}{J_{\underline{w} \theta}}
%\newcommand{\Jtw}{J_{\theta \underline{w}}}
%\newcommand{\fww}{f_{\underline{w} \hspace{0.3mm} \underline{w}}}
%\newcommand{\fwt}{f_{\underline{w} \theta}}
%\newcommand{\ftw}{f_{\theta \underline{w}}}
%\newcommand{\ftt}{f_{\theta \theta}}
%\newcommand{\Juu}{J_{\underline{u} \hspace{0.3mm} \underline{u}}}
%\newcommand{\Jub}{J_{\underline{u} \hspace{0.3mm} \underline{b}}}
%\newcommand{\Jbu}{J_{\underline{b} \hspace{0.3mm} \underline{u}}}

\newcommand{\avec}{{\bf{a}}}

\newcommand{\bvec}{{\bf{b}}}

\newcommand{\zerovec}{{\bf{0}}}

\newcommand{\alphavec}{{\bf{\alpha}}}

\newcommand{\Gammamat}{{\bf{\Gamma}}}
\newcommand{\Amat}{{\bf{A}}}

\newcommand{\Cmat}{{\bf{C}}}
\newcommand{\Dmat}{{\bf{D}}}
\newcommand{\Emat}{{\bf{E}}}

\newcommand{\Hmat}{{\bf{H}}}

\newcommand{\Imat}{{\bf{I}}}
\newcommand{\Lmat}{{\bf{L}}}

\newcommand{\Vmat}{{\bf{V}}}

\newcommand{\Ymat}{{\bf{Y}}}

\newcommand{\w}{{\rm{w}}}

\newcommand{\define}{\stackrel{\triangle}{=}}

%%% For BOLD Greek Letters

%%% For BOLD Greek Letters

%% Ilya

%\newcommand{\betavec}{{\bf{\beta}}}

\def\alphavec{{\mbox{\boldmath $\alpha$}}}

\newcommand{\be}{\begin{equation}}
\newcommand{\ee}{\end{equation}}
\newcommand{\beqna}{\begin{eqnarray}}
\newcommand{\eeqna}{\end{eqnarray}}

%\newcommand{\}{}
%\mathaccent{\mjm}{$J_{m}$}

\newcommand{\mySet}[1]{\mathcal{#1}}

\newcommand{\Data}{\mySet{D}}
\newcommand{\E}{{\mathbb{E}}}
\newcommand{\Ntraining}{K}
\usepackage{amsfonts,amsthm,amsmath,amssymb}
\usepackage{graphicx,subcaption}
\usepackage[noadjust]{cite}
\usepackage[dvipsnames]{xcolor} 
\DeclareMathAlphabet{\pazocal}{OMS}{zplm}{m}{n}
\DeclareMathOperator*{\argmin}{argmin}
\DeclareMathOperator*{\argmax}{argmax}
\DeclareMathOperator{\EX}{\mathbb{E}}
\usepackage{transparent}
\newtheorem{Claim}{Claim}

\newtheorem{Theorem}{Theorem}

\usepackage{acronym}
\acrodef{map}[MAP]{maximum a-posteriori probability}
\acrodef{snr}[SNR]{signal-to-noise ratio}
\acrodef{mse}[MSE]{mean-squared-error}
\acrodef{mse_abs}[MSE]{mean-squared error}
\acrodef{wlmmse_abs}[WLMMSE]{widely-linear minimum mean-squared-error}
\acrodef{wlmmse}[WLMMSE]{widely-linear minimum mean-squared-error}
\usepackage{latexsym,amsmath,amsthm,amssymb,epsfig,enumerate,color,graphicx,cite}
\usepackage[lined,algonl,boxed]{algorithm2e}
\usepackage{tikz}
\usetikzlibrary{shapes,arrows}
\usepackage{verbatim} 
\usepackage{pgfplots}
%\pgfplotsset{width=7cm,compat=1.5} 

%\newcommand\FigureHeight{  cm} 

% for strikethrough text
\usepackage{soul}

\tikzset{%
	block/.style    = {draw, thick, rectangle, minimum height = 3em,
		minimum width = 3em},
	mult/.style    = {draw, circle, node distance = 2cm}, % Multipler
	sum/.style      = {draw, circle, node distance = 2cm}, % Adder
	input/.style    = {coordinate}, % Input
	output/.style   = {coordinate} % Output
	label/.style   = {draw=none} % Text label
	
}
\newcommand{\suma}{\Large$+$}

\begin{document}
\title{Widely-Linear MMSE Estimation of Complex-Valued Graph Signals} 
\author{Alon~Amar and Tirza Routtenberg, \IEEEmembership{Senior Member, IEEE }
	\thanks{A. Amar is an adjunct instructor at the Faculty of Electrical \& Computer Engineering, Technion, Haifa, Israel.
	T. Routtenberg is with the School of Electrical and Computer Engineering, Ben-Gurion University of the Negev, Beer Sheva, Israel and with the Electrical and Computer Engineering Department, Princeton University, Princeton, NJ. 
e-mail: aamar@technion.ac.il, tirzar@bgu.ac.il.
%This work has been submitted to the IEEE for possible publication. Copyright may be transferred without notice, after which this version may no longer be accessible.\\
This work is partially supported by the  Israeli Ministry of National Infrastructure, Energy, and Water Resources and by the ISRAEL SCIENCE FOUNDATION (grant No. 1148/22).\\
$\copyright$ 2023 IEEE.  Personal use of this material is permitted.  Permission from IEEE must be obtained for all other uses, in any current or future media, including reprinting/republishing this material for advertising or promotional purposes, creating new collective works, for resale or redistribution to servers or lists, or reuse of any copyrighted component of this work in other works.}
}

% The paper headers
%\markboth{TO BE SUBMITTED TO}{}%
%{Shell \MakeLowercase{\textit{et al.}}: Bare Demo of IEEEtran.cls for Journals}

\maketitle

%%%%%%%%%%%%%%%%%%%%%%%%%%%%%%%%%%%%%%%%%%%%%%%%%%%%%%%%%%%%%%%%%%%%%%%%%%%%%%%%
% CHARACTER DEFINITIONS
%%%%%%%%%%%%%%%%%%%%%%%%%%%%%%%%%%%%%%%%%%%%%%%%%%%%%%%%%%%%%%%%%%%%%%%%%%%%%%%%
\def\E{\textrm{E}}
\def\blockdiag{\text{block-diag}}
\def\tr{\text{tr}}
\def\diag{\text{diag}}
\def\x{{\mathbf x}}
\def\L{\mathbf{L}}
\def\w{\mathbf{w}}
\def\W{\mathbf{W}}
\def\X{\mathbf{X}}
\def\Y{\mathbf{Y}}
\def\Z{\mathbf{Z}}
\def\Emat{\mathbf{E}}
\def\What{\widehat{\mathbf{W}}}
\def\A{\mathbf{A}}
\def\Aqbar{\mathbf{\bar{A}}_{q}}
\def\PAqbar{\mathbf{P}_{\mathbf{\bar{A}}_{q}}}
\def\PAqbarOrth{\mathbf{P}_{\mathbf{\bar{A}}_{q}}^{\bot}}
\def\p{\mathbf{p}}
\def\phat{\hat{\mathbf{p}}}
\def\Gammaq{\mathbf{\Gamma}_{q}}
\def\H{\mathbf{H}}
\def\V{\mathbf{V}}
\def\I{\mathbf{I}}
\def\J{\mathbf{J}}
\def\bq{\mathbf{b}_{q}}
\def\aq{\mathbf{a}_{q}}
\def\d{\mathbf{d}}
\def\u{\mathbf{u}}
\def\v{\mathbf{v}}
\def\g{\mathbf{g}}
\def\D{\mathbf{D}}
\def\R{\mathbf{R}}
\def\a{\mathbf{a}}
\def\q{\mathbf{q}}
\def\C{\mathbf{C}}
\def\bqhat{\hat{\mathbf{b}}_{q}}
\def\ahat{\hat{\mathbf{a}}}
\def\aqhat{\hat{\mathbf{a}}_{q}}
\def\alq{\mathbf{a}_{\ell q}}
\def\alqhat{\hat{\mathbf{a}}_{\ell q}}
\def\Rss{\mathbf{R}_{\mathbf{ss}}}
\def\Rrr{\mathbf{R}_{\mathbf{rr}}}
\def\Rsr{\mathbf{R}_{\mathbf{sr}}}
\def\Ahat{\hat{\mathbf{A}}}
\def\n{\mathbf{n}}
\def\s{\mathbf{s}}
\def\J{\mathbf{J}}
\def\T{\mathbf{T}}
\def\b{\mathbf{b}}
\def\e{\mathbf{e}}
\def\r{\mathbf{r}}
\def\Lam{\mathbf{\Lambda}}
\def\Lambdas{\mathbf{\Lambda}_{\tilde{\s}}}
\def\F{{\mathbf{F}}}
\def\Fdot{\dot{\mathbf{F}}}
\def\Fddot{\ddot{\mathbf{F}}}
\def\Gq{\mathbf{G}_{q}}
\def\G{\mathbf{G}}
\def\O{\mathbf{O}}
\def\f{\mathbf{f}}
\def\g{\mathbf{g}}
\def\h{\mathbf{h}}
\def\R{\mathbf{R}}
\def\D{\mathbf{D}}
\def\Q{\mathbf{Q}}
\def\P{\mathbf{P}}
\def\C{\mathbf{C}}
\def\c{\mathbf{c}}
\def\B{\mathbf{B}}
\def\X{\mathbf{X}}
\def\T{\mathbf{T}}
\def\S{\mathbf{S}}
\def\s{\mathbf{s}}
\def\A{\mathbf{A}}
\def\P{\mathbf{P}}
\def\p{\mathbf{p}}
\def\t{\mathbf{t}}
\def\z{\mathbf{z}}
\def\Z{\mathbf{Z}}
\def\e{\mathbf{e}}
\def\m{\mathbf{m}}
\def\N{\mathbf{N}}
\def\U{\mathbf{U}}
\def\X{\mathbf{X}}
\def\Y{\mathbf{Y}}
\def\y{\mathbf{y}}
\def\Ups{\mathbf{\Upsilon}}
\def\ESNR{\text{ESNR}}
\def\SNR{\text{SNR}}
\def\min{\text{min}}
\def\max{\text{max}}
\def\dim{\text{dim}}
\def\Var{\text{Var}}
\def\Null{\text{Null}}
\def\rank{\text{rank}}
\def\RMSE{\text{RMSE}}
\def\MSE{\text{MSE}}
\def\cov{\text{cov}}
\def\max{\text{max}}
\def\argmax{\text{argmax}}
\def\argmin{\text{argmin}}
\def\log{\text{log}}
\def\ln{\text{ln}}
\def\vec{\text{vec}}
\def\BD{\text{BlockDiag}}
\def\rh{\mathbf{\rho}}
\def\thet{\mathbf{\theta}}
\def\Sigm{\mathbf{\Sigma}}
\def\exp{\text{exp}}
\def\det{\text{det}}
\def\K{\text{K}}
\def\Diag{\text{Diag}}
\def\N{\text{N}}
\def\M{\text{M}}
\def\RC{\text{RC}}
\def\FIM{\text{FIM}}
\def\sinc{\text{sinc}}
\def\eqdef{\buildrel\Delta\over=}
\newcommand{\ve}[1]{\boldsymbol{#1}}
\theoremstyle{remark}
\newtheorem*{remark}{Remark\normalfont} 
%\newcommand{\comment}[1]{\textcolor{red}{#1}}
%\newcommand{\comment}[2]{\textcolor{#1}{#2}}
%\newcommand{\mbf}[1]{{\mbox{\boldmath $#1$}}}
%\newcommand{\eq}[1]{{\eqref{#1}}}

%------------------ Abstract
 \begin{abstract}
 In this paper, we consider the problem of recovering  random graph signals with complex values. For general Bayesian estimation of complex-valued vectors, it is known that the  widely-linear minimum mean-squared-error (WLMMSE) estimator can achieve a lower mean-squared-error (MSE) than that of the linear minimum MSE (LMMSE) estimator for the estimation of improper complex-valued signals. Inspired by the WLMMSE estimator, in this paper  we develop the graph signal processing (GSP)-WLMMSE estimator, which minimizes the MSE among estimators that are represented as a two-channel output of a graph filter, i.e. widely-linear GSP estimators. We discuss the properties of the proposed GSP-WLMMSE estimator. In particular, we show that the MSE of the GSP-WLMMSE estimator is always equal to or lower than the MSE of the GSP-LMMSE estimator. The GSP-WLMMSE estimator is based on {\em{diagonal}} covariance matrices in the graph frequency domain, and thus has reduced complexity compared with the WLMMSE estimator. This property is especially important when using the sample-mean versions of these estimators that are based on a training dataset.
We state conditions under which the low-complexity GSP-WLMMSE estimator coincides with the WLMMSE estimator.
In simulations, we investigate a synthetic linear estimation problem and the nonlinear problem of  state estimation in power systems.  For these problems, it is shown that the GSP-WLMMSE estimator outperforms the GSP-LMMSE estimator and achieves similar performance to that of the WLMMSE estimator. Moreover, 
the sample-mean version of the GSP-WLMMSE estimator outperforms the sample-mean WLMMSE estimator for a limited training dataset and is  more robust to topology changes.
\end{abstract}
%------------------ Key words

 \begin{IEEEkeywords}
 Widely-linear minimum
mean-squared-error (WLMMSE) estimator, graph signal processing (GSP), graph filters, improper complex-valued random signals
 \end{IEEEkeywords}

%------------------ Introduction
\section{Introduction}
%0. The general motivation 
Graph signal processing (GSP)  theory extends concepts and techniques from traditional digital signal processing (DSP) to data indexed by generic graphs, including the graph Fourier transform (GFT), graph filter design \cite{8347162,Shuman_Ortega_2013}, and sampling  of graph signals \cite{9244650,6854325}.
GSP theory provides a set of powerful tools for processing data over networks in terms of computational efficiency, signal processing in a decentralized manner, utilization of relevant network information by employing the graph filtering framework, and more \cite{8347162,Shuman_Ortega_2013}. 
In this context, the development of GSP methods for the 
recovery (or estimation) of graph signals is a fundamental
task  with both practical and theoretical importance \cite{routtenberg2020,kroizer2021bayesian}.
However, 
most  works have focused on real-valued graph signals.
Complex-valued graph signals arise in several real-world applications, such as multi-agent systems  \cite{han2015formation},
 wireless communication systems \cite{Tscherkaschin_2016},
 voltage and power phasors in electrical networks  \cite{2021Anna,drayer2018detection,dabush2021state,Morgenstern_2022}, and  probabilistic graphical models with 
complex-valued multivariate Gaussian vectors \cite{Tugnait_2019}.
Despite the widespread use of complex-valued graph signals, 
 their recovery is a crucial problem that has not been well explored.

%2. The importance of the widely-linear filters in the general case
In    processing general complex-valued random vectors, standard statistical signal processing approaches typically impose a restrictive assumption on these vectors in the form of properness or second-order circularity  \cite{neeser1993proper}.
 However, this assumption does not necessarily hold, and in various applications the  data  is noncircular or improper (i.e. the signals have linear correlations with their own complex conjugates \cite{schreier2010statistical}). This situation arises, for example, through power imbalance
and correlation in data channels \cite{nitta2019hypercomplex,Yili_mandic}. 
Various tools for
the processing of  noncircular complex-valued signals  for different tasks were proposed  in the literature  \cite{eriksson2006complex,li2008complex,Picinbono_Chevalier1995,JAHANCHAHI201433,boloix2017widely}, based on 
 augmented complex statistics \cite{schreier2010statistical,mandic2009complex}.  
 In the context of parameter estimation, it has been shown that
if the observations and/or
the parameters of interest are improper,  widely-linear estimators can outperform conventional  linear estimators.
In particular,
the \ac{wlmmse}  estimator, first proposed in \cite{Picinbono_Chevalier1995}, significantly outperforms  the linear minimum mean-squared-error (LMMSE) estimator for improper random vectors and has many desired properties \cite{schreier2010statistical}.
However, conventional widely-linear estimation does not utilize the graph properties to obtain efficient estimation. Moreover, for the general nonlinear case, the covariance matrices used in the \ac{wlmmse} are difficult to determine. Thus, estimating these matrices from data with high accuracy is needed, which often necessitates a larger sample size than is available \cite{Eldar_merhav}. To address these challenges, obtain efficient estimation, and benefit from the powerful GSP tools, there is a need  for widely-linear estimation methods for improper complex-valued graph signals based on GSP.

Graph filters have been used for many signal processing tasks, such as classification \cite{6778068}, denoising \cite{7032244},  and anomaly detection \cite{drayer2018detection}. Model-based recovery of a graph signal from noisy measurements by graph filters for {\em{linear}} models, usually with {\em{real-valued}} data, was treated in \cite{7117446,7032244,Isufi_Leus2017,7891646}.
The graphical  Wiener filter was developed
in  \cite{7891646} and in \cite{Yagan_Ozen_2020}  for real and for proper complex signals, respectively.
Nonlinear methods, such as the nonlinear graph filters that were considered in \cite{8496842},  require higher-order statistics that are not completely specified in the general case, and have  significantly greater computational complexity.
Graph neural network approaches  \cite{Isufi_Ribeiro,gama2020graphs} require extensive training sets and result in nonlinear estimators with high computational complexity.
Fitting graph-based models to given data was considered in \cite{7763882,Hua_Sayed_2020,confPaper}. Based on the fitted
 graph-based model, one can implement different simple recovery algorithms.
However, model-fitting approaches aim to minimize the modeling error and, in general, have significantly lower performance  than  estimators that minimize the estimation error directly.

The GSP-linear minimum mean-squared-error (GSP-LMMSE) estimator was proposed in our work in \cite{kroizer2021bayesian}  for the recovery of {\em{real-valued}} graph signals. The GSP-LMMSE estimator minimizes the \ac{mse} among the subset of estimators that are represented as an output of a graph filter.
Compared with the LMMSE estimator, the GSP-LMMSE estimator can achieve significant complexity reduction without seriously compromising performance. Moreover, it requires less training data when there is a need to estimate the covariance matrices \cite{kroizer2021bayesian}. 
It also has graph filter implementations that are adaptive to changes in the graph topology,  outperforming the LMMSE estimator under such changes. However, it may be inappropriate for processing {\em{improper}} complex-valued graph signals. 
Our goal in this paper is  to 
efficiently exploit the advantages provided by the GSP framework to
obtain a low-complexity estimator 
%for the general nonlinear case t
that has optimal \ac{mse} performance for 
the recovery of improper  {\em{complex-valued}}  graph signals.
This is illustrated in the diagram of the relationships between the different classes of estimators in  Fig. \ref{diagram}. Thus, we aim to fill the gap and derive GSP-widely-linear estimators that display better MSE performance than the GSP-linear estimators, but with the advantages of GSP (robustness, low computational complexity, efficient distributed implementation, accuracy under mismatches and estimated parameters, etc. as outlined in \cite{kroizer2021bayesian}) that do not exist in the class of conventional linear/widely-linear estimators. 
\begin{figure}
\centerline{\includegraphics[width=0.5\columnwidth]{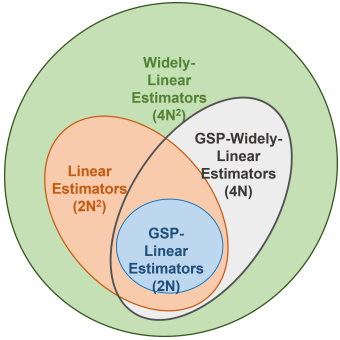}}
 	\caption{The relationships between the classes of estimators: linear, widely-linear, GSP-linear, and GSP-widely-linear estimators. The numbers in the brackets ($4 N^2$, $4N$, $2 N^2$, and $2N$) are the degrees of freedom, i.e. the number of real parameters that we need to set in order to find the estimators for the recovery of  an $N$-dimensional graph signal.
}
 	\label{diagram}
  \vspace{-0.5cm}
\end{figure}

 In this paper, we describe a Bayesian framework for the widely-linear estimation  of complex-valued graph signals by using  GSP filters.  Thus, the proposed framework takes advantage of the power of both  complex-domain nonlinear processing and GSP theory.
 In particular, we develop the GSP-\ac{wlmmse} estimator that achieves the minimum \ac{mse} within the class of widely-linear  estimators that are the output of graph filters.
 %%%
 We show that using widely-linear graph filtering instead of strictly linear graph filtering  can yield significant \ac{mse} improvements in
the estimation of graph signals with complex data.
We discuss more aspects of the proposed GSP-WLMMSE estimator, including the orthogonality principle, computational complexity, relation with existing estimators, and its behavior for the special cases of proper signals and data, maximal improper observation vectors, and real-valued unknown graph signals. 
%(Subsection \ref{real_sub}).
 In simulations, it is shown that the GSP-\ac{wlmmse} estimator has a significantly lower \ac{mse} than that of the LMMSE estimator for a synthetic example and for  the problem of state estimation in power systems. 

 \textit{Organization:} The rest of this paper is organized as follows. In Section \ref{background_sec} we introduce the basics of GSP and widely-linear estimation required for this paper.
In Section \ref{new_estimator_sec}, we formulate the estimation problem and develop the proposed GSP-WLMMSE estimator for improper graph signals and discuss its properties. 
Some special cases are presented in Section \ref{special_cases_subsec}.
Simulations are shown in Section \ref{simulation_sec}.
Finally, the paper is concluded in Section \ref{conc}.

 \textit{Notation:} We use boldface lowercase letters to denote vectors and boldface capital letters for matrices. The identity matrix is denoted by $\Imat$
 %a vector of ones of length $N$ is denoted by $\onevec_N$, 
 and  $||\cdot||$ denotes the
Euclidean $l_2$-norm of vectors. The symbols $(\cdot)^*$,$(\cdot)^T$, and $(\cdot)^H$ represent the conjugate, transpose, and conjugate transpose operators, respectively. The symbol $(\cdot)^{-*}$ denotes the  conjugate of the inverse matrix.
 We use $ \tr(\Amat)$ to denote the trace of the  matrix $\Amat$.
 For a vector $\avec$, ${\text{diag}}(\avec)$ is a diagonal matrix whose $i$th diagonal entry is $a_i$; when applied to a matrix, ${\text{diag}}(\Amat)$ is a vector collecting the diagonal elements of $\Amat$. In addition, ${\text{ddiag}}(\Amat)={\text{diag}}({\text{diag}}(\Amat))$.
 The notations $\Re\{\cdot\}$ and $\Im\{\cdot\}$ denote the real and imaginary parts of their argument.
Two zero-mean random vectors $\a,\b\in \mathbb{C}^{N}$ are defined as  orthogonal, $\a \perp \b$, if $\EX[\avec^H \bvec]=0$.
For two zero-mean complex-valued vectors $\avec$ and $\bvec$,
we denote the
cross-covariance and the complementary cross-covariance  matrices  between them  as
\begin{equation}
\label{one}
\Gammamat_{\avec\bvec}=\EX[\avec \bvec^H]~~~{\text{and}}~~~
\C_{\avec\bvec} = \EX[\avec\bvec^T],
\end{equation}
respectively.  The complementary cross-covariance matrix, $\C_{\avec\bvec}$, is 
 also known as the pseudo or conjugate covariance.
 %or relation matrix.
%------------------ background
\section{Background}
\label{background_sec}
In this section, we briefly present  background on  \ac{wlmmse} estimation (Subsection \ref{WLMMSE_sub}),
the 
%required
GSP notations  (Subsection \ref{GSP_subsec}), and the GSP-LMMSE estimator (Subsection \ref{GSP_LMMSE_subsec}).
%%%%%%%%%%%%%%%%%%%%%%%%%%%%%%%
\subsection{WLMMSE Estimation}
\label{WLMMSE_sub}
Consider the problem of estimating an $N$-dimensional zero-mean  signal $\x\in \mathbb{C}^{N} $ from an $N$-dimensional zero-mean measurement $\y\in \mathbb{C}^{N} $.
 In order for an estimator to utilize all the available second-order information for proper and improper signals, the estimators should depend on both the signals  and their complex conjugates, i.e. on
the augmented vector $
    \underline{\y} =
    \left[
        \y^T~
        \y^H
    \right]^T$. 
It can be seen that the augmented auto-covariance matrix satisfies
\begin{equation}
\label{augmented_cov}
   {\Gammamat}_{ \underline{\y}  \underline{\y}} = \EX[\underline{\y}\,\underline{\y}^H] 
   = 
    \left[
        \begin{array}{cc}
           \Gammamat_{\y\y}   &  \C_{\y\y}\\
           \C_{\y\y}^*  &  \Gammamat_{\y\y}^*
        \end{array}
    \right]
    =
    {\Gammamat}_{\underline{\y}\underline{\y}}^H,
\end{equation}
where the cross-covariance  and the complementary cross-covariance  matrices  are $\Gammamat_{\y\y}=\EX[\y \y^H]$ and  $ \C_{\y\y}= \EX[\y\y^T]$, as defined in \eqref{one}.
If the complementary covariance matrix vanishes, i.e. $ \C_{\y\y} = \ve{0}$, the vector $\y$ is called a proper vector, while when $ \C_{\y\y}  \neq \ve{0}$ it is called an improper vector.  We assume in this paper that $\Gammamat_{\y\y}$ is a non-singular matrix.

The  LMMSE estimator of the signal $\x$ based on the measurement vector $\y$, for zero-mean vectors $\x$ and $\y$, is \cite{Kayestimation}
\begin{equation}
\label{LMMSE}
   \hat\x^{\text{LMMSE}}\define \Gammamat_{\x\y}\Gammamat_{\y\y}^{-1}\y.
\end{equation}
In addition, 
the \ac{mse}  of the LMMSE estimator from \eqref{LMMSE} is
\begin{equation}
\label{MSE_LMMSE}
    \varepsilon_{L}^2
    \define
    \EX[\|\hat{\x}^{\text{LMMSE}} - \x\|^2]  
    =
    \tr
    (
    \Gammamat_{\x\x} 
    - 
    \Gammamat_{\x\y}\Gammamat_{\y\y}^{-1}\Gammamat_{\x\y}^H
    ).
    \end{equation} 
For improper random variables, the widely-linear estimators can be used to achieve better performance than that of the LMMSE estimator. The widely-linear transformation provides a convenient representation, which enables the incorporation of full second-order statistical information into the estimation scheme and provides significant advantages when the signal is improper \cite{schreier2010statistical,Schreier_Scharf_2003,mandic2009complex,Picinbono_Chevalier1995}.
A widely-linear estimator of the signal $\x$ based on the measurement vector $\y$ has the form
\begin{equation}
\label{general_form_widely_lin}
    \hat{\x} = \H_1 \y + \H_2 \y^*. 
\end{equation}
In order to find the WLMMSE estimator, 
the filters $\H_1 ,\H_2 \in \mathbb{C}^{N \times N}$ are determined by minimizing the \ac{mse} as follows:
\begin{eqnarray}
\label{H1H2minimization}
    \underset{\H_1,\H_2\in \mathbb{C}^{N \times N}}\min\,
    \EX[\|\hat{\x} - \x\|^2] \hspace{3cm}\nonumber\\
    = \underset{\H_1,\H_2\in \mathbb{C}^{N \times N}}\min\,
    \EX[\| \H_1 \y + \H_2 \y^*  - \x\|^2].
\end{eqnarray}
The solution of  \eqref{H1H2minimization} is given by \cite{Picinbono_Chevalier1995}
\begin{eqnarray}
\label{H1_optimal_WLMMSE}
    \hat\H_1 &=& (\Gammamat_{\x\y} - {\C}_{\x\y} (\Gammamat_{\y\y}^{-1})^* ({\C}_{\y\y})^*) \P_{\y\y}^{-1} \\
    \label{H2_optimal_WLMMSE}
     \hat\H_2 &=&
    (\C_{\x\y} - \Gammamat_{\x\y} \Gammamat_{\y\y}^{-1} {\C}_{\y\y}) (\P_{\y\y}^{-1})^*,
\end{eqnarray}
where the Schur complement of the matrix $  {\Gammamat}_{ \underline{\y} \underline{\y}}$, which is the error covariance matrix for linearly estimating $\y$ from $\y^*$, is 
\begin{equation}
\label{Pyy}
    \P_{\y\y} \define \Gammamat_{\y\y} -  \C_{\y\y} (\Gammamat_{\y\y}^{-1})^* ({\C}_{\y\y})^*.
\end{equation}
By substituting \eqref{H1_optimal_WLMMSE} and \eqref{H2_optimal_WLMMSE} in \eqref{general_form_widely_lin}, we obtain that the WLMMSE estimator is
\beqna
\label{WLMSE_est}
    \hat\x^{\text{WLMMSE}} &=& (\Gammamat_{\x\y}-{\C}_{\x\y}(\Gammamat_{\y\y}^{-1})^*\C_{\y\y}^*)\P_{\y\y}^{-1}\y \nonumber\\&&
    + 
    (\C_{\x\y}-\Gammamat_{\x\y}\Gammamat_{\y\y}^{-1}{\C}_{\y\y})(\P_{\y\y}^{-1})^*\y^*. 
\eeqna
By using the fact that 
 $\EX[\y \y^H]= \Gammamat_{\y\y}$ and  $  \EX[\y\y^T]=\C_{\y\y}$, it can be verified that 
the corresponding \ac{mse} of   the WLMMSE estimator is \cite{Picinbono_Chevalier1995}
% \begin{eqnarray}
% \label{MSE_WLMSE}
%     \varepsilon_{WL}^2
%     &=&  \EX[\|\hat{\x}^{\text{WLMMSE}} - \x\|^2] =
%     \EX[\|\x\|^2] - \EX[\|\hat\H_1 \y + \hat\H_2\y^*\|^2]  \nonumber\\
%     &=&
%     \tr
%     \bigg(
%     \Gammamat_{\x\x} 
%     - 
%     \hat\H_1\Gammamat_{\y\y}\hat\H_1^H + \hat\H_2\Gammamat_{\y\y}^*\hat\H_2^H \nonumber\\
%     &&
%     \qquad \qquad
%     -
%     \hat\H_1 \C_{\y\y}\H_2^H
%     -
%     \hat\H_2 \C_{\y\y}^*\H_2^H
%     \bigg).
% \end{eqnarray}
\begin{eqnarray}
\label{MSE_WLMSE}
    \varepsilon_{WL}^2
    &\define&  \EX[\|\hat{\x}^{\text{WLMMSE}} - \x\|^2]
    %=
  %  \EX[\|\x\|^2] - \EX[\|\hat\H_1 \y + \hat\H_2\y^*\|^2]  
  \nonumber\\
    &=&
    \tr
    \bigg(
    \Gammamat_{\x\x} 
    - 
    \bigg[
    \begin{array}{cc}
         \hat\H_1  &  
         \hat\H_2 
    \end{array}
    \bigg] 
    {\Gammamat}_{\underline\y\underline\y}
    \bigg[
    \begin{array}{c}
         \hat\H_1^H  \\
         \hat\H_2^H 
    \end{array}
    \bigg]
    \bigg)
    ,
\end{eqnarray} 
where ${\Gammamat}_{\underline\y\underline\y}$ is the augmented covariance matrix from \eqref{augmented_cov}.

For the special case where $\y$ is a proper random vector (i.e. $\C_{\y\y} = \ve{0}$), and  $\y$ and $\x$ are jointly proper random vectors (i.e. $ \C_{\x\y} = {\C}_{\x\y}=\ve{0}$), it can be shown that \eqref{H1_optimal_WLMMSE}, \eqref{H2_optimal_WLMMSE}, and \eqref{Pyy} are reduced to
 $\hat{\H}_1=\Gammamat_{\x\y}\Gammamat_{\y\y}^{-1}$,  $\hat{\H}_2=\ve{0}$, and $\P_{\y\y} = \Gammamat_{\y\y}$, respectively.
By substituting these values in \eqref{WLMSE_est}, we obtain that, in this proper case, the WLMMSE estimator from \eqref{WLMSE_est} is reduced to  the  LMMSE estimator from \eqref{LMMSE}.
It is shown in \cite{Picinbono_Chevalier1995} that the difference between the \ac{mse} of the LMMSE estimator and the \ac{mse} of the WLMMSE estimator is given by 
\begin{equation} \label{eq:diff_MSE}
    %\Delta \varepsilon^2 = 
    \varepsilon_{L}^2 - \varepsilon_{WL}^2 = 
    \tr(\hat\H_2 \P_{\y\y}^* \hat\H_2^H),
\end{equation}
where $\varepsilon_{L}^2$ and $ \varepsilon_{WL}^2$ are defined in \eqref{MSE_LMMSE} and \eqref{MSE_WLMSE}, respectively.
It can be shown that the r.h.s. of \eqref{eq:diff_MSE} is always nonnegative as the Schur complement matrix, $\P_{\y\y}$, is a positive-definite matrix. Thus, there is always a gain in performing WLMMSE estimation over LMMSE estimation for improper random vectors \cite{Picinbono_Chevalier1995}. 
%Intuitively, the result in \eqref{eq:diff_mse}  means that as the signal becomes less proper and more improper the norm of the matrix $\hat\H_2$ increases, and there it is more beneficial to use the WLMMSE estimator.     
The LMMSE estimator 
%for a proper random vector 
can be interpreted as  ``one channel processing" operating on $\y$ only. The WLMMSE estimator is ``two channel processing" operating on $\y$ via $\H_1$ and, in parallel, on $\y_2$ via $\H_2$ \cite{JAHANCHAHI201433}.
In this paper, we consider ``two channel processing" in the graph frequency domain.

%------------------------------------
\subsection{Graph Signal Processing (GSP)}
\label{GSP_subsec}
Consider an undirected, connected, weighted graph $\mathcal{G}(\mathcal{V},\mathcal{E},\W)$, where $\mathcal{V}$ is the set of vertices and $\mathcal{E}$ is the set of edges, and the $N\times N$ real-valued matrix $\W$ is the nonnegative weighted adjacency matrix of the graph, where $N=|\mathcal{V}|$ is the number of vertices. The entry $\W_{i,j} \neq 0$ if there is an edge between node $i$ and node $j$; otherwise $\W_{i,j} = 0$. The Laplacian matrix of the graph is defined as
\begin{equation}
    \L = \diag(\W \ve{1}) - \W.
\end{equation}
%where $\D = \diag(\W \ve{1})$ is an $N\times N$ real-valued diagonal degree matrix.
Since the Laplacian matrix is symmetric and semi-positive definite then  its eigenvalue decomposition is given by
\begin{equation}
    \L = \V {\text{diag}}(\ve{\lambda}) \V^{-1},
\end{equation}
where $\ve\lambda
%= [\lambda_1,\lambda_2,\ldots,\lambda_N]^T
$ is a vector with the ordered eigenvalues of $\L$,  $0 = \lambda_1\leq \lambda_2\ldots\leq\lambda_N$, 
$\V$ is a unitary matrix, where its $n$th column, $\v_n$,  is the  \textit{real-valued} eigenvector
associated with the $n$th eigenvalue, $\lambda_n$. A graph signal is an $N$-dimensional vector $\y$ that assigns a scalar value to each vertex in the graph,  $y_n$, $n=1,2,\ldots,N$. The GFT of the graph signal $\y$ is defined as
\begin{equation}
\label{GFT_def}
    \bar\y = \V^T\y,
\end{equation}
and the inverse GFT is given by $\y = \V\bar\y$. 

Linear and shift-invariant graph filters with respect to graph shift operators (GSOs) generalize linear time-invariant filters used in time series. Let the GSO be the Laplacian matrix $\L$. A graph filter applied to the GSO is  a function $\f(\cdot)$, where
\begin{equation}
\label{graph_filter_def}
    \f(\L) = \V {\text{diag}}(\f(\ve\lambda))\V^T,
\end{equation}
in which $\f(\ve\lambda) = (f(\lambda_1),f(\lambda_2),\ldots,f(\lambda_N))^T$and $f(\lambda_n)\in{\mathbb{C}}$ is the  filter graph frequency response  at graph frequency $\lambda_n$. 

%-------------------------------------------------------

%%%%
\begin{figure} 
    \begin{minipage}{.35\textwidth}
    \begin{tikzpicture}[node distance=2cm,thick] 
    \matrix[column sep = .95cm, row sep = .1cm]
    {   & & & & \\
    \node (a1){$\y$}; & \node [block](A2){$\V^T$}; & \node [block](A3){${\text{diag}}(\f(\ve\lambda))$}; &   \node [block](B8){$\V$}; & \node (b9){$\hat {\x}$};\\ 
    }; 
    \draw [->] (a1) to (A2);
    \draw [-] (A2) --node [above]{$\bar{\y}$}  (A3);
    \draw [-] (A3) --node [above]{$\hat{\bar{\x}}$} (B8) ; 
    \draw [->] (B8) to (b9) ;  
    \end{tikzpicture}
        \vspace{-0.5cm}
    \caption{GSP linear estimator \eqref{GSPlinear}}
    \label{fig:GLMMSE proper}
    \end{minipage}
    \begin{minipage}{.4\textwidth}
    \begin{tikzpicture}[auto, node distance=2cm,thick] 
    \matrix[column sep = .5cm, row sep = .1cm]
    {   & & & & & &  \\
    \node (a1){$\y$}; & \node [block](A2){$\V^T$}; & \node [block](A3){${\text{diag}}(\f_1(\ve\lambda))$}; &  \node [coordinate](a4){}; & &  &  \\
     & & &  \node [sum](add1){\suma}; &  & \node [block](B8){$\V$}; & \node (b9){$\hat {\x}$};\\
    \node (c1){$\y^*$}; & 	\node [block](C2){$\V^T$}; & \node [block](C3){${\text{diag}}(\f_2(\ve\lambda))$}; &   \node [coordinate](c4){};  & & &  \\ 
    }; 
    \draw [->] (a1) to (A2);
    \draw [-] (A2) --node [above]{$\bar{\y}$}  (A3);
    \draw [-] (A3) to (a4) ;
    \draw [->] (a4) to (add1); 
    \draw [->] (c1) to (C2);
    \draw [-] (C2) --node [above]{$\bar{\y}^*$} (C3);    
    \draw [-] (C3) to (c4);
    \draw [->] (c4) to (add1) ;
    \draw [-] (add1) --node [above]{$\hat{\bar{\x}}$}  (B8);
    \draw [->] (B8) to (b9) ; 
    \end{tikzpicture}
    \vspace{-0.5cm}
    \caption{Widely-linear GSP estimator \eqref{WLMMSE_general_form}}
    \label{fig:GLMMSE improper}
    \end{minipage}  
    %\caption{Graph WLMMSE for proper random vectors.}
    \label{fig:LMMSE and WLMMSE for graph signals}
        \vspace{-0.5cm}
\end{figure}

%-------------------------------------------------------
\subsection{GSP-LMMSE Estimator}
\label{GSP_LMMSE_subsec}
The GSP-LMMSE estimator for graph signals was proposed in \cite{kroizer2021bayesian} for  real-valued random vectors,  and is based on diagonal covariance matrices. Similar to the derivations in \cite{kroizer2021bayesian}, it can be shown that for zero-mean complex-valued random vectors $\x$ and $\y$, the  GSP-LMMSE estimator is given by
 \begin{equation}
\label{opt_LMMSE_GSP1}
    \hat{\x}^{(\text{GSP-LMMSE})}  =   \Vmat \D_{\bar{\x}\bar{\y}}^\Gammamat(\D_{\bar{\y}\bar{\y}}^\Gammamat)^{-1}\bar\y,
\end{equation}
where
\begin{eqnarray} \label{d_def1}
	\Dmat_{\bar{\x}\bar{\y}}^\Gammamat \define  {\text{ddiag}}(\Gammamat_{\bar{\x}\bar{\y}}),~~~
	\Dmat_{\bar{\y}\bar{\y}}^\Gammamat \define 
	 \text{ddiag}(\Gammamat_{\bar{\y}\bar{\y}} ).
\end{eqnarray}
It should be noted that $\Dmat_{\bar{\y}\bar{\y}}^\Gammamat$ is a real diagonal matrix. Thus, in the following, we will use $(\Dmat_{\bar{\y}\bar{\y}}^\Gammamat)^H=\Dmat_{\bar{\y}\bar{\y}}^\Gammamat$.
\begin{Claim}
\label{new_claim}
The  GSP-LMMSE estimator in \eqref{opt_LMMSE_GSP1} minimizes the \ac{mse}, $\EX[\|\hat{\x} - \x\|^2]$, among linear estimators that are the output of a graph filter, i.e. estimators of the form
\be
\label{GSPlinear}
 \hat{\x}= \V {\text{diag}}(\f(\ve\lambda)) \V^T \y,
\ee
where $\f(\cdot)$ is a graph filter.
\end{Claim}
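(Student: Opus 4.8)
The plan is to exploit the fact that any estimator of the form \eqref{GSPlinear} acts as a \emph{diagonal} operator in the graph frequency domain, so that the constrained \ac{mse} minimization decouples across graph frequencies into $N$ independent scalar problems. I would first parameterize the admissible estimators by their frequency response: setting $\g \define \f(\ve\lambda) \in \mathbb{C}^N$, the constraint \eqref{GSPlinear} reads $\hat{\x} = \V \diag(\g)\V^T\y = \V \diag(\g)\bar{\y}$, where $\bar{\y} = \V^T\y$ is the GFT from \eqref{GFT_def}. The only free quantity is then the vector $\g$, with no structural constraint on its entries.

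Next I would move the cost into the graph frequency domain. Because $\V$ is real and orthogonal ($\V^T\V = \I$), it preserves the Euclidean norm, so with $\bar{\x} \define \V^T\x$ one obtains
\[
\EX[\|\hat{\x} - \x\|^2] = \EX[\|\diag(\g)\bar{\y} - \bar{\x}\|^2] = \sum_{n=1}^{N}\EX\big[\,|g_n \bar{y}_n - \bar{x}_n|^2\,\big].
\]
This separation is the conceptual core of the claim: the diagonal action of $\diag(\g)$ makes the $n$th summand depend on $g_n$ alone, so the joint minimization reduces to $N$ decoupled scalar complex LMMSE problems.

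I would then solve each scalar problem by completing the square in $g_n$,
\[
\EX\big[\,|g_n\bar{y}_n - \bar{x}_n|^2\,\big] = \EX[|\bar{y}_n|^2]\,\Big| g_n - \tfrac{\EX[\bar{x}_n \bar{y}_n^*]}{\EX[|\bar{y}_n|^2]}\Big|^2 + \text{const},
\]
whose minimizer is $g_n = \EX[\bar{x}_n \bar{y}_n^*]/\EX[|\bar{y}_n|^2] = [\Gammamat_{\bar{\x}\bar{\y}}]_{nn}/[\Gammamat_{\bar{\y}\bar{\y}}]_{nn}$. Collecting these entries, the optimal diagonal gain is exactly $\diag(\g) = \D_{\bar{\x}\bar{\y}}^\Gammamat(\D_{\bar{\y}\bar{\y}}^\Gammamat)^{-1}$ by the definitions in \eqref{d_def1}, so that $\hat{\x} = \V \D_{\bar{\x}\bar{\y}}^\Gammamat(\D_{\bar{\y}\bar{\y}}^\Gammamat)^{-1}\bar{\y}$, which is precisely \eqref{opt_LMMSE_GSP1}.

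Rather than a deep obstacle, the step requiring care is recognizing \emph{why} the graph-filter constraint is exactly what yields a diagonal operator in the GFT domain, and hence the decoupling; a general linear estimator would not separate and would instead return the full LMMSE of \eqref{LMMSE}. I would also verify the minor well-posedness points: the inverse $(\D_{\bar{\y}\bar{\y}}^\Gammamat)^{-1}$ exists because each $[\Gammamat_{\bar{\y}\bar{\y}}]_{nn} = \EX[|\bar{y}_n|^2] > 0$, which follows from $\Gammamat_{\bar{\y}\bar{\y}} = \V^T\Gammamat_{\y\y}\V$ being positive definite under the assumed nonsingularity of $\Gammamat_{\y\y}$, and that each $[\Gammamat_{\bar{\y}\bar{\y}}]_{nn}$ is real, consistent with the form used in \eqref{opt_LMMSE_GSP1}.
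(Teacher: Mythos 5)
Your proof is correct and follows essentially the same route as the paper's Appendix A: pass to the graph frequency domain via the orthogonality of $\V$, observe that the constraint makes the estimator a diagonal gain on $\bar{\y}$, and minimize over that gain. The only cosmetic difference is that you decouple into $N$ scalar problems and complete the square, whereas the paper keeps the vectorized quadratic form and sets its Wirtinger derivative with respect to $\f^H$ to zero; both yield the same normal equation and the same well-posedness condition on $\D_{\bar{\y}\bar{\y}}^{\Gammamat}$.
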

\begin{IEEEproof}
    The proof appears in Appendix \ref{appA_GSP_LMMSE}. The general form of the estimators in \eqref{GSPlinear} is presented in Fig. \ref{fig:GLMMSE proper}. It can be seen that for the case where $\x$ and $\y$ are real-valued signals, we obtain the real-value GSP-LMMSE estimators from \cite{kroizer2021bayesian}.
\end{IEEEproof}

The GSP-LMMSE estimator from \eqref{opt_LMMSE_GSP1} has various advantages that make it attractive for practical applications  with real-valued signals. 
First, it is based on the {\em{diagonal}} of the covariance matrices of the  signals in the graph frequency domain $\bar{\x}$ and $\bar{\y}$, $\D_{\bar{\x}\bar{\y}}^\Gammamat$ and $\D_{\bar{\y}\bar{\y}}^\Gammamat$, that are defined in \eqref{d_def1}.
Thus, it  has advantages from a computational point of view, compared with the LMMSE estimator in \eqref{LMMSE} that uses the full covariance matrices of $\x$ and $\y$. In particular, it requires inverting a diagonal matrix, $\D_{\bar{\y}\bar{\y}}^\Gammamat$, compared with the LMMSE that requires inverting a full $N\times N$ matrix.
While the computational complexity is lower, the simulations shown in \cite{kroizer2021bayesian} revealed that the MSE performance of the GSP-LMMSE estimator is close to  that of the LMMSE estimator.
Moreover, when the estimators are implemented via data-based learning of the second-order statistics, the GSP-LMMSE estimator can be implemented even with a limited training dataset, where the LMMSE estimator is unstable in this regime. Furthermore, it is shown in \cite{kroizer2021bayesian} that the GSP-LMMSE estimator can be easily updated when the topology of the graph changes without generating a new dataset. Finally, simulations in \cite{kroizer2021bayesian} show that the implementations of the GSP-LMMSE by parametric graph filters are significantly more robust to outliers and to network topology changes in the form of adding/removing vertices/edges.
%The GSP-LMMSE estimator is, in fact, the diagonal-LMMSE estimator \cite{dd} in the {\em{graph frequency domain}}. This estimator minimizes the \ac{mse} among the linear estimators of $\bar{\x}$, where the estimation matrix that multiplies $\bar{\y}$ is restricted to be a diagonal matrix.
%The advantages of the  GSP-LMMSE estimator and its robustness to topology changes  were discussed in \cite{kroizer2021bayesian}.
However, the GSP-LMMSE estimator was developed for real-valued signals, and thus does not utilize  the theory of augmented complex statistics \cite{schreier2010statistical,mandic2009complex}.
In the following, we propose GSP-based widely-linear estimation.

%The main drawback of the conventional LMMSE estimator  is its computation burden due to the need to invert large matrices. Instead, the idea of using diagonal matrices as filters can be used to reduce the computation load.    As an alternative, 

\section{WLMMSE for Graph Signals}
\label{new_estimator_sec}
In this section, the concept of widely-linear estimation is  applied to the case of improper graph signals.  In Subsection \ref{GSP_WLMMSE_subsec}, we develop the GSP-WLMMSE estimator for improper graph signals.  In Subsection  \ref{discussion_subsec}, we discuss   the performance and the computational complexities of the estimators. The practical implementation of the estimator with data-based sample covariance matrices is described in Subsection \ref{sample_version}.

\subsection{GSP-WLMMSE Estimator}
\label{GSP_WLMMSE_subsec}
In this subsection, we develop the GSP-WLMMSE estimator for a general improper complex-valued random vector.
The idea of the proposed GSP-WLMMSE estimator is to generalize the  ``one-channel processing" of the GSP-LMMSE estimator operating on $\y$ from Subsection \ref{GSP_LMMSE_subsec} to  ``two channel processing" operating on both $\y$ and $\y^*$, where these two channels are in the form of an output of two graph filters.

To this end, we consider widely-linear GSP estimators of $\x$ based on both $\y$ and $\y^*$ that have the form (see Fig. \ref{fig:GLMMSE improper})
\begin{eqnarray}
\label{WLMMSE_general_form}
    \hat{\x} &=&  \V {\text{diag}}(\f_1(\ve\lambda)) \V^T \y + \V \diag(\f_2(\ve\lambda)) \V^T \y^* .
    %\nonumber \\
   % &=&
    %\V {\text{diag}}(\f_1(\ve\lambda)) \bar\y + \V {\text{diag}}(\f_2(\ve\lambda)) \bar{\y}^*,
\end{eqnarray}
We also define the matrices
\begin{eqnarray} \label{d_def2}
	 	\Dmat_{\bar{\x}\bar{\y}}^\C \define  {\text{ddiag}}(\C_{\bar{\x}\bar{\y}}),~~~
	\Dmat_{\bar{\y}\bar{\y}}^\C \define 
	 \text{ddiag}(\C_{\bar{\y}\bar{\y}} ),
\end{eqnarray}
where $\bar\x$ and $\bar\y$ 
are the GFTs of $\x$ and $\y$, respectively.
It should be noted that the description of the diagonal covariance matrices in \eqref{d_def1} and \eqref{d_def2} is for the sake of notation simplicity. In practice, there is no need to compute the full covariance matrices and then take the diagonal, and only the diagonal elements should be computed. For example, in order to compute $\Dmat_{\bar{\x}\bar{\y}}^\C$, one only needs to compute $N$ elements: $\EX[\bar{x}_n\bar{y}_n]$, $n=1,\ldots,N$.
We also define 
\begin{equation}
\label{rho_def}
    \rho_n \define \frac{| [\C_{\bar\y\bar\y}]_{n,n}|^2}{| [\Gammamat_{\bar\y\bar\y}]_{n,n}|^2}=\frac{| [\D_{\bar\y\bar\y}^\C]_{n,n}|^2}{| [\D_{\bar\y\bar\y}^\Gammamat]_{n,n}|^2}, ~n=1,\ldots,N.
    \end{equation}
It can be seen that the parameter $\rho_n $ is a measure of the impropriety of the random variable $\bar{y}_n$. In particular, if $\bar{\y}$ is a proper vector then $\rho_n=0$, $n=1,\ldots,N$.

The following theorem describes the GSP-WLMMSE estimator for a general improper model.
\begin{Theorem} \label{Theorem1}
Let us assume that $\C_{\y\y}\neq \zerovec$ and $  \rho_n \neq 1$, $n=1,\ldots,N$.
Then,
    the GSP-WLMMSE estimator, which is the estimator that achieves the minimum \ac{mse} within the class of widely-linear GSP estimators with the form  \eqref{WLMMSE_general_form}, for zero-mean complex graph signals,  is  given by 
\beqna
\label{opt_WLMMSE_GSP}
    \hat{\x}^{(\text{GSP-WLMMSE})}  \hspace{5.75cm}
    \nonumber\\
    =
     \V   (\R\Dmat_{\bar{\x}\bar{\y}}^\Gammamat (\Dmat_{\bar{\y}\bar{\y}}^\Gammamat)^{-1}
    +(\I-\R){\Dmat}_{\bar{\y}\bar{\x}}^\C ({\Dmat}_{\bar{\y}\bar{\y}}^\C)^{-1}
  )\V^T \y \hspace{0.75cm}\nonumber\\+ \V (\R{\Dmat}_{\bar{\y}\bar{\x}}^\C(\Dmat_{\bar{\y}\bar{\y}}^\Gammamat)^{-1}
    +(\I-\R){\Dmat}_{\bar{\x}\bar{\y}}^\Gammamat(({\Dmat}_{\bar{\y}\bar{\y}}^\C)^{-1})^*
  ) \V^T \y^*,
\eeqna
where the diagonal coefficient matrix is
\begin{equation}
\label{R1_def}
\R\define {\text{diag}}\left(\left[\frac{1}{1 - \rho_1},\frac{1}{1 - \rho_2},\ldots,\frac{1}{1 - \rho_N}\right]\right).
\end{equation}
\end{Theorem}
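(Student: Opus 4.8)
The plan is to diagonalize the problem in the graph frequency domain and reduce it to $N$ independent scalar WLMMSE problems. Since $\V$ is real and orthogonal ($\V^T\V=\I$), left-multiplying the estimator form \eqref{WLMMSE_general_form} by $\V^T$ and using $\V^T\y^*=(\V^T\y)^*=\bar\y^*$ gives the graph-frequency representation $\hat{\bar\x}=\diag(\f_1(\ve\lambda))\bar\y+\diag(\f_2(\ve\lambda))\bar\y^*$, i.e. $\hat{\bar x}_n=f_1(\lambda_n)\bar y_n+f_2(\lambda_n)\bar y_n^*$ for each $n$. Orthogonality of $\V$ also preserves the Euclidean norm, so $\EX[\|\hat\x-\x\|^2]=\EX[\|\hat{\bar\x}-\bar\x\|^2]=\sum_{n=1}^N\EX[|f_1(\lambda_n)\bar y_n+f_2(\lambda_n)\bar y_n^*-\bar x_n|^2]$. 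The crucial observation is that the $n$th summand depends only on the pair $(f_1(\lambda_n),f_2(\lambda_n))$, so, treating the filter taps as free per-frequency parameters (as in Claim \ref{new_claim}), the minimization over the graph filters decouples into $N$ separate minimizations, one per graph frequency.

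Each per-frequency problem is exactly the scalar version of the widely-linear minimization \eqref{H1H2minimization}: estimate the scalar $\bar x_n$ from $\bar y_n$ and $\bar y_n^*$. I would therefore specialize the optimal filters \eqref{H1_optimal_WLMMSE}--\eqref{H2_optimal_WLMMSE} and the Schur complement \eqref{Pyy} to scalars, substituting $\Gammamat_{\x\y}\mapsto[\Gammamat_{\bar\x\bar\y}]_{n,n}$, $\C_{\x\y}\mapsto[\C_{\bar\x\bar\y}]_{n,n}$, $\Gammamat_{\y\y}\mapsto[\Gammamat_{\bar\y\bar\y}]_{n,n}$, and $\C_{\y\y}\mapsto[\C_{\bar\y\bar\y}]_{n,n}$. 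Two facts clean up the conjugates here: the diagonal entry $[\Gammamat_{\bar\y\bar\y}]_{n,n}=\EX[|\bar y_n|^2]$ is real and positive, so the scalar Schur complement reduces to the real, nonzero quantity $[\Gammamat_{\bar\y\bar\y}]_{n,n}(1-\rho_n)$ (nonzero by the assumption $\rho_n\neq1$), with $\rho_n$ as in \eqref{rho_def}; and $[\C_{\bar\x\bar\y}]_{n,n}=\EX[\bar x_n\bar y_n]=[\C_{\bar\y\bar\x}]_{n,n}$, which lets me express the complementary terms through $\Dmat_{\bar\y\bar\x}^\C$ as they appear in \eqref{opt_WLMMSE_GSP}.

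The last step is to reassemble the $N$ scalar solutions. Collecting the optimal $f_1(\lambda_n)$ and $f_2(\lambda_n)$ into the diagonal matrices $\diag(\f_1(\ve\lambda))$ and $\diag(\f_2(\ve\lambda))$, and recognizing $\tfrac{1}{1-\rho_n}$ and $\tfrac{-\rho_n}{1-\rho_n}$ as the $n$th diagonal entries of $\R$ and $\I-\R$ from \eqref{R1_def}, the scalar coefficients should line up entrywise with the diagonal products $\R\Dmat_{\bar{\x}\bar{\y}}^\Gammamat(\Dmat_{\bar{\y}\bar{\y}}^\Gammamat)^{-1}+(\I-\R)\Dmat_{\bar{\y}\bar{\x}}^\C(\Dmat_{\bar{\y}\bar{\y}}^\C)^{-1}$ and $\R\Dmat_{\bar{\y}\bar{\x}}^\C(\Dmat_{\bar{\y}\bar{\y}}^\Gammamat)^{-1}+(\I-\R)\Dmat_{\bar{\x}\bar{\y}}^\Gammamat((\Dmat_{\bar{\y}\bar{\y}}^\C)^{-1})^*$, after which left-multiplying by $\V$ returns to the vertex domain. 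The main obstacle is precisely this final entrywise identification rather than any difficulty in the optimization: it requires carefully tracking the conjugations in \eqref{H2_optimal_WLMMSE} and \eqref{Pyy} and using $\rho_n=|[\C_{\bar\y\bar\y}]_{n,n}|^2/[\Gammamat_{\bar\y\bar\y}]_{n,n}^2$ to verify identities such as $\tfrac{-\rho_n}{1-\rho_n}\cdot\tfrac{1}{[\C_{\bar\y\bar\y}]_{n,n}}=\tfrac{1}{1-\rho_n}\cdot\tfrac{-[\C_{\bar\y\bar\y}]_{n,n}^*}{[\Gammamat_{\bar\y\bar\y}]_{n,n}^2}$, which is what makes the weighted combination of diagonal covariances collapse onto the scalar WLMMSE coefficients.
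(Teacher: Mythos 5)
Your proposal is correct and reaches the stated filters, but it is organized differently from the paper's proof. The paper (Appendix B) treats the problem as a single joint convex minimization over the pair $(\f_1,\f_2)\in\mathbb{C}^{2N}$: it expands the graph-frequency MSE \eqref{min_f1_f2} into the quadratic form \eqref{epsilon1}, sets the Wirtinger derivatives to zero to obtain the $2N\times 2N$ augmented normal equations \eqref{normal_equation} (whose blocks are the diagonal matrices $\D_{\bar\y\bar\y}^{\Gammamat}$ and $\D_{\bar\y\bar\y}^{\C}$), and then inverts that block system via the Schur complement $\D_{\bar\y\bar\y}^{\P}$ before writing the entries $\hat f_{1,n},\hat f_{2,n}$ and introducing $\rho_n$. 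You instead observe up front that the orthogonality of $\V$ makes the MSE a sum of $N$ per-frequency terms, each depending only on $(f_1(\lambda_n),f_2(\lambda_n))$, so the problem decouples into $N$ independent scalar widely-linear problems, each of which is solved by specializing the known Picinbono--Chevalier formulas \eqref{H1_optimal_WLMMSE}--\eqref{Pyy}. The two routes are mathematically equivalent --- the paper's $2N\times2N$ system is, after permutation, exactly your $N$ decoupled $2\times2$ scalar systems --- but yours makes the decoupling explicit and avoids re-deriving the augmented normal equations, at the price of having to justify that the taps are free per-frequency parameters (which you do, consistently with Claim \ref{new_claim}) and of importing the scalar result as a lemma. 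The final bookkeeping --- using $[\Gammamat_{\bar\y\bar\y}]_{n,n}>0$ real, $[\C_{\bar\x\bar\y}]_{n,n}=[\C_{\bar\y\bar\x}]_{n,n}$, the reduction of the scalar Schur complement to $[\Gammamat_{\bar\y\bar\y}]_{n,n}(1-\rho_n)$, and the identity $\tfrac{-\rho_n}{1-\rho_n}[\C_{\bar\y\bar\y}]_{n,n}^{-1}=\tfrac{1}{1-\rho_n}\bigl(-[\C_{\bar\y\bar\y}]_{n,n}^*[\Gammamat_{\bar\y\bar\y}]_{n,n}^{-2}\bigr)$ needed to pass from the Schur-complement form \eqref{opt_WLMMSE_GSP2} to the $\R$-weighted form \eqref{opt_WLMMSE_GSP} --- is the same in both arguments and you have identified it correctly.
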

\begin{IEEEproof}
    The proof appears in Appendix \ref{appA}.
\end{IEEEproof}
It is also shown in Appendix \ref{appA} that under the conditions of  Theorem \ref{Theorem1}, except for the condition that $\C_{\y\y}\neq \zerovec$,
the GSP-WLMMSE estimator from \eqref{opt_WLMMSE_GSP} can be written as
\begin{eqnarray}
\label{opt_WLMMSE_GSP2}
\begin{aligned}
    \hat{\x}^{(\text{GSP-WLMMSE})}  \hspace{5.75cm}
   \\
    =
    \V(
    \D_{\bar\x\bar\y}^{\Gammamat}  - 
   \D_{\bar\x\bar\y}^{\C}
     (\D_{\bar\y\bar\y}^{\Gammamat})^{-1} (\D_{\bar\y\bar\y}^{\C} )^* )
     ( \D_{\bar\y\bar\y}^\P )^{-1}\V^T \y \\
    + 
    \V
    (
    (\D_{\bar\y\bar\x}^{\C}
    -
    \D_{\bar\y\bar\y}^{\C}(\D_{\bar\y\bar\y}^{\Gammamat})^{-*}(\D_{\bar\y\bar\x}^{\Gammamat} )^*
    )
    ( \D_{\bar\y\bar\y}^\P )^{-1}
    \V^T \y^*,
\end{aligned}    
\end{eqnarray}
where the Schur complement $\D_{\bar\y\bar\y}^\P$ is defined 
as
\begin{equation}
\label{Schur_diag}
    \D_{\bar\y\bar\y}^\P\define
    \D_{\bar\y\bar\y}^{\Gammamat}  -   \D_{\bar\y\bar\y}^{\C} (\D_{\bar\y\bar\y}^{\Gammamat})^{-*}(\D_{\bar\y\bar\y}^{\C})^*.
\end{equation}
\begin{remark}
The Schur complement matrix $\D_{\bar\y\bar\y}^\P$ is a diagonal matrix with non-negative values on its diagonal.
Since we assume in this paper that $\Gammamat_{\y\y}$ is a non-singular matrix, 
$\D_{\bar\y\bar\y}^\P$ can also be written as 
\be
\label{ppp}
\D_{\bar\y\bar\y}^\P = \D_{\bar\y\bar\y}^{\Gammamat}(\I - (\D_{\bar\y\bar\y}^{\Gammamat})^{-1}\D_{\bar\y\bar\y}^{\C} (\D_{\bar\y\bar\y}^{\Gammamat})^{-*}(\D_{\bar\y\bar\y}^{\C})^*).
\ee
Observe that $(\D_{\bar\y\bar\y}^{\Gammamat})^{-1}\D_{\bar\y\bar\y}^{\C} (\D_{\bar\y\bar\y}^{\Gammamat})^{-*}(\D_{\bar\y\bar\y}^{\C})^*$ is a diagonal matrix with the $(n,n)$th entry on its main diagonal given as $\rho_n$, $n=1,2,\ldots,N$. Hence, the Schur complement matrix can be expressed as a product of diagonal matrices: $\D_{\bar\y\bar\y}^\P = \D_{\bar\y\bar\y}^{\Gammamat}\R^{-1}$.
%which also implies that $(\D_{\bar\y\bar\y}^\P)^{-1} = (\D_{\bar\y\bar\y}^{\Gammamat})^{-1}\R$.
\end{remark}
It is also shown in Appendix \ref{appA} that the optimal graph filters, i.e. the graph filters 
in \eqref{WLMMSE_general_form} that will lead to  the GSP-WLMMSE estimator from
\eqref{opt_WLMMSE_GSP2}, $\hat{\x}^{(\text{GSP-WLMMSE})} $, 
are
\begin{eqnarray}
\label{f1_paper}
    \hat{\f}_1 (\ve\lambda)=
    ( \D_{\bar\y\bar\y}^\P )^{-1}
    (
    (\d_{\bar\y\bar\x}^{\Gammamat} )^* - 
    (\D_{\bar\y\bar\y}^{\C} )^* (\D_{\bar\y\bar\y}^{\Gammamat})^{-1} \d_{\bar\y\bar\x}^{\C}
    )
   \eeqna
   and
   \beqna
   \label{f2_paper}
    \hat{\f}_2 (\ve\lambda)=
    ( \D_{\bar\y\bar\y}^\P )^{-1} 
    (\d_{\bar\y\bar\x}^{\C}
    -
    \D_{\bar\y\bar\y}^{\C}(\D_{\bar\y\bar\y}^{\Gammamat})^{-1}(\d_{\bar\y\bar\x}^{\Gammamat} )^*
    ),
\end{eqnarray}
where $\d_{\bar\y\bar\x}^{\C}\define
{\text{diag}}(\C_{\bar{\y}\bar{\x}})
$ and $\d_{\bar\y\bar\x}^{\C} \define
{\text{diag}}(\C_{\bar{\y}\bar{\x}})
$, and $  \D_{\bar\y\bar\y}^{\Gammamat}$ and  $\D_{\bar\y\bar\y}^{\C} $
are defined in \eqref{d_def1} and \eqref{d_def2}, respectively. 
That is, we can rewrite the GSP-WLMMSE from \eqref{opt_WLMMSE_GSP} as 
\beqna
\label{filtered_opt_WLMMSE_GSP}
    \hat{\x}^{(\text{GSP-WLMMSE})}\hspace{5cm}\nonumber\\=
    \V {\text{diag}}(\hat\f_1(\ve\lambda)) \V^T \y + \V \diag(\hat\f_2(\ve\lambda)) \V^T \y^*.
    \eeqna

The advantages of the form in \eqref{opt_WLMMSE_GSP2} compared with the form in \eqref{opt_WLMMSE_GSP} is that it also holds for $\C_{\y\y}= \zerovec$, and that its form is reminiscent of the classical WLMMSE estimator in \eqref{WLMSE_est}. The form in \eqref{opt_WLMMSE_GSP} enables tractable analysis of special cases and demonstrates the influence of  the impropriety parameter, $\rho_n $.

%%%%%%%%
\subsection{Discussion: Performance and Complexity}
\label{discussion_subsec}
\subsubsection{Order relation between the estimators}
\label{order_relation_subsection}
The class of widely-linear estimators is described in \eqref{general_form_widely_lin}. By setting $\H_2=\zerovec$ in \eqref{general_form_widely_lin}, we obtain the class of linear estimators. 
If we set $\H_2=\zerovec$ and restrict $\H_1$ to $\H_1=\V {\text{diag}}(\f(\ve\lambda)) \V^T$ in \eqref{general_form_widely_lin}, we obtain the general GSP-linear estimators in
\eqref{GSPlinear}.
Thus, the GSP-linear estimators are included in the linear estimators, which are included as a subset of widely-linear estimators. 
Similarly,  the set of widely-linear GSP estimators in \eqref{WLMMSE_general_form} includes the set of GSP-linear estimators in \eqref{GSPlinear}, since
\eqref{GSPlinear} can be obtained from \eqref{WLMMSE_general_form} by setting 
$\f_2(\ve\lambda)=\zerovec$. On the other hand, the class of widely-linear estimators from  \eqref{general_form_widely_lin} includes 
the set of widely-linear GSP estimators in \eqref{WLMMSE_general_form}, since by setting 
 $\H_1=\V {\text{diag}}(\f_1(\ve\lambda)) \V^T $ and $\H_2 =\V \diag(\f_2(\ve\lambda)) \V^T$  in \eqref{general_form_widely_lin} we obtain the class in \eqref{WLMMSE_general_form} (i.e. \eqref{WLMMSE_general_form} is more restrictive). 
 Finally, we note that there are widely-linear GSP estimators that are not linear (e.g., the estimator in \eqref{new_eq}), while there are linear estimators that are not widely-linear GSP estimators (any estimator in the form of \eqref{LMMSE} as long as  $\Gammamat_{\x\y}\Gammamat_{\y\y}^{-1}$ is not a graph filter). These relationships are described  in Fig. \ref{diagram}.

\subsubsection{Performance}
Since widely-linear GSP estimators in \eqref{WLMMSE_general_form} include the GSP-LMMSE estimator in \eqref{opt_LMMSE_GSP1} as a special case, and since the GSP-WLMMSE estimator has the lowest  \ac{mse} among all estimators of the form in \eqref{WLMMSE_general_form}, the \ac{mse} of the GSP-WLMMSE estimator is lower than or equal to the \ac{mse} of the GSP-LMMSE estimator.
The following theorem gives a closed-form expression for the MSE gap  of these estimators.
\begin{Theorem}\label{Theorem5}
Let the \ac{mse}s of the GSP-LMMSE and the GSP-WLMMSE estimators be denoted by 
$\varepsilon_{GSP-L}^2$ and $\varepsilon_{GSP-WL}^2$.
Then, the difference between the MSEs is 
\begin{equation}\label{eq:diff_mse}   
    \varepsilon_{GSP-L}^2-\varepsilon_{GSP-WL}^2 =
    \hat{\f}_2^H (\ve\lambda)
    \D_{\bar\y\bar\y}^\P
    \hat{\f}_2 (\ve\lambda),
\end{equation}
where $ \D_{\bar\y\bar\y}^\P $ and $\hat{\f}_2(\ve\lambda)$ are  given in \eqref{Schur_diag} and
    \eqref{f2_paper}, respectively.
\end{Theorem}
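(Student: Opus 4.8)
The plan is to exploit the fact that, because $\V$ is a real orthogonal matrix, the whole estimation problem diagonalizes in the graph frequency domain and decouples into $N$ independent scalar widely-linear estimation problems, to each of which the classical gap formula \eqref{eq:diff_MSE} applies directly. The sum of the $N$ scalar gaps will then collapse into the claimed quadratic form.

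First I would pass to the graph frequency domain. Writing $\bar\x=\V^T\x$ and $\bar\y=\V^T\y$, and using that $\V$ is real so that $\V^T\y^*=\bar\y^*$, the widely-linear GSP estimator \eqref{WLMMSE_general_form} becomes $\hat{\bar\x}=\diag(\f_1(\ve\lambda))\bar\y+\diag(\f_2(\ve\lambda))\bar\y^*$, i.e. componentwise $\hat{\bar x}_n=f_1(\lambda_n)\bar y_n+f_2(\lambda_n)\bar y_n^*$, with the GSP-linear estimator \eqref{GSPlinear} being the special case $\f_2=\zerovec$. Since $\V$ is orthogonal, $\EX[\|\hat\x-\x\|^2]=\EX[\|\hat{\bar\x}-\bar\x\|^2]=\sum_{n=1}^{N}\EX[|\hat{\bar x}_n-\bar x_n|^2]$, and each summand depends only on the pair $(f_1(\lambda_n),f_2(\lambda_n))$ (or on $f(\lambda_n)$ in the linear case). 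Hence minimizing the total MSE is equivalent to minimizing each scalar MSE separately, so the $n$th entries $\hat f_1(\lambda_n),\hat f_2(\lambda_n)$ of the optimal filters \eqref{f1_paper}--\eqref{f2_paper} are exactly the scalar WLMMSE coefficients for estimating $\bar x_n$ from $\bar y_n$, while the GSP-LMMSE coefficient $[\Gammamat_{\bar\x\bar\y}]_{n,n}/[\Gammamat_{\bar\y\bar\y}]_{n,n}$ is the scalar LMMSE coefficient.

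Next I would apply the scalar instance of \eqref{eq:diff_MSE} at each graph frequency. Replacing $\hat\H_2$ by the scalar $\hat f_2(\lambda_n)$ and $\P_{\y\y}$ by the scalar Schur complement $[\D_{\bar\y\bar\y}^\P]_{n,n}$ from \eqref{Schur_diag}, the per-frequency gap is $\hat f_2(\lambda_n)\,[\D_{\bar\y\bar\y}^\P]_{n,n}^*\,\hat f_2^*(\lambda_n)=|\hat f_2(\lambda_n)|^2\,[\D_{\bar\y\bar\y}^\P]_{n,n}$, where I use the Remark stating that $\D_{\bar\y\bar\y}^\P$ is real and nonnegative, so $[\D_{\bar\y\bar\y}^\P]_{n,n}^*=[\D_{\bar\y\bar\y}^\P]_{n,n}$. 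Summing over $n=1,\ldots,N$ and recognizing $\sum_n|\hat f_2(\lambda_n)|^2[\D_{\bar\y\bar\y}^\P]_{n,n}$ as the quadratic form $\hat\f_2^H(\ve\lambda)\,\D_{\bar\y\bar\y}^\P\,\hat\f_2(\ve\lambda)$ then gives \eqref{eq:diff_mse}.

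The main obstacle, and really the only substantive point, is justifying the decoupling rigorously: I must verify that orthogonality of $\V$ turns the vertex-domain MSE into a frequency-domain sum and that the diagonal filter structure makes each term depend on a single frequency, so that the per-frequency minimizers reassemble into the global GSP-WLMMSE and GSP-LMMSE filters. Once this is secured, the scalar instances of \eqref{eq:diff_MSE} together with the real-valuedness of $\D_{\bar\y\bar\y}^\P$ yield the identity termwise. As a cross-check I would note an equivalent algebraic route: the filters \eqref{f1_paper}--\eqref{f2_paper} and the Schur complement \eqref{Schur_diag} are obtained from the classical expressions \eqref{H1_optimal_WLMMSE}--\eqref{Pyy} by replacing each full covariance matrix with its diagonal counterpart; since all these matrices are diagonal and hence commute, the derivation of \eqref{eq:diff_MSE} carries over verbatim, with the trace collapsing to $\hat\f_2^H(\ve\lambda)\,\D_{\bar\y\bar\y}^\P\,\hat\f_2(\ve\lambda)$.
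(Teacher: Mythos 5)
Your proof is correct, but it follows a genuinely different route from the paper's. The paper's Appendix on Theorem \ref{Theorem5} stays in vector form: it first invokes the orthogonality principle ($(\hat\x^{(\text{GSP-WLMMSE})}-\x)\perp\y,\y^*$) to write $\varepsilon_{GSP-WL}^2=\EX[\|\x\|^2]-\EX[\|\hat\x^{(\text{GSP-WLMMSE})}\|^2]$, then expresses both $\varepsilon_{GSP-WL}^2$ and $\varepsilon_{GSP-L}^2$ as $\tr(\Gammamat_{\x\x})$ minus a quadratic form in $[(\d_{\bar\y\bar\x}^{\Gammamat})^*;\,\d_{\bar\y\bar\x}^{\C}]$ involving the inverse of the $2N\times 2N$ augmented diagonal covariance block matrix from \eqref{normal_equation}, and finally subtracts the two, using block-matrix inversion and the identity $(\D_{\bar\y\bar\y}^\P)^{-1}-(\D_{\bar\y\bar\y}^{\Gammamat})^{-1}=\Emat_{2,1}\D_{\bar\y\bar\y}^\P\Emat_{1,2}$ to collapse the difference into the quadratic form \eqref{eq:diff_mse}. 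You instead exploit the orthogonality of $\V$ to decouple the MSE into $N$ independent scalar widely-linear problems, identify the $n$th entries of \eqref{f1_paper}--\eqref{f2_paper} and of the GSP-LMMSE filter as the scalar WLMMSE and LMMSE coefficients for estimating $\bar x_n$ from $\bar y_n$, and then apply the classical Picinbono--Chevalier gap formula \eqref{eq:diff_MSE} frequency by frequency before summing. Your argument is shorter and makes the structural point transparent --- the GSP gap formula is literally the classical scalar formula applied coordinate-wise in the graph frequency domain --- but it leans on \eqref{eq:diff_MSE} as a known result; the paper's derivation is self-contained and, as a by-product, yields the explicit closed-form MSE expressions \eqref{GSP_WL_app2} and \eqref{GSP_L_app}, which are reused later (e.g., to compute the theoretical curves in Section \ref{simulation_sec}). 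The only points you should make explicit to fully close the argument are the ones you already flag: each summand $\EX[|f_1(\lambda_n)\bar y_n+f_2(\lambda_n)\bar y_n^*-\bar x_n|^2]$ depends on a disjoint pair of filter taps, so the global minimizers are the per-frequency minimizers, and the scalar Schur complements $[\D_{\bar\y\bar\y}^\P]_{n,n}$ are positive under the standing assumptions ($\Gammamat_{\y\y}$ nonsingular, $\rho_n\neq 1$), so the scalar formula applies at every $n$.
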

\begin{IEEEproof}
The proof appears in Appendix \ref{appB}.
\end{IEEEproof}

Since the Schur complement, $\D_{\bar\y\bar\y}^\P$, is a positive-definite matrix,  the quadratic form in \eqref{eq:diff_mse}  is always nonnegative: $\varepsilon_{GSP-L}^2-\varepsilon_{GSP-WL}^2\geq 0$. 
%In other words, the \ac{mse} of the GSP-LMMSE estimator is always larger than or equal to the \ac{mse} of the GSP-WLMMSE estimator.
Thus, in the case of improper graph signals, there is a performance gain in GSP-WLMMSE estimation over GSP-LMMSE estimation. Moreover, it can be seen that this gain is only a function of the graph filter $\hat{\f}_2(\ve\lambda)$, which filters the complex conjugate part of the observation, $\y^*$, that has relevant information in the improper case. 
It should be noted that since the GSP-WLMMSE belongs to the family of  widely-linear estimators, it achieves intermediate performance between the
optimal widely-linear filter (WLMMSE estimator) and the optimal linear filter in the graph domain (GSP-LMMSE estimator), as discussed in Subsection \ref{order_relation_subsection}.

\subsubsection{Orthogonality principle}
The WLMMSE estimator  satisfies  the orthogonality principle \cite{Picinbono_Chevalier1995}, which states that \[(\hat\x^{(\text{WLMMSE})}-\x) \perp \y{\text{~~~and~~~}}(\hat\x^{(\text{WLMMSE})}-\x) \perp \y^*.\]
Similarly, in Appendix \ref{app_calc} it is shown that 
\[(\hat\x^{(\text{GSP-WLMMSE})}-\x) \perp \y{\text{~~~and~~~}} (\hat\x^{(\text{GSP-WLMMSE})}-\x) \perp \y^*. \]
 In other words, the estimation error vector of the GSP-WLMMSE estimator  is orthogonal to both $\y$ and $\y^*$.

\subsubsection{Computational complexity}
In terms of computational complexity, 
the  WLMMSE estimator from \eqref{WLMSE_est} requires: 
1) computing the inverse of the $N\times N$ complex-valued matrices ${\ve\Gamma}_{\y\y}$ and $ \P_{\y\y}$, which has a complexity of $\mathcal{O}(N^3)$; and  2) performing full matrix-vector multiplications, with a  computational complexity of $\mathcal{O}(N^2)$.
The GSP-WLMMSE estimator from \eqref{opt_WLMMSE_GSP} requires:
1)
computing the inverse of the diagonal matrices ${\Dmat}_{\bar{\y}\bar{\y}}^\Gamma$ and ${\Dmat}_{\bar{\y}\bar{\y}}^\C$, which has a complexity of $\mathcal{O}(N)$;  and 2) performing multiplications of diagonal matrices with a cost of $\mathcal{O}(N)$. Then, to obtain the estimate in the vertex domain, it is required to multiply the previous computations by $\V$ and  $\V^T\y$. These matrix-vector multiplications have  a cost of $\mathcal{O}(N^2)$.
To conclude, 
if $\V$ is given,
the final computational complexity of the WLMMSE estimator is $\mathcal{O}(N^3)$ due to the need to compute the inverse of a full matrix, while the GSP-WLMMSE has a complexity of $\mathcal{O}(N^2)$.

The use of the graph frequency domain in the GSP-WLMMSE estimator requires the computation of the eigenvalue decomposition (EVD) of the Laplacian matrix, which is of order $\mathcal{O}(N^3)$. If the EVD can be assumed to be known, then this task may be avoided. In addition, since the EVD is only a function of the topology of the graph and is independent of the data, 
the eigendecomposition of the
graph Laplacian matrix can be performed offline and there is no need for a recalculation with new data.
Recent works propose low-complexity  methods to reduce the complexity of this task (see e.g., \cite{SVD}). In addition,
the computational complexity of the GSP-LMMSE estimator can be reduced even further 
by using the properties of 
the Laplacian matrix, which tends to be sparse, and therefore,
matrix operations may require fewer computations.
 One can also accelerate
the computation of the GSP-WLMMSE estimator by using techniques to approximate the graph filters, as described in the following subsection.

\subsubsection{Chebyshev Polynomial Approximation}
\label{Chebyshev_subsec}
Computing the full EVD of the graph Laplacian matrix, as needed for the GSP-WLMMSE estimator, becomes
computationally challenging as  $N$, increases.
A common
way to avoid this computational burden is to use polynomial approximations.
One such approach is the truncated Chebyshev polynomial
approximation of a graph filter \cite{Chebyshev_2017,Chebyshev_2018,stankovic2019graph,Shuman_2020}.
In addition to reduced computational complexity, 
the second benefit of the Chebyshev polynomial approximations is that their recurrence relations make them readily amenable for distributed computation.

In this case, instead of exactly computing   the  GSP-WLMMSE estimator from \eqref{filtered_opt_WLMMSE_GSP}, which requires explicit computation of the
entire set of eigenvectors  of $\Lmat$, we approximate  $\hat\f_1(\L)=\V {\text{diag}}(\hat\f_1(\ve\lambda)) \V^T$ and $\hat\f_2(\L)=\V {\text{diag}}(\hat\f_2(\ve\lambda)) \V^T$ by the polynomial approximations $\g_1(\Lmat) $ and $\g_2(\Lmat) $, respectively, where    $\hat{\f}_1 (\ve\lambda)$ and $\hat{\f}_2 (\ve\lambda)$
 are defined in \eqref{f1_paper}  and \eqref{f2_paper}. The terms
$g_1(\Lmat)$ and $\g_2(\Lmat)$ are   computed by truncating a shifted Chebyshev series expansion of the functions $ \hat\f_1(\ve\lambda)$ and $\hat\f_2(\ve\lambda)$, as described in \cite{Chebyshev_2017,Chebyshev_2018,Shuman_2020,stankovic2019graph}.
 Then, the estimator in \eqref{filtered_opt_WLMMSE_GSP} is replaced by the approximation
 \beqna
\label{filtered_opt_WLMMSE_GSP_cheby}
    \hat{\x}^{(\text{GSP-WLMMSE})} \approx 
   g_1(\Lmat) \y + g_2(\Lmat) \y^*.
    \eeqna
The  graph signal filtering in \eqref{filtered_opt_WLMMSE_GSP_cheby} is implemented in the vertex
domain in a local manner, since the Chebyshev polynomials
are  readily
amenable to distributed computation \cite{Chebyshev_2018}.

 The computational complexity of computing  the Chebyshev polynomial  approximation  in \eqref{filtered_opt_WLMMSE_GSP_cheby} is $\mathcal{O}(L|\mathcal{E}|)$,  where $|\mathcal{E}|$ is the number of edges and  $L$ is the polynomial order.
 For large sparse graphs, this  is approximately linear in  $N$  \cite{Chebyshev_2018,Shuman_2020}.
This is  as opposed to $\mathcal{O}(N^3)$ required to naively compute the full set of eigenvectors  of $\Lmat$,  which is necessary for the GSP-WLMMSE estimator. 
Moreover, the estimator in \eqref{filtered_opt_WLMMSE_GSP_cheby}
 can be performed in a distributed setting where each vertex knows only its own signal value and can communicate solely with its neighboring vertices.
The amount of communication required to perform the distributed computations scales with the size of the network  through the number of graph edges. Therefore, the method is well suited to large-scale sparse networks.

\subsubsection{General design of the graph frequency response}
\label{General_design_sub}
In Theorem \ref{Theorem1} we presented a general approach to finding the optimal GSP-WLMMSE estimator. However, the GSP-WLMMSE estimator is a function of the specific graph structure with fixed dimensions. 
In particular, the optimal graph frequency responses 
of the GSP-WLMMSE estimator from \eqref{f1_paper} and \eqref{f2_paper} are defined only at graph frequencies $\lambda_n$, $n=1,\dots,N$. 
Thus, the optimal graph frequency response needs to be redeveloped for any small change in the topology. Alternatively, we can formulate the problem of designing  widely-linear estimators of the form in \eqref{WLMMSE_general_form} that minimizes the \ac{mse}, but where $\f_1(\ve\lambda)$ and $\f_2(\ve\lambda)$ are restricted to specific parametrizations as
two graph filters, $h_1(\cdot;\alphavec_1)$ and $h_2(\cdot;\alphavec_2)$, where $\alphavec_1$ and  $\alphavec_2$ contain the graph filters' parameters. These graph filters can be, for example, 
 the 
auto-regressive moving-average (ARMA)  graph filter \cite{Isufi_Leus2017}  with the following  entries on the diagonal:  
\begin{equation}
\label{ARMA2}
    h_i(\lambda_n;\alphavec) = \frac{\sum_{p=0}^{P-1} \lambda_n^p c_p}{1 + \sum_{q=1}^{Q-1} \lambda_n^q a_q},
\end{equation}
where $\alphavec=[c_0,\ldots,c_{P-1},a_1,\ldots,a_Q]^T$  and $P,Q$ are the filter orders. 
The goal is to find the graph filters 
that generate MSE-optimal widely-linear estimator with specific parametrization.

The MSE-optimal parameter vectors, $\alphavec_1$ and $\alphavec_2$, for any graph-filter parametrizations, $h_i(\ve\lambda;\alphavec)$, $i=1,2$, are found by minimizing the MSE after substituting in \eqref{min_f1_f2}
 from Appendix \ref{appA} the specific parametrizations:
\begin{eqnarray}
\label{MSEh1h2}
\EX[\|\hat{\x} - \x\|^2]\hspace{6.25cm}\nonumber\\=
     \EX[\|{\text{diag}}(\h_1(\ve\lambda;\alphavec_1)) \bar\y + {\text{diag}}(\h_2(\ve\lambda;\alphavec_2)) \bar{\y}^*  - \bar\x\|^2].
\end{eqnarray}

This universal design of graph filters by finding the frequency response over a continuous range of graph frequencies is adaptive to topology changes, e.g., in cases when the number of vertices and/or edges is changed. In addition, the parametric representation provides a straightforward way to integrate GSP properties. Finally, since in practice the desired graph frequency response is often approximated by its sample-mean version (see  Subsection \ref{sample_version}), parametrizations can reduce outlier errors and noise effects. More technical details on this approach can be found in \cite{kroizer2021bayesian} for the GSP-LMMSE estimator.

\subsubsection{Alternative derivation by real and imaginary parts}
\label{real_subsec}
An alternative approach to deal with complex-valued signals is to 
minimize the MSE  separately w.r.t. the real and imaginary parts of $\x$,  $\x_\Re=\Re\{\x\}\in \mathbb{R}^N$ and $\x_\Im=\Im\{\x\}\in \mathbb{R}^N$.
Then the goal is to estimate the  vector $[\x_\Re^T,\x_\Im^T]^T\in \mathbb{R}^{2N}$ from the observation vector $[\y_\Re^T,\y_\Im^T]^T=[\Re\{\y\}^T,\Im\{\y\}^T]^T\in \mathbb{R}^{2N}$. Then, instead of the  widely-linear GSP estimators considered in \eqref{WLMMSE_general_form},
we consider  a real-valued GSP estimator of the form:
 	\begin{eqnarray}\label{WLMMSE_general_form_real_value}  
        \left[ 
            \begin{array}{c}
                 \hat\x_\Re \\
                 \hat\x_\Im 
            \end{array}
        \right]\hspace{6.25cm}
        \nonumber\\ =
        \left[ \hspace{-0.1cm}
            \begin{array}{cc}
                \V {\text{diag}}(\g_{11}(\ve\lambda))\V^T  &    \V {\text{diag}}(\g_{12}(\ve\lambda))\V^T\\
                   \V {\text{diag}}(\g_{21}(\ve\lambda))\V^T  &    \V {\text{diag}}(\g_{22}(\ve\lambda))\V^T 
        \end{array}\hspace{-0.1cm}
        \right]
        \left[ \hspace{-0.1cm}
            \begin{array}{c}
                 \y_\Re \\
                 \y_\Im 
            \end{array}\hspace{-0.1cm}
        \right],   
    \end{eqnarray}
where $\g_{ij}(\cdot)$, $i=1,2$, are four {\em{real-valued}} graph filters. 

In this representation, we minimize the MSE,
\[
\EX[\|\hat{\x} - \x\|^2]=\EX[(\hat\x_\Re - \x_\Re)^2]
+\EX[(\hat\x_\Im - \x_\Im)^2],
\]
w.r.t. the four {\em{real-valued}} graph filters. Similarly to Appendix \ref{appA}, it can be shown that this minimization results in 
\begin{eqnarray}
\label{optimal_real_filters}
&&\hat{\g}_{kk}(\ve\lambda)=\Re\{\hat\f_{1}(\ve\lambda)\}+(3-2k)\Re\{\hat\f_{2}(\ve\lambda)\},\nonumber\\&&
 \hat\g_{km}(\ve\lambda)=(k-m)\Im\{\f_{1}(\ve\lambda)\}+\Im\{\f_{2}(\ve\lambda)\}, 
 %\nonumber\\&&
% \hat\g_{21}(\ve\lambda)=\Im\{\hat\f_{1}(\ve\lambda)\}+\Im\{\hat f_{2}(\ve\lambda)\},\nonumber\\&&
% \hat\g_{22}(\ve\lambda)=\Re\{\hat\f_{1}(\ve\lambda)\}-\Re\{\hat\f_{2}(\ve\lambda)\},
 \end{eqnarray}
 $k,m=1,2$, $k\neq m$,
 where $\hat{\f}_1 (\ve\lambda)$ and $\hat{\f}_2 (\ve\lambda)$ are  the {\em{complex-valued}} optimal graph filters defined in \eqref{f1_paper} and \eqref{f2_paper}.
Thus, estimation based on the filtering of a signal and its complex conjugate by the two optimal complex-valued graph filters, i.e. the GSP-WLMMSE estimator from \eqref{opt_WLMMSE_GSP}, is equivalent, through a linear transformation, to the filtering of the real and imaginary parts of the signal by four real-valued graph filters, i.e. to  the estimator obtained by substituting \eqref{optimal_real_filters} in \eqref{WLMMSE_general_form_real_value}.

However, while the two approaches are mathematically equivalent, the latter approach is mathematically more cumbersome and less elegant and compact for the analysis of the improper complex-valued signals considered here. 
For example, the derivatives in the MSE optimization of the widely-linear approach are more tractable and less tedious, and require fewer additional assumptions \cite{Adali_Schreier_Scharf2011}.
In addition,  it is more convenient to develop the properties of the GSP-WLMMSE, as discussed in Section \ref{special_cases_subsec}.
Finally, the relations between the structure of the GSP-WLMMSE estimator from \eqref{opt_WLMMSE_GSP} and the real-valued GSP-LMMSE estimator are more straightforward  than for the real-valued version. 
Thus, the augmented complex representation used here has been favored by many researchers  (e.g., \cite{schreier2010statistical,mandic2009complex,Picinbono_Chevalier1995})  for processing improper signals in different tasks.

\color{black}
\subsection{Implementation with Sample Covariance Values}
\label{sample_version}
The recovery of random graph signals  by the discussed linear and widely-linear estimators can be used when the second-order statistics are completely specified.
However, for general nonlinear models, the  covariance matrices of the  graph signal and the observations 
do not have closed-form expressions.
In these cases, 
we can use data to estimate the second-order statistics  and then  recover the graph signals based on the estimated statistics. 
That is, assume that we have data (we can always generate such data for a given model) 
that is comprised of a training set consisting of $\Ntraining$ pairs that have the same joint distribution as $\x$ and $\y$, denoted by $\Data = \{\x_k, \y_k\}_{k=1}^{\Ntraining}$.
Then, the sample 
cross-covariance  and the sample complementary cross-covariance  matrices  are
\beqna
\label{one_sample}
\hat{\Gammamat}_{\x\y}
= \frac{1}{\Ntraining} \sum\nolimits_{k=1}^{\Ntraining} \x_k \y_k^H
{\text{ 
and }}
\hat{\C}_{\x\y} = \frac{1}{\Ntraining} \sum\nolimits_{k=1}^{\Ntraining} \x_k \y_k^T.
\eeqna
Similarly, we can calculate all the other sample diagonal and non-diagonal covariance matrices that are needed for the estimators.
Since the graph Laplacian is assumed to be known, in order to compute the sample 
cross-covariance  and the sample complementary cross-covariance  matrices {\em{in the graph frequency domain}}  one should take the data $\Data = \{\x_k, \y_k\}_{k=1}^{\Ntraining}$ and transform it according to \eqref{GFT_def}
to the graph frequency domain by $\bar\x_k\define \V^T\x_k$ and $\bar\y_k \define \V^T\y_k$, for any $k=1,\ldots,\Ntraining$.
Then, the sample covariance matrices used in the GSP-LMMSE, and GSP-WLMMSE estimators are, respectively,
\beqna
\label{one_sample2}
\hat{\Gammamat}_{\bar\x\bar\y}
= \frac{1}{\Ntraining} \sum\nolimits_{k=1}^{\Ntraining} \bar\x_k \bar\y_k^H
{\text{ 
and }}
\hat{\C}_{\bar\x\bar\y} = \frac{1}{\Ntraining} \sum\nolimits_{k=1}^{\Ntraining} \bar\x_k \bar\y_k^T.
\eeqna
It should be noted that 
for the GSP-WLMMSE,
the elements of the $N\times N$ diagonal sample covariance matrices  can be computed efficiently by
\beqna
\label{one_sample3}
[\hat{\D}_{\bar\x\bar\y}^{\Gammamat}]_{n,n}
= \frac{1}{\Ntraining} \sum_{k=1}^{\Ntraining} [\bar\x_k]_n [\bar\y_k^*]_n,~
[\hat{\D}_{\bar\x\bar\y}^{\C}]_{n,n}
= \frac{1}{\Ntraining} \sum_{k=1}^{\Ntraining} [\bar\x_k]_n [\bar\y_k]_n,\nonumber
\eeqna
$n=1,\ldots,N$.
Finally, the sample versions of the LMMSE, WLMMSE, GSP-LMMSE, and GSP-WLMMSE estimators are obtained
by plugging the sample covariance matrices  into  the original  estimators (see e.g., regarding the sample LMMSE estimator in \cite{6827237} and p. 728 in \cite{van2004optimum}, and regarding the sample GSP-LMMSE estimator in \cite{kroizer2021bayesian}). 
This approach can be interpreted as a
data-driven graph filter design with partial domain knowledge
by discriminative learning \cite{shlezinger2022discriminative}, where we use the available domain knowledge to choose the structure of the graph filters for the estimator, and then use data to obtain the matrices that minimize the MSE of the estimator ~\cite{shlezinger2022discriminative}. 

In many practical applications, the graph signal has a broad correlation function, so that estimating the covariance matrices from data with high accuracy necessitates a larger sample size than is available. In particular, in settings where the sample size  of the training data used to compute the sample-mean versions ($\Ntraining$) is comparable to the observation dimension ($N$), the  sample-LMMSE  and sample-WLMMSE estimators exhibit severe performance degradation  \cite{4914845,5484583,Lancewicki_Aladjem}. 
This is because the sample covariance matrix is not well-conditioned in the small sample size regime, and inverting it amplifies the estimation error.
In contrast, the GSP estimators, which rely on the inverse of a {\em{diagonal}} sample covariance matrix  (that can be computed for any $K\geq 1$ with probability 1) and exploit the graphical structure of the problem perform better in this regime. 
This makes the GSP estimators attractive for applications with large networks.
Moreover, the sample-GSP-LMMSE and the  sample-GSP-WLMMSE  estimators require the estimation of only two and four (respectively)   $N$-length vectors that contain the diagonal of the covariance matrices from \eqref{d_def1} and \eqref{d_def2}, while the sample-LMMSE and  sample-WLMMSE estimators require estimating two and four (respectively) $N\times N$  matrices.

This advantage improves the sample-GSP-WLMMSE estimator performance, compared to the sample-WLMMSE estimator, with  limited datasets used for the non-parametric estimation of the different sample-covariance matrices.
Moreover, the estimation of the inverse sample diagonal covariance matrices, $(\Dmat_{\bar{\y}\bar{\y}}^\Gammamat)^{-1}$ and
   $({\Dmat}_{\bar{\y}\bar{\y}}^\C)^{-1}$ from \eqref{opt_WLMMSE_GSP}, is more robust to limited data  than the estimation of the full sample covariance matrices inverse, required by the sample-WLMMSE estimator.

In terms of computational complexity, the sample-WLMMSE estimator requires: 
1) forming the sample mean 
and  the sample covariance matrices 
with full matrix multiplications and an additional cost of $\mathcal{O}(\Ntraining N^2)$; 2)
computing the inverse of the $N\times N$ 
 sample complex-valued matrices $\hat{\Gamma}_{\y\y}$ and $ \hat\P_{\y\y}$, which has a complexity of $\mathcal{O}(N^3)$; and  3) performing  matrix-vector multiplications, with a  computational complexity of $\mathcal{O}(N^2)$.
The sample-GSP-WLMMSE estimator  requires: 
1) forming the sample mean 
and  the {\em{diagonal}} sample covariance matrices 
with a cost of $\mathcal{O}(\Ntraining N)$, where the data is generated in the graph frequency domain;
2)
computing the inverse of the diagonal sample-covariance matrices, $\hat{\Dmat}_{\bar{\y}\bar{\y}}^\Gamma$ and $\hat{\Dmat}_{\bar{\y}\bar{\y}}^\C$, which has a complexity of $\mathcal{O}(N)$;  and 3) performing multiplications of diagonal matrices and matrix-vector multiplications  with a cost of $\mathcal{O}(N^2)$.
To conclude, 
if $\V$ is given,
the final computational complexity of the sample-WLMMSE estimator is $\mathcal{O}(N^2(N+\Ntraining))$, while the sample-GSP-WLMMSE has a complexity of $\mathcal{O}(N(N+\Ntraining))$.

\section{Special Cases and relation with existing estimators}
\label{special_cases_subsec}
In this section, we investigate the properties of the proposed GSP-WLMMSE estimator for the important special cases of proper signal and measurements (Subsection \ref{proper_sub}),
proper measurements with an improper signal (Subsection \ref{semi_proper_sub}), maximal improper observation vectors (Subsection \ref{max_improper_sub}), and real-valued unknown graph signals (Subsection \ref{real_sub}). In addition, we discuss the 
relations between 
the GSP-WLMMSE estimator 
and
 the scalar widely-linear estimators (Subsection 
\ref{scalar_sub}), and  the WLMMSE estimator (Subsection
\ref{relation_wlmse_sub}).

\subsection{Proper Case}
\label{proper_sub}
In the well-known proper case (see e.g., p. 35 in \cite{schreier2010statistical})
\renewcommand{\theenumi}{C.\arabic{enumi}}
\begin{enumerate}
	\setcounter{enumi}{0}
    \item
The signal and measurement are cross-proper:  $\C_{{\y}{\x}}=\ve{0}$.
\item
The measurement vector $\y$  is proper:
 $ \C_{\y\y} = \ve{0}$.
\end{enumerate}
Condition C.1 implies that
${\Dmat}_{\bar{\y}\bar{\x}}^\C=\zerovec$.

By substituting 
Condition C.2  in  \eqref{rho_def}-\eqref{R1_def}, we obtain that in this proper case
  $\rho_n=0$, $\forall n=1,\ldots,N$, and $\R=\Imat$. By substituting 
$\R=\Imat$ and ${\Dmat}_{\bar{\y}\bar{\x}}^\C=\zerovec$ in \eqref{opt_WLMMSE_GSP}, one obtains that in the proper case, the GSP-WLMMSE estimator is reduced to 
 the GSP-LMMSE estimator from \eqref{opt_LMMSE_GSP1}.

 \subsection{Proper Measurement Vector}
 \label{semi_proper_sub}
 In contrast to the previous case, if  only Condition C.2 holds, i.e. $ \C_{\y\y} = \ve{0}$,
 and  no specific assumption is introduced for
the estimand $\x$, then \eqref{opt_WLMMSE_GSP} is reduced to
\beqna
\label{opt_WLMMSE_GSP_C0}
    \hat{\x}^{(\text{GSP-WLMMSE})}  
    = \hat{\x}^{(\text{GSP-LMMSE})}  + \V {\Dmat}_{\bar{\y}\bar{\x}}^\C(\Dmat_{\bar{\y}\bar{\y}}^\Gammamat)^{-1}
    \bar{\y}^*,
\eeqna
where $ \hat{\x}^{(\text{GSP-LMMSE})} $ is given in \eqref{opt_LMMSE_GSP1}.
Thus, in this case the GSP-WLMMSE estimator is different from the GSP-LMMSE estimator.
In addition, by substituting 
$ \C_{\y\y} = \ve{0}$ and $\D_{\bar\y\bar\y}^{\C}=\ve{0}$ in \eqref{ppp} and \eqref{f2_paper}, we obtain that for this case $
\D_{\bar\y\bar\y}^\P = \D_{\bar\y\bar\y}^{\Gammamat}$ and
$
    \hat{\f}_2 (\ve\lambda)=
    ( \D_{\bar\y\bar\y}^\P )^{-1} 
    \d_{\bar\y\bar\x}^{\C}$.
    By substituting these results in  \eqref{eq:diff_mse}, we obtain that the MSE gap is 
\be
\label{MSE_gap}
    \varepsilon_{GSP-L}^2-\varepsilon_{GSP-WL}^2 =
   (\d_{\bar\y\bar\x}^{\C})^H  ( \D_{\bar\y\bar\y}^\P )^{-1}
    \d_{\bar\y\bar\x}^{\C},
    \ee
which defines the performance gain obtained by using the GSP-WLMMSE estimator over the GSP-LMMSE estimator.

By comparing \eqref{opt_WLMMSE_GSP_C0} with \eqref{opt_LMMSE_GSP1}
   it can be seen that  the structure of the GSP-WLMMSE estimator in \eqref{opt_WLMMSE_GSP_C0}
is, in fact, the sum of two {\em{linear}} estimators: the conventional GSP-LMMSE estimator based on $\y$, $\hat{\x}^{(\text{GSP-LMMSE})}$, and the GSP-LMMSE estimator based on the complex conjugate of $\y$, $\y^*$, as the measurement vector, which equals  $\V {\Dmat}_{\bar{\y}\bar{\x}}^\C(\Dmat_{\bar{\y}\bar{\y}}^\Gammamat)^{-1}
    \bar{\y}^*$.
This structure can be
explained by noting that the circularity assumption, $\C_{\y\y} = \ve{0}$, implies that the
vectors $\y$ and $\y^*$ are uncorrelated. Thus, the Hilbert subspaces
generated by these vectors are orthogonal, and taking into account $\y^*$
does not change the term coming from $\y$ only, 
which is equal to the GSP-LMMSE estimator, $\hat{\x}^{(\text{GSP-LMMSE})}$.
Thus, a nonzero matrix ${\Cmat}_{{\y}{\x}}$  necessarily implies an increment of the estimation
performance  when using widely-linear estimators instead of linear estimators. This result is an extension of the results in \cite{Picinbono_Chevalier1995} for the WLMMSE estimator.

%%%%%%%%%%
\subsection{Maximal Improper Case}
\label{max_improper_sub}
Let $\F$ be a deterministic $N\times N$
complex-valued unitary matrix such that $\F^H\F=\F\F^H=\Imat$.
Then, if the measurement vector $\y$ is maximally improper \cite{lameiro2019maximally}, i.e.  $\y=\F \y^*$ 
with probability 1, we obtain  
that 
\begin{equation}
\label{F1}
\Gammamat_{\y\y}=\EX[\y \y^T \F^H]=\C_{\y\y} \F^H \Rightarrow \Gammamat_{\bar\y\bar\y}=\C_{\bar\y\bar\y} \bar{\F}^H,
\end{equation}
where $\bar{\F}^H\define\V^T\F^H \V$.
and 
\begin{equation}
\label{F2}\Gammamat_{\x\y}=\EX[\x \y^T \F^H]=\C_{\x\y} \F^H \Rightarrow \Gammamat_{\bar\x\bar\y}=\C_{\bar\x\bar\y} \bar{\F}^H.
\end{equation}
It can be verified that $\bar{\F}$ is also a unitary matrix, i.e. $\bar{\F}^H\bar{\F}=\bar{\F}\bar{\F}^H=\Imat$.
Thus, if $\bar{\F}$ is also a diagonal matrix, then, 
by substituting \eqref{F1} in \eqref{rho_def}, we obtain
\begin{equation}
\label{rho_def_F}
    \rho_n =\frac{| [\D_{\bar\y\bar\y}^\C]_{n,n}|^2}{| [\D_{\bar\y\bar\y}^\C]_{n,n} [ \bar{\F}^H]_{n,n}|^2}=1, ~n=1,\ldots,N.
    \end{equation}
Thus,  Theorem \ref{Theorem1},  which
was developed under the assumption that $ \rho_n \neq 1$,
does not hold.
In this case, it can be shown (similar to the derivations in Appendix \ref{appA}) that the GSP-WLMMSE estimator is reduced to 
 the GSP-LMMSE estimator from \eqref{opt_LMMSE_GSP1}.
 Thus, for a maximal improper $\y$ with a diagonal matrix $\bar\F$, the GSP-WLMMSE estimator coincides with the GSP-LMMSE estimator.
 %(irrespective of whether or not $\x$ is proper).
 This is similar to the property for the WLMMSE estimator and the LMMSE estimator  \cite{Adali_Schreier_Scharf2011}.
 
 However, if $\bar\F$ is not a diagonal matrix, then it can be verified that 
 \eqref{F1} implies that 
 \be
 \label{DDD}
[\D_{\bar\y\bar\y}^\Gammamat]_{n,n}=[\C_{\bar\y\bar\y} \bar{\F}^H]_{n,n}
=\sum_{m=1}^N [\C_{\bar\y\bar\y}]_{n,m} [\bar{\F}^H]_{m,n}.
\ee
That is, the diagonal of $\Gammamat_{\bar\y\bar\y}$ is a function of all the elements of $\C_{\bar\y\bar\y}$, and not only a function of its diagonal elements. As a result,
by substituting \eqref{F1} in \eqref{rho_def} we obtain
   $\rho_n \neq 1 $.
Therefore, in this case the GSP-WLMMSE estimator may be different from the GSP-LMMSE estimator.
 This is in contrast to the results for the WLMMSE and LMMSE for non-graph signals (see in \cite{Adali_Schreier_Scharf2011} after Eq. (46)). 
 
 This  can be explained by the fact that,  while
 $\bar\y$ and $\bar{\y}^*$  carry  the same second-order information about $\x$  where  $\y$ is maximally improper (i.e. $\y=\F \y^*$ and $\bar\y=\bar\F \bar\y^*$), 
 the GSP  estimators  only use the diagonal of the second-order information in the graph frequency domain, which may be different for 
 $\F \y^*$ and $\y^*$, as demonstrated by \eqref{DDD}.
 Thus,  the limitation to outputs of graph filters (or, equivalently, to diagonal covariance matrices in the graph frequency domain)  changes the information that we can obtain from  $\bar\y$ and $\bar{\y}^*$ for cases where $\F$ is a non-diagonal unitary matrix.

\subsection{Real-Values Case}
\label{real_sub}
 If the signal $\x$ is real (but $\y$ is complex), we have $\Dmat_{\bar{\y}\bar{\x}}^\C=\Dmat_{\bar{\y}\bar{\x}}^\Gammamat=(\Dmat_{\bar{\x}\bar{\y}}^\Gammamat)^*$. This leads to the simplified expression of the GSP-WLMMSE estimator from \eqref{opt_WLMMSE_GSP}:
 \beqna
\label{opt_LMMSE_GSP_x_real}
    \hat{\x}^{(\text{GSP-WLMMSE})}  \hspace{5.75cm}
    \nonumber\\
    =2
     \V   {\Re}\left\{ \left(\R\Dmat_{\bar{\x}\bar{\y}}^\Gammamat(\Dmat_{\bar{\y}\bar{\y}}^\Gammamat)^{-1}
    +(\I-\R){\Dmat}_{\bar{\y}\bar{\x}}^\Gammamat ({\Dmat}_{\bar{\y}\bar{\y}}^\C)^{-1}
  \right)\bar\y \right\}.
\eeqna
 Since $ \V$ is a real matrix, the GSP-WLMMSE estimator of a real graph signal $\x$ in \eqref{opt_LMMSE_GSP_x_real} is real. In contrast, the GSP-LMMSE estimator from \eqref{opt_LMMSE_GSP1} is generally
complex, which may lead to absurd results.

\subsection{Relation to Scalar Widely-Linear Estimators}
\label{scalar_sub}
In \cite{hellings2019combining} it is proposed to keep a linear filter and apply scalar
widely-linear operations to the components of the filter output. That is, to 
minimize the \ac{mse} among scalar widely-linear estimators of the form 
\begin{equation}
\label{general_form_scalar_widely_lin}
    \hat{\x} = {\text{diag}}(\a)\G \y + {\text{diag}}(\b)\G \y^*. 
\end{equation}
The total number of  unknown  complex coefficients  is 
$N^2+2N$ for scalar widely-linear estimators
 in \eqref{general_form_scalar_widely_lin}  (given by $\G$, $\a$, and $\b$),  $2N$  for  the proposed widely-linear GSP estimator of the form in \eqref{WLMMSE_general_form} (given by $f_1(\ve\lambda)$  and $\f_2(\ve\lambda)$), 
$2 N^2$ in the case of 
general widely-linear estimators in the form of \eqref{general_form_widely_lin} (given by $\Hmat_1$ and $\Hmat_2$), and $N$ in the case of a  GSP linear  estimator in the form of \eqref{GSPlinear} (given by $\f_1(\ve\lambda)$).
Where the number of unknowns increases, we have more degrees of freedom  for each estimator, which decreases the \ac{mse}. On the other hand, this results in a more  demanding computation,  and 
may be impractical for implementation with sample-mean values, where the actual statistic is intractable
(see more in Subsection \ref{discussion_subsec}).

It can be seen that the GSP widely-linear estimator can be written in the graph frequency domain as 
\begin{eqnarray}
\label{WLMMSE_general_form2}
    \hat{\bar\x}= 
    {\text{diag}}(\f_1(\ve\lambda)) \bar\y +  {\text{diag}}(\f_2(\ve\lambda)) \bar{\y}^*.
\end{eqnarray}
Therefore, the proposed GSP widely-linear estimator can be interpreted as a scalar widely-linear estimator \cite{hellings2019combining} in the {\em{graph frequency domain}}, where 
$\G=\I$.

%%%
\subsection{Relation with the Conventional WLMMSE Estimator}
\label{relation_wlmse_sub}
In the general case, the GSP-WLMMSE estimator from Theorem \ref{Theorem1}  has a higher MSE than the WLMMSE estimator from \eqref{WLMSE_est}, since the GSP widely-linear estimators belong to a subset of the widely-linear estimators.
The following theorem states sufficient and necessary conditions for the proposed GSP-WLMMSE estimator to achieve the WLMMSE estimator.
\begin{Theorem} \label{claim_coincides}
The GSP-WLMMSE estimator from Theorem \ref{Theorem1}  coincides with the WLMMSE estimator from \eqref{WLMSE_est} if
\beqna
\label{coincides_Appendix_to_prove_1}
(\Gammamat_{\bar\x\bar\y}-{\C}_{\bar\x\bar\y}(\Gammamat_{\bar\y\bar\y}^{-1})^*\C_{\bar\y\bar\y}^*)\P_{\bar\y\bar\y}^{-1} \hspace{2.75cm}
   \nonumber\\
       =(
    \D_{\bar\x\bar\y}^{\Gammamat}  - 
   \D_{\bar\x\bar\y}^{\C}
     (\D_{\bar\y\bar\y}^{\Gammamat})^{-1} (\D_{\bar\y\bar\y}^{\C} )^* )
     ( \D_{\bar\y\bar\y}^\P )^{-1}
    \eeqna
    and
    \beqna
    \label{coincides_Appendix_to_prove_2}
       (\C_{\bar\x\bar\y}-\Gammamat_{\bar\x\bar\y}\Gammamat_{\bar\y\bar\y}^{-1}{\C}_{\bar\y\bar\y})(\P_{\bar\y\bar\y}^{-1})^*   \hspace{3cm}
        \nonumber\\=
       (
    (\D_{\bar\y\bar\x}^{\C}
    -
    \D_{\bar\y\bar\y}^{\C}(\D_{\bar\y\bar\y}^{\Gammamat})^{-*}(\D_{\bar\y\bar\x}^{\Gammamat} )^*
    )
    ( \D_{\bar\y\bar\y}^\P )^{-1},
\eeqna
 where we assume that  $\Gammamat_{\bar\y\bar\y}$, $\P_{\bar\y\bar\y}$, $\D_{\bar\y\bar\y}^{\Gammamat}$, and $\D_{\bar\y\bar\y}^\P$ are non-singular matrices. 
\end{Theorem}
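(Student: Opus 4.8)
The plan is to carry out the whole comparison in the graph frequency domain, where both estimators collapse to widely-linear maps of the transformed observation $\bar\y=\V^T\y$, and then to invoke the uniqueness of the widely-linear representation. Recall that $\V$ is real and orthogonal, so $\V^T\V=\V\V^T=\I$ and $\y=\V\bar\y$, $\y^*=\V\bar\y^*$. Applying $\V^T$ to the GSP-WLMMSE estimator in \eqref{opt_WLMMSE_GSP2} and telescoping the inner $\V\V^T=\I$ factors immediately gives $\hat{\bar\x}^{(\text{GSP-WLMMSE})}=\Mmat_1^{\text{G}}\bar\y+\Mmat_2^{\text{G}}\bar\y^*$, where the diagonal coefficient matrices $\Mmat_1^{\text{G}}$ and $\Mmat_2^{\text{G}}$ are exactly the right-hand sides of \eqref{coincides_Appendix_to_prove_1} and \eqref{coincides_Appendix_to_prove_2}. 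Likewise, applying $\V^T$ to the WLMMSE estimator in \eqref{WLMSE_est} and writing $\y=\V\bar\y$, $\y^*=\V\bar\y^*$ yields $\hat{\bar\x}^{\text{WLMMSE}}=\V^T\hat{\H}_1\V\,\bar\y+\V^T\hat{\H}_2\V\,\bar\y^*$, with $\hat{\H}_1,\hat{\H}_2$ from \eqref{H1_optimal_WLMMSE}--\eqref{H2_optimal_WLMMSE}.

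The second step is to show that the two congruence-transformed blocks $\V^T\hat{\H}_1\V$ and $\V^T\hat{\H}_2\V$ equal the left-hand sides of \eqref{coincides_Appendix_to_prove_1} and \eqref{coincides_Appendix_to_prove_2}. The key is that the GFT acts on the augmented covariances by congruence: for $\a,\b\in\{\x,\y\}$ one has $\Gammamat_{\bar\a\bar\b}=\V^T\Gammamat_{\a\b}\V$ and $\C_{\bar\a\bar\b}=\V^T\C_{\a\b}\V$, since $(\V^T)^H=\V$ and $\V$ is real. Because $\V$ is real, conjugation and inversion commute with this congruence, e.g. $(\Gammamat_{\bar\y\bar\y}^{-1})^*=\V^T(\Gammamat_{\y\y}^{-1})^*\V$ and $\C_{\bar\y\bar\y}^*=\V^T\C_{\y\y}^*\V$; substituting these into the Schur-complement definition \eqref{Pyy} gives $\P_{\bar\y\bar\y}=\V^T\P_{\y\y}\V$, hence $\P_{\bar\y\bar\y}^{-1}=\V^T\P_{\y\y}^{-1}\V$. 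Telescoping the intermediate $\V\V^T=\I$ factors in the products then yields $\V^T\hat{\H}_1\V=(\Gammamat_{\bar\x\bar\y}-\C_{\bar\x\bar\y}(\Gammamat_{\bar\y\bar\y}^{-1})^*\C_{\bar\y\bar\y}^*)\P_{\bar\y\bar\y}^{-1}$ and, analogously, $\V^T\hat{\H}_2\V=(\C_{\bar\x\bar\y}-\Gammamat_{\bar\x\bar\y}\Gammamat_{\bar\y\bar\y}^{-1}\C_{\bar\y\bar\y})(\P_{\bar\y\bar\y}^{-1})^*$. Thus the WLMMSE estimator, viewed in the graph frequency domain, is a widely-linear map $\hat{\bar\x}^{\text{WLMMSE}}=\Mmat_1^{\text{W}}\bar\y+\Mmat_2^{\text{W}}\bar\y^*$ whose (generally non-diagonal) coefficients $\Mmat_1^{\text{W}},\Mmat_2^{\text{W}}$ are precisely the left-hand sides of the two stated conditions.

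The final step closes the equivalence by uniqueness of the widely-linear representation. Since $\V$ is invertible, $\hat\x^{\text{WLMMSE}}=\hat\x^{(\text{GSP-WLMMSE})}$ holds if and only if $\hat{\bar\x}^{\text{WLMMSE}}=\hat{\bar\x}^{(\text{GSP-WLMMSE})}$. Sufficiency is then immediate: if $\Mmat_1^{\text{W}}=\Mmat_1^{\text{G}}$ and $\Mmat_2^{\text{W}}=\Mmat_2^{\text{G}}$, the two graph-frequency maps are literally identical. For necessity, I would note that a widely-linear map $\bar\y\mapsto\Mmat_1\bar\y+\Mmat_2\bar\y^*$ determines the pair $(\Mmat_1,\Mmat_2)$ uniquely: evaluating on real $\bar\y$ pins down $\Mmat_1+\Mmat_2$, and on purely imaginary $\bar\y$ pins down $\Mmat_1-\Mmat_2$, so $\Mmat_1$ and $\Mmat_2$ are separately forced to agree. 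Hence the estimators coincide exactly when $\Mmat_1^{\text{W}}=\Mmat_1^{\text{G}}$ and $\Mmat_2^{\text{W}}=\Mmat_2^{\text{G}}$, which are conditions \eqref{coincides_Appendix_to_prove_1} and \eqref{coincides_Appendix_to_prove_2}.

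I expect the main obstacle to be the matrix bookkeeping in the second step: pushing the real orthogonal $\V$ through the nested conjugates, inverses, and the Schur complement inside $\hat{\H}_1,\hat{\H}_2$ without sign, transpose, or ordering errors, and in particular verifying that the $(\cdot)^*$ and $(\cdot)^{-*}$ operations commute with congruence by $\V$ (which relies crucially on $\V$ being \emph{real}). By comparison, the diagonalization of the GSP side and the uniqueness argument are routine, and the non-singularity hypotheses on $\Gammamat_{\bar\y\bar\y}$, $\P_{\bar\y\bar\y}$, $\D_{\bar\y\bar\y}^{\Gammamat}$, and $\D_{\bar\y\bar\y}^\P$ are exactly what is needed for all the inverses appearing in the two coefficient matrices to be well defined.
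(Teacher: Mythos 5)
Your proof is correct and follows essentially the same route as the paper's: both match the coefficient matrices of $\y$ and $\y^*$ in the two estimators and pass between the vertex and graph-frequency domains via the congruence $\V^T(\cdot)\V$, using that $\V$ is real and orthogonal so that conjugation, inversion, and the Schur complement all commute with this transformation. The only difference is cosmetic --- the paper states the matching conditions first in the vertex domain and then transforms them, and it does not need your uniqueness argument since the theorem only asserts sufficiency.
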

\begin{proof}
By comparing  \eqref{WLMSE_est} and  \eqref{opt_WLMMSE_GSP2}, 
it can be verified that the GSP-WLMMSE estimator coincides with the WLMMSE estimator for any observation vector $\y$ if the second-order statistics of $\x$ and $\y$ satisfy
\beqna
\label{comp1}
(\Gammamat_{\x\y}-{\C}_{\x\y}(\Gammamat_{\y\y}^{-1})^*\C_{\y\y}^*)\P_{\y\y}^{-1} \hspace{3cm}
   \nonumber\\
       =
    \V(
    \D_{\bar\x\bar\y}^{\Gammamat}  - 
   \D_{\bar\x\bar\y}^{\C}
     (\D_{\bar\y\bar\y}^{\Gammamat})^{-1} (\D_{\bar\y\bar\y}^{\C} )^*)( \D_{\bar\y\bar\y}^\P )^{-1}\V^T
    \eeqna
    and
    \beqna
    \label{comp2}
    (\C_{\x\y}-\Gammamat_{\x\y}\Gammamat_{\y\y}^{-1}{\C}_{\y\y})(\P_{\y\y}^{-1})^*\hspace{3cm}
        \nonumber\\=
    \V
    (
    (\D_{\bar\y\bar\x}^{\C}
    -
    \D_{\bar\y\bar\y}^{\C}(\D_{\bar\y\bar\y}^{\Gammamat})^{-*}(\D_{\bar\y\bar\x}^{\Gammamat} )^*
    )
    ( \D_{\bar\y\bar\y}^\P )^{-1}.
\eeqna
By right and left multiplication of \eqref{comp1} and \eqref{comp2} by $\Vmat^T$ and $\Vmat$, respectively, and using $\Vmat^T\Vmat=\Vmat\Vmat^T=\Imat$  and the GFT definition in \eqref{GFT_def}, the conditions in \eqref{comp1} and \eqref{comp2} can be rewritten as in \eqref{coincides_Appendix_to_prove_1} and \eqref{coincides_Appendix_to_prove_2}, respectively.
\end{proof}
A sufficient condition for \eqref{coincides_Appendix_to_prove_1} and \eqref{coincides_Appendix_to_prove_2} to hold is that the matrices 
 $\Gammamat_{\bar\y\bar\y}$, $\Gammamat_{\bar\x\bar\y}$,
 ${\C}_{\bar\x\bar\y}$, $\C_{\bar\y\bar\y}$, and $\P_{\bar\y\bar\y}$
   are diagonal matrices. 
In the following theorem we present a special case for which these matrices are diagonal, and thus,
the GSP-WLMMSE estimator coincides with the WLMMSE estimator.
\begin{Theorem}\label{claim_graphical_Model}
The GSP-WLMMSE estimator coincides with the WLMMSE estimator if 
the measurement vector is the output of  two linear graph filters,
$ h_1(\cdot)$ and $ h_2(\cdot)$:
   \be
    \label{g_filter}
    \y = \Vmat h_1(\ve\lambda)\Vmat^T \x
    +\Vmat h_2(\ve\lambda)\Vmat^T \x^*
    +\n,
    \ee
and  $\Cmat_{\x\x}$ and  $\Gammamat_{\x\x}$ are  diagonalizable by the eigenvector matrix of the Laplacian, $\Vmat$, i.e. $\Cmat_{\bar{\x}\bar{\x}}$ and $\Gammamat_{\bar{\x}\bar{\x}}$ are diagonal matrices.
The noise term $\n$  is assumed to be a zero-mean vector that is also  uncorrelated in the graph frequency domain, i.e. $\Cmat_{\bar{\n}\bar{\n}}$ is a diagonal matrix. In addition, $\x$ and $\n$ are assumed to be statistically independent.
\end{Theorem}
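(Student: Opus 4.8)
The plan is to reduce the claim to the sufficient condition stated just before the theorem, namely that it suffices to show that $\Gammamat_{\bar\y\bar\y}$, $\Gammamat_{\bar\x\bar\y}$, $\C_{\bar\x\bar\y}$, $\C_{\bar\y\bar\y}$, and $\P_{\bar\y\bar\y}$ are all diagonal, since diagonality makes each ddiag in \eqref{coincides_Appendix_to_prove_1}--\eqref{coincides_Appendix_to_prove_2} act trivially and the two sides coincide. First I would push the measurement model \eqref{g_filter} into the graph frequency domain: left-multiplying by $\V^T$, using $\V^T\V=\I$, and using that $\V$ is real (so $\V^T\x^{*}=(\V^T\x)^{*}=\bar\x^{*}$), I obtain
\begin{equation}
\bar\y = h_1(\ve\lambda)\,\bar\x + h_2(\ve\lambda)\,\bar\x^{*} + \bar\n,
\end{equation}
where $h_1(\ve\lambda)$ and $h_2(\ve\lambda)$ are diagonal and $\bar\n=\V^T\n$.

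Next I would compute the four frequency-domain (augmented) covariances directly from this representation, writing $h_i$ for $h_i(\ve\lambda)$. Since $\x$ and $\n$ are independent and zero-mean, all $\x$--$\n$ cross terms vanish, giving $\Gammamat_{\bar\x\bar\y}=\Gammamat_{\bar\x\bar\x}h_1^H+\C_{\bar\x\bar\x}h_2^H$ and $\C_{\bar\x\bar\y}=\C_{\bar\x\bar\x}h_1^T+\Gammamat_{\bar\x\bar\x}h_2^T$, while $\Gammamat_{\bar\y\bar\y}=\EX[\bar\y\bar\y^H]$ and $\C_{\bar\y\bar\y}=\EX[\bar\y\bar\y^T]$ each expand into a sum of terms of the shape $h_i\,(\cdot)\,h_j^{H/T}$ with $(\cdot)$ ranging over $\{\Gammamat_{\bar\x\bar\x},\C_{\bar\x\bar\x},\Gammamat_{\bar\x\bar\x}^{*},\C_{\bar\x\bar\x}^{*}\}$, plus the noise contribution $\Gammamat_{\bar\n\bar\n}$ or $\C_{\bar\n\bar\n}$. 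By the hypothesis that $\Gammamat_{\x\x}$ and $\C_{\x\x}$ are diagonalized by $\V$, the matrices $\Gammamat_{\bar\x\bar\x}$ and $\C_{\bar\x\bar\x}$ are diagonal; together with the diagonal filters $h_i$ and the diagonal noise statistics, every term is a product of diagonal matrices and is therefore diagonal, so all four covariances are diagonal.

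Finally, from the definition \eqref{Pyy} of the Schur complement, $\P_{\bar\y\bar\y}$ is built out of products and inverses of $\Gammamat_{\bar\y\bar\y}$ and $\C_{\bar\y\bar\y}$, which preserve diagonality; hence $\P_{\bar\y\bar\y}$ is diagonal as well. With all five matrices diagonal, the sufficient condition preceding the theorem applies and the GSP-WLMMSE estimator coincides with the WLMMSE estimator. The only delicate point I anticipate is bookkeeping of the noise: making $\Gammamat_{\bar\y\bar\y}$ diagonal requires $\Gammamat_{\bar\n\bar\n}$ diagonal, whereas $\C_{\bar\y\bar\y}$ requires $\C_{\bar\n\bar\n}$ diagonal, so the assumption that $\n$ is ``uncorrelated in the graph frequency domain'' must be read in the augmented sense, yielding both the noise covariance and its complementary covariance diagonal in the GFT basis. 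I expect this clarification of the augmented noise hypothesis, rather than any individual algebraic manipulation, to be the main thing to get right.
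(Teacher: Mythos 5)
Your proposal is correct and follows essentially the same route as the paper's proof in Appendix F: transform the model to the graph frequency domain, expand $\Gammamat_{\bar\y\bar\y}$, $\C_{\bar\y\bar\y}$, $\Gammamat_{\bar\x\bar\y}$, and $\C_{\bar\x\bar\y}$ into sums of products of diagonal matrices, conclude the Schur complement is diagonal, and invoke the sufficient condition of Theorem~\ref{claim_coincides}. Your remark about the noise hypothesis is well taken: the paper's proof indeed uses both $\Gammamat_{\bar\n\bar\n}$ and $\C_{\bar\n\bar\n}$ diagonal even though the theorem statement only names $\C_{\bar\n\bar\n}$, so reading the assumption in the augmented sense, as you do, is exactly what is needed.
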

\begin{IEEEproof}
The proof is given in Appendix \ref{graphical_Model_Appendix}.
\end{IEEEproof}
%%%https://ieeexplore.ieee.org/stamp/stamp.jsp?tp=&arnumber=403373 Eqs. 3.16-3.17
An interesting case is the noiseless case where  $ h_1(\ve\lambda)=\zerovec$, and the measurement model from   \eqref{g_filter}
is reduced to
\be
    \label{g_filter2}
    \y =
    \Vmat h_2(\ve\lambda)\Vmat^T \x^*.
    \ee
    In addition, it is  assumed that
$h_2(\ve\lambda)$ is 
invertible
 such that $h_2^{-1}(\ve\lambda)
     h_2(\ve\lambda)=\Imat $. 
In Appendix \ref{new_appendix}, it is shown that, for this case, the GSP-WLMMSE and GSP-LMMSE estimators are 
\begin{equation}
    \label{new_eq}
      \hat{\x}^{(\text{GSP-WLMMSE})}   =   
    \V  
   h_2^{-1}(\ve\lambda)
    \V^T\y^*
\end{equation}
and
 \begin{equation}
\label{opt_LMMSE_GSP1th_4}
    \hat{\x}^{(\text{GSP-LMMSE})}  =   \Vmat     \Dmat_{\bar{\x}\bar{\x}}^\Cmat ( 
    \D_{\bar{\x}\bar{\x}}^\Gammamat )^{-1}h_2^{-1}(\ve\lambda)\bar\y.
\end{equation}
By substituting  \eqref{g_filter2} in \eqref{new_eq}, it can be seen that in this case
\begin{equation}
    \label{new_eq2}
      \hat{\x}^{(\text{GSP-WLMMSE})}   =   
    \V  
   h_2^{-1}(\ve\lambda)
    \V^T(\Vmat h_2(\ve\lambda)\Vmat^T \x^*)^*
    =\x.
\end{equation}
That is,
the GSP-WLMMSE estimator provides a
zero-error estimation in this case.
On the other hand, by observing \eqref{opt_LMMSE_GSP1th_4} it can be seen that we obtain a zero-error estimator only if $\Dmat_{\bar{\x}\bar{\x}}^\Cmat=  
    \D_{\bar{\x}\bar{\x}}^\Gammamat $, i.e. only for a real vector $\x$, where $\x=\x^*$.
Thus, for this case with complex-valued signal $\x$, the GSP-LMMSE estimator in \eqref{opt_LMMSE_GSP1th_4}  provides a nonzero estimation error, and thus is less attractive than the GSP-WLMMSE estimator. 

%------------------ Simulations
\section{Numerical Examples}
\label{simulation_sec}

In this section,  we evaluate the  performance of the sample-mean versions of the  LMMSE, WLMMSE, GSP-LMMSE, and GSP-WLMMSE estimators, where the covariance matrices are replaced by their sample-mean versions (see Subsection \ref{sample_version}). These estimators are denoted here as sLMMSE, sWLMMSE, sGSP-LMMSE, and sGSP-WLMMSE estimators. 
In addition, given $M$ Monte-Carlo trials, the empirical MSE of an estimator is computed as
$
    \text{MSE} = \frac{1}{M} \sum_{m=1}^{M}\|\hat\x_m-\x_m\|^2$, 
where $\x_m$ is the vector at the $m$th trial, and $\hat\x_m$ is its estimate.   
In all simulations, we use $M=10,000$.  

%------------------------------------------------
\subsection{Example 1: Linear Graph-Filter-Based  System}
\label{Ex1_subsec}
In this synthetic example,
we consider a random graph with $N=100$ nodes. 
The considered topology is shown in Fig. \ref{fig:graph}.
\begin{figure}[h]
 	\centering
 	\centerline{\includegraphics[width=0.35\columnwidth]{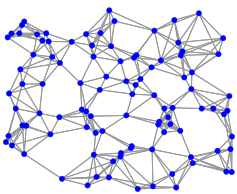}}
 	\caption{The considered network of Example 1 with $N=100$ sensors and an average of 350 edges.}
 	\label{fig:graph}
\end{figure}

The unknown graph signal $\x$ is given by \cite[p. 45]{AL10}
\begin{equation}
    \x = \sqrt{1-\eta^2}\ve\epsilon_r + j \eta \ve\epsilon_i,
\end{equation}
where $\ve\epsilon_r$ and $\ve\epsilon_i$ are two uncorrelated real-valued Gaussian random vectors with zero mean and unit variance. Thus, the covariance matrix and the complementary covariance matrix of $\x$ are $\Gammamat_{\x\x}=\I_N$ and $\C_{\x\x} = (1-2\eta^2)\I_N$, respectively.  The value of $\eta \in [0,1]$ controls the degree of non-circularity of the  vector $\x$, where for $\eta=1/\sqrt{2}$, $\x$ becomes circular.

The observed graph signal in this example is    
\begin{equation}\label{eq:sim lin filter}
    \y = \V \psi(\ve\Lambda) \V^T\x + \n,
\end{equation}
where $\n$ is a zero-mean proper Gaussian distributed  graph signal with a covariance matrix $\sigma^2 \I_N$ that is independent of $\x$, and $\psi(\ve\Lambda)$  is a graph filter, as defined in \eqref{graph_filter_def}.

It can be seen that the measurement model in
\eqref{claim_graphical_Model} is a special case of the model from \eqref{g_filter} with $ h_1=\psi$ and $ h_2=0$. In addition, since $\Cmat_{\x\x}$ and  $\Gammamat_{\x\x}$ are diagonal matrices, they are  diagonalizable by the Laplacian eigenvector matrix, $\Vmat$. Thus, the conditions of Theorem \ref{claim_graphical_Model} hold, and in this case, the  GSP-WLMMSE estimator coincides with the WLMMSE estimator. 
%The 
%matrix $\H \in \mathbb{R}^{N\times N}$ is a graph filter given by
%\begin{equation}
%\label{ARMA1}
%    \H = \V \psi(\ve\Lambda) \V^T,
%\end{equation}
%where 

The theoretical covariance  and complementary covariance matrices of  $\bar\x$ and $\bar\y$ {
in the graph frequency domain are
 all diagonal matrices, i.e. $\Gammamat_{\bar\x\bar\y}=\D_{\bar\x\bar\y}^{\Gammamat} $, $\Gammamat_{\bar\y\bar\y}=\D_{\bar\y\bar\y}^{\Gammamat} $, 
$\C_{\bar\x\bar\y}=\D_{\bar\x\bar\y}^{\C} $, and $\C_{\bar\y\bar\y}=\D_{\bar\y\bar\y}^{\C}$.
where their diagonal entries  are
\begin{eqnarray}
\label{matrices_ex1}
  && [\Gammamat_{\bar\x\bar\y}]_{n,n} = \psi^*(\lambda_n),~ [\Gammamat_{\bar\y\bar\y}]_{n,n} = |\psi(\lambda_i)|^2 (1 + \rho_n), \nonumber \\&&
  ~ [\C_{\bar\x\bar\y}]_{n,n} = (1-2\eta^2)\psi(\lambda_n),\nonumber \\&&~ [\C_{\bar\y\bar\y}]_{n,n} =(1-2\eta^2)\psi^2(\lambda_n) , 
\end{eqnarray}
where we use the fact that $\V$ is a real unitary matrix. 
The parameter $\rho_n = \sigma^2/|\psi(\lambda_n)|^2$ can be interpreted as the \ac{snr} at the $n$th frequency of the graph signal.
 After a few algebraic steps, it can be shown 
that the GSP-WLMMSE graph filters according to \eqref{f1_paper} and \eqref{f2_paper} are 
\begin{eqnarray}
\label{f1_ex1}
    [\hat\f_1(\ve\lambda)]_n &=& \frac{1}{\psi(\lambda_n)} \frac{1}{1+\rho_n}  \frac{1-\frac{ (1-2\eta^2)^2}{1+\rho_n}}{1-(\frac{1-2\eta^2}{1+\rho_n})^2} \\
    \label{f2_ex1}
    \,[\hat\f_2(\ve\lambda)]_n &=& \frac{1}{\psi^*(\lambda_n)}\frac{1-2\eta^2}{1+\rho_n}  \frac{1-\frac{1}{1+\rho_n}}{1-(\frac{ 1-2\eta^2 }{1+\rho_n})^2},
\end{eqnarray}
while the filter of the GSP-LMMSE according to \eqref{opt_LMMSE_GSP1} is 
\begin{equation}
\label{f_ex1}
[\hat\f(\ve\lambda)]_n = \frac{1}{\psi(\lambda_n)} \frac{1}{1+\rho_n}.    
\end{equation}

It can be seen that for a proper graph signal, i.e. when $\eta=1/\sqrt{2}$, then $[\hat\f_1(\ve\lambda)]_n = [\hat\f(\ve\lambda)]_n$ and, as expected, $[\hat\f_2(\ve\lambda)]_n = 0$. Moreover, 
if the noise level is low, i.e. $\rho_n \cong 0$, then $[\hat\f_1(\ve\lambda)]_n = [\hat\f(\ve\lambda)]_n$ and $[\hat\f_2(\ve\lambda)]_n = 0$, for any $\eta\in[0,1]$.
Thus, there is no benefit in using the second filter of the GSP-WLMMSE in the noiseless case even if the graph signal is improper. The reason for this is that in the noiseless case, $[\hat\f_1(\ve\lambda)]_n$ perfectly equalizes the graph frequency response such that the reconstructed graph signal $\x$  is perfect, and thus the second filter is redundant. However, in general, by using \eqref{f1_ex1}-\eqref{f_ex1} and Theorem \ref{Theorem5}, the difference between the \ac{mse}s of the GSP-WLMMSE and the GSP-LMMSE estimator is
\begin{equation}
\label{var_eps_ex1}
    \varepsilon_{GSP-L}^2 - \varepsilon_{GSP-WL}^2
   =
     \sum_{n=1}^{N}\frac{(1-2\eta^2)^2}{1+\rho_n}  \frac{(1-\frac{1}{1+\rho_n})^2}{1-(\frac{ 1-2\eta^2 }{1+\rho_n})^2},
\end{equation}
where $\varepsilon_{GSP-L}^2 = \sum_{n=1}^{N} \rho_n/(1+\rho_n)$. For a proper graph signal, the \ac{mse}s of both filters are identical. However, in the extreme case when $\eta=0$, i.e. $\x =  \ve\epsilon_r$ or $\eta=1$, i.e. $\x = j\ve\epsilon_i$, then $\varepsilon_{GSP-WL}^2 =  \sum_{n=1}^{N} \rho_n/((1+\rho_n)(2+\rho_n))$.

In the simulations,  $\psi(\ve\Lambda)$ is chosen to be an 
ARMA graph filter, as in \eqref{ARMA2}, where the 
filter coefficients,
 $\{c_p\}_{p=0}^{P-1}$ and $\{a_q\}_{q=1}^Q$, are real-valued parameters, chosen randomly, and $P=Q=3$ are the filter orders. 
The values in \eqref{matrices_ex1} are used to calculate the theoretical MSEs of the theoretical  LMMSE, WLMMSE, GSP-LMMSE, and  GSP-WLMMSE estimators (i.e. with the true statistical covariance matrices) given by the mathematical terms 
$ \varepsilon_{L}^2$, $\varepsilon_{WL}^2$, 
$ \varepsilon_{GSP-L}^2$, and 
$\varepsilon_{GSP-WL}^2$
from \eqref{MSE_LMMSE}, \eqref{MSE_WLMSE}, \eqref{GSP_L_app}, and \eqref{GSP_WL_app2}, respectively.
These theoretical MSEs   and
the empirical MSEs of the sample-mean versions, the  sLMMSE, sWLMMSE, sGSP-LMMSE, and  sGSP-WLMMSE estimators,
are presented in Fig. \ref{fig:linear filter}  for $\eta=0.1,0.2,0.3,0.4$  versus the size of the training data of the sample-mean estimators,  $\Ntraining$ (see \eqref{one_sample}-\eqref{one_sample3}).
\begin{figure}[hbt]
 	\centering
 	\centerline{\includegraphics[width=0.7\columnwidth]{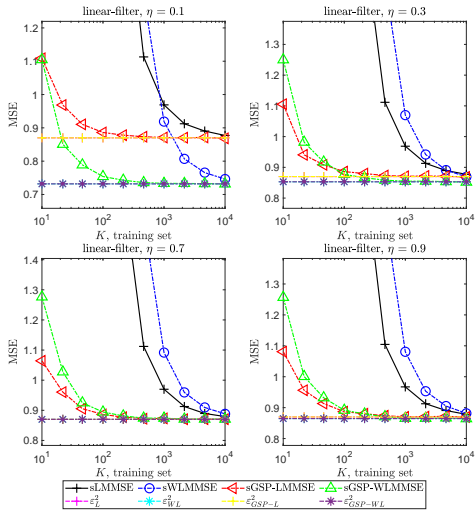}}
 	\caption{The theoretical MSEs, $ \varepsilon_{L}^2$, $\varepsilon_{WL}^2$, 
$ \varepsilon_{GSP-L}^2$, and 
$\varepsilon_{GSP-WL}^2$, and 
 	the 
 	empirical MSEs of  the sLMMSE, sWLMMSE, sGSP-LMMSE, and sGSP-WLMMSE estimators,
 	for the linear system of Example 1 versus the size of the training data,  $K$, for  $\eta=0.1,0.3,0.7$ and $0.9$.}
 	\label{fig:linear filter}
\end{figure}

A few insights   from this  experiment are as follows: \\
\renewcommand{\theenumi}{\arabic{enumi}}
    1) Since the MSEs of the LMMSE, WLMMSE, GSP-LMMSE, and  GSP-WLMMSE estimators, given by  $ \varepsilon_{L}^2$, $\varepsilon_{WL}^2$, $ \varepsilon_{GSP-L}^2$, and $\varepsilon_{GSP-WL}^2$, respectively, are based on the true, theoretical values of the covariance matrices and do not use the training data, their performance is independent of $\Ntraining$. It can be seen that  for a large enough $\Ntraining$, the empirical MSEs of the estimators coincide with the associated theoretical MSEs. Further, where $K$ is small ($K < 1,000$), the MSEs of both the sLMMSE and the sWLMMSE estimators diverge. This occurs as the number of unknowns of both estimators is approximately $N^2$, and therefore one should have $K > N^2 \approx 10,000$ to obtain a reliable non-parametric estimate of the entries of the sample covariance matrices defining the estimators. In fact, it is known in the literature that a reliable covariance matrix estimation is obtained if the number of samples is at least two to three times the matrix dimension  \cite{Reed1974Rapid}.\\
    2) On the other hand, as the sGSP-LMMSE and the sGSP-WLMMSE estimators are diagonals, we only need about $K \approx N$  to obtain reliable estimates. Thus,  the sGSP-WLMMSE and the sGSP-LMMSE estimators outperform the LMMSE and WLMMSE estimators  for low values of $K$, for any $\eta$, $N$.
    %Still, for  $K \rightarrow \infty$, this result will be reversed. 
    \\
   3) As the graph signal $\x$ becomes more improper, i.e. as $\eta$ decreases, the MSE of the sGSP-WLMMSE estimator becomes lower than the MSE of the sGSP-LMMSE estimator in accordance with the  theoretical study in Subsection \ref{discussion_subsec}.

We also evaluated the MSE of the different estimators as a function of the non-circularity coefficient, $\eta$, for $\eta\in[0,1]$, and for the sizes of the training data  $K=100$ and $K=10,000$. The results are shown in Fig. \ref{fig:mse vs eta}. Again, there are a few points we can conclude from this  experiment: \begin{figure}[h]
 	\centering
 	\centerline{\includegraphics[width=0.68\columnwidth]{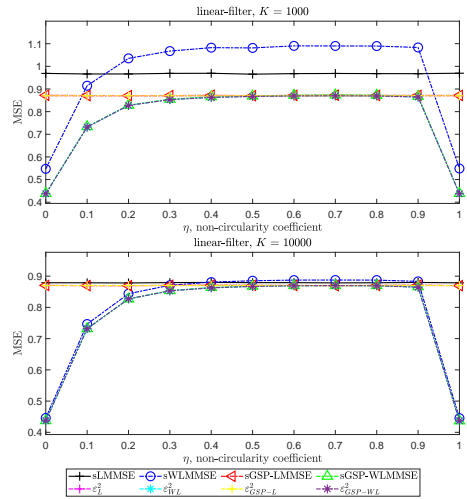}}
 	\caption{The theoretical MSEs, $ \varepsilon_{L}^2$, $\varepsilon_{WL}^2$, 
$ \varepsilon_{GSP-L}^2$, and 
$\varepsilon_{GSP-WL}^2$, and 
 	the 
 	empirical MSEs of  the sLMMSE, sWLMMSE, sGSP-LMMSE, and sGSP-WLMMSE estimators
 	for the linear system of Example 1 versus the non-circularity coefficient, $\eta$, and for $K=1,000,10,000$.}
 	\label{fig:mse vs eta}
\end{figure} 
\\
%\begin{enumerate}
    1) Interestingly, for a small size of the training data, $K$, the empirical MSE of the sLMMSE estimator outperforms the MSE of the sWLMMSE estimator for almost any value of $\eta$. This occurs, as discussed above in Subsection \ref{sample_version}, due to the fact that the number of unknowns in the sWLMMSE estimator is larger than that of the sLMMSE estimator, and thus, the sLMMSE estimator provides a more reliable estimate compared to the WLMMSE estimator for a small  size of the training data.
    This is in contrast with the theoretical MSEs of the WLMMSE and LMMSE estimators  \cite{AL10}. Thus, this result emphasizes the need for  widely-linear estimators, but with fewer parameters to estimate, such as the GSP estimators.\\
    2) The empirical MSEs of the GSP estimators (GSP-LMMSE and  GSP-WLMMSE estimators) outperform the empirical MSEs of the conventional estimators (LMMSE and  WLMMSE estimators) for any value of $\eta$  for $K=1,000$ (small training set) and have similar empirical MSEs to their associated conventional estimators for $K=10,000$ (large training set). \\
   3) The MSE of the GSP-WLMMSE estimator is much lower than that of the GSP-LMMSE estimator for any value of the non-circularity coefficient, $\eta$, for both the theoretical and the empirical values. These MSEs almost coincide for  $\eta \approx 0.7$. This result can be explained by the fact that for this value the graph signal becomes almost proper, and there is no statistical benefit in using the complementary covariance matrix (see the theoretical discussion in Subsection \ref{proper_sub}).

\subsection{Example 2: Power System State Estimation (PSSE)}
\label{seeting_sec}
A power system can be represented as an undirected weighted graph, ${\pazocal{G}}({\pazocal{V}},\xi)$, where the set of vertices, $\pazocal{V}$, is the set of buses (generators or loads) and the edge set, $\xi$, is the set of transmission lines between these buses. 
The system (nodal) admittance matrix  is a $N\times N$ complex
symmetric matrix, where its
 $(m,k)$ element 
  is given by
 (p. 97 in \cite{Monticelli1999})
 \setlength{\belowdisplayskip}{5pt} \setlength{\belowdisplayshortskip}{5pt}
\setlength{\abovedisplayskip}{5pt} \setlength{\abovedisplayshortskip}{5pt}
\begin{equation}
\label{equ:YForm}
\displaystyle [\Ymat]_{m,k} = \begin{cases}
	\displaystyle -\sum_{n\in{\mathcal{N}}_m }  y_{m,n},& m = k \\
  \displaystyle  y_{m,k},& m\neq k,~(m, k) \in \xi 
\end{cases},
\end{equation}
and $0$ otherwise,  where  $y_{m,k}\in {\mathbb{C}}$ is the admittance of line $(m,k)$. 
The  admittance matrix from \eqref{equ:YForm}
satisfies
\be
\label{Y_def}
\Ymat=\G + j\B,
\ee
 where  the 
 conductance matrix, $\G$, and the  minus of the susceptance matrix, $-\B$, are  $N\times N$ real Laplacian matrices.
 
The commonly-used nonlinear AC power flow model \cite{Monticelli1999,Giannakis_Wollenberg_2013}, which is based on 
 Kirchhoff’s and Ohm’s laws, describes the relation between   the complex power injection and  voltage phasors. 
 According to this model,
the measurement vector of the active and reactive powers 
can be described by 
\be
\label{noisy_model}
\y={\text{diag}}(\x)\Ymat^{*}\x^{*}
+ \n,
\ee
where 
$\y \in {\mathbb{C}}^N$ is the power vector,
$\x\in {\mathbb{C}}^N$  is the voltage vector, and
the noise sequence, $\n$, is a   complex circularly symmetric Gaussian i.i.d. random vector with zero mean  and a known covariance matrix $ \sigma^2\I_N$, and is independent of $\x$.
In the graph modeling of the electrical network,
the Laplacian matrix, $\Lmat$, is usually constructed by using $-\B$ \cite{Grotas_Routtenberg_2019,dabush2021state}.
The goal of PSSE is to recover the state vector, $\x$, from the power measurements  $\y$ \cite{bienstock2019strong}, which is an NP-hard problem. 
% shown to be smooth \cite{drayer2018detection,dabush2021state}, i.e. its graph smoothness \cite{Shuman_Ortega_2013,8347162}, is small.
% Therefore, we model the distribution of the input graph signal, $\x$, in the graph frequency domain \cite{Dong_Vandergheynst_2016,ramezani2019graph}, as a smooth Gaussian distribution, as follows:
% \begin{equation} \label{theta_distribution_simulation}
% 		\bar{\x}_{\text{2$\colon$end}} \sim \pazocal{N}(\zerovec,\beta \Lambdamat^{-1}_{\text{2$\colon$end},\text{2$\colon$end}}),
% \end{equation}
% where $\beta$ is a  smoothness level.
% The smooth distribution of $\bar{\x}$ from \eqref{theta_distribution_simulation} implies in particular that the first graph frequency of $\x$ satisfies $\bar{x}_1 =0$.

The graph signal $\x$ is assumed to be given by
\begin{equation}\label{eq:x nonlin}
    \x = (1 + 0.1\ve\mu) \odot e^{j\ve\phi},
\end{equation}
where $\ve\mu \in \mathbb{R}^N$ is a zero-mean Gaussian  vector with covariance $\I_N$. In addition, $\ve\phi \in\mathbb{R}^N$ is a vector with i.i.d. entries, in which the $n$th entry is distributed according to $ \phi(n) \sim \mathcal{U}(0,\theta)$, where $\theta$ is the upper limit of the phase interval. Clearly, as $\theta$ (measured in radians) becomes closer to $2\pi$, the graph signal $\x$ becomes proper. This signal structure is used as it resembles the common voltage signal in AC power networks.

Since the optimal estimators and their  MSEs are intractable for this nonlinear case,  only the empirical MSEs of the sample-mean versions of the estimators (see Subsection \ref{sample_version}) are presented in the following results. 
The values of the physical parameters in \eqref{noisy_model} are taken from the test case of a 118-bus IEEE power system \cite{iEEEdata}, where $N=118$.
For a given $K$ and  a given phase limit, $\theta$,  
we generate the sLMMSE, sWLMMSE, sGSP-LMMSE, and  sGSP-WLMMSE estimators.  The empirical MSEs of the  sLMMSE, sWLMMSE, sGSP-LMMSE, and sGSP-WLMMSE estimators for the PSSE problem are presented  in Fig. \ref{fig:ieee118}  for different values of $\theta$ and  $K$.
\begin{figure}[h]
 	\centering
{\includegraphics[width=0.7\columnwidth]{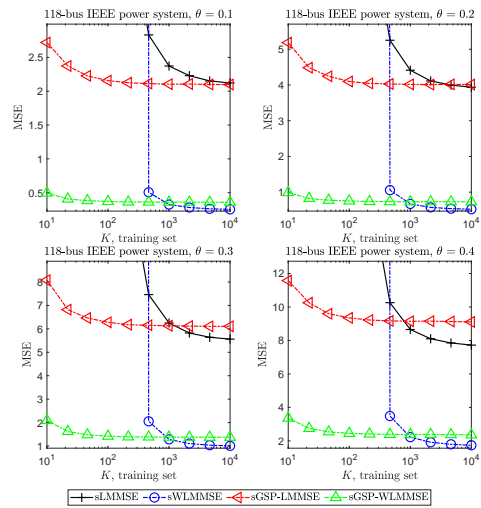}}
 	\caption{The empirical MSEs of the sLMMSE, sWLMMSE, sGSP-LMMSE, and sGSP-WLMMSE estimators versus $K$ for Example 2 with the 118-bus IEEE power system, for $\theta=0.1,0.2,0.3$, and $0.4$.}
 	\label{fig:ieee118}
\end{figure}

Here are some points we can conclude from this experiment: \\
%\begin{enumerate}
    1) The MSEs of the sWLMMSE and the sGSP-WLMMSE estimators are lower than the MSEs of the sLMMSE and the sGSP-LMMSE estimators, especially for a large dataset (at least $K>30)$. Still, for a small $K$, the MSEs of the sLMMSE and the sWLMMSE estimators diverge and become worse than those of the widely-linear  (sGSP-LMMSE and  sGSP-WLMMSE) estimators. This size of the training set is in accordance with the conditions detailed in \cite{Reed1974Rapid} that ensure that the empirical covariance matrix is a reliable estimate of the theoretical covariance matrix.\\
    2) For any value of the phase $\theta$ and any $K$, the MSE of the sGSP-WLMMSE estimator  is lower than the MSEs of the sGSP-LMMSE and the sLMMSE estimators.\\
   3) Notwithstanding, the MSE of the sGSP-WLMMSE estimator is still larger than the MSE of the sWLMMSE estimator for any value of the maximal phase, $\theta$, and large $K$.    
%\end{enumerate}

We also examined the robustness of the estimators to model mismatch due to changes in the topology of the graph. We consider topology changes due to the removal or addition of edges in the graph. We assume that the noise standard deviation  is $\sigma=0.01$, and that the size of the training dataset is $K=1,000$. We also assume that the signal phase limit is $\theta=0.2$. The number of added or removed edges in the graph varies from 1 to 20. For each value of edges, we randomly selected the same number of nodes in the graph and removed (or added) a single edge from (to) their neighborhood. The MSE results are shown in Fig. \ref{fig:ieee118 robust}.  As can be seen, the sLMMSE estimator has the worst MSE performance in the case of edge removal (upper plot of Fig. \ref{fig:ieee118 robust}), while the sWLMMSE estimator has the worst MSE performance in the case of edge addition (lower plot of Fig. \ref{fig:ieee118 robust}). Both the sGSP-LMMSE and the sGSP-WLMMSE estimators are  robust and the change in their MSE is less affected by the changes in the graph topology compared with the sLMMSE and sWLMMSE estimators.
\begin{figure}[h]
 	\centering
{\includegraphics[width=0.65\columnwidth]{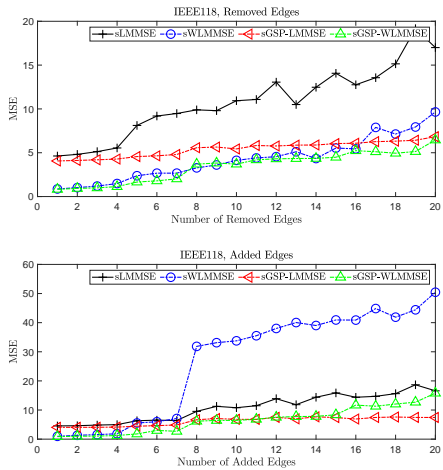}}
 	\caption{The empirical MSEs of the sLMMSE, sWLMMSE, sGSP-LMMSE, and sGSP-WLMMSE estimators versus the number of added/removed edges in the graph for $\theta=0.2$, and the size of  the training dataset  $K=1,000$.}
 	\label{fig:ieee118 robust}
\end{figure}

%------------------ Conclusions
\section{Conclusion}
\label{conc}
In this paper, we  show that taking into consideration widely-linear systems in networks instead of strictly linear ones can yield significant improvements in the estimation of complex-valued graph signals. The conventional LMMSE and WLMMSE filters on graph signals  may involve a heavy computational burden and ignores the graph structure behind the data. Therefore, herein we suggest a reduced-complexity approach where the processing is based on graph filtering. In particular, we develop the GSP-WLMMSE estimator, which minimizes the MSE among the subset of widely-linear estimators that are represented as an output of two graph filters.
%where the inputs are the observation vector and its conjugate.
It is shown that the GSP-WLMMSE estimator of {\em{improper}} graph signals also requires the complementary auto-correlation and cross-correlation matrices. We investigate the properties of the GSP-WLMMSE estimator, its behavior under some special cases, and the conditions for the GSP-WLMMSE estimator to coincide with the WLMMSE estimator or with the GSP-LMMSE estimator. If the distributions of the graph signal and the observations are intractable, the sample-mean versions of the different estimators can be used. The diagonal structure of the sample-mean estimators, sGSP-LMMSE and sGSP-WLMMSE, in the graph frequency domain, bypasses the requirement for an extensive dataset in order to obtain stable sample-mean estimators, sLMSSE and sWLMMSE.

In simulations, we examine a synthetic example and the practical problem of state estimation in power systems. We show that for these problems, the proposed sample-mean version of the GSP-WLMMSE estimator achieves lower MSE than the linear estimators 
and the WLMMSE estimator for a limited training dataset. For sufficiently large datasets, the sGSP-WLMMSE estimator coincides with the sample-WLMMSE estimator, and they both outperform the linear estimators. Thus,   widely-linear GSP estimation can yield significant improvements in the MSE compared to strictly linear graph filters, when the circularity assumption does not hold.
Future work includes extensions to  nonlinear estimators,
implementation with graph filter parameterization to increase the robustness to   topology changes \cite{kroizer2021bayesian}, and investigations of the advantages of distributed implementations of the proposed estimator.
In addition, the development of estimators for time-varying complex-valued graph processes in a nonlinear dynamic system is of high interest. This can be done by integrating results from {\em{real-valued}} GSP-Kalman filters \cite{Isufi2020,Routtenberg_Sagi_ICASSP}
 and widely-linear complex Kalman filters \cite{dini2012class}.

\vspace{-0.15cm}
%---------------------Appendix
\appendices
	\renewcommand{\thesectiondis}[2]{\Alph{section}:}
 \section{Proof of Claim \ref{new_claim}}
 \label{appA_GSP_LMMSE}
In this appendix, we develop the GSP-LMMSE estimator, which is  the estimator that minimizes the \ac{mse}  over the subset of linear GSP estimators in the form of \eqref{GSPlinear}. For the sake of simplicity, in this appendix we replace $\f(\ve\lambda)$ by $\f$.
By substituting \eqref{GSPlinear} in the \ac{mse} definition, one obtains
\begin{eqnarray}
\label{min_f_linear}
\EX[\|\hat{\x} - \x\|^2]=
     \EX[\|\V {\text{diag}}(\f(\ve\lambda)) \V^T \y  - \x)\|^2] 
     \nonumber\\
     =
     \EX[\|{\text{diag}}(\f) \bar\y  - \bar\x\|^2],
     \hspace{1.5cm}
\end{eqnarray}
where $\bar\x$ and  $\bar\y$ are the GFTs of $\x$ and  $\y$, respectively, as defined in \eqref{GFT_def}. We also use the fact that $\V$ is a unitary matrix, i.e. $\V^T\V=\V \V^T=\Imat$. 
Using the fact that $\f$ is assumed to be a deterministic vector, as well as the properties of multiplications of diagonal matrices, we get
\begin{eqnarray}
\label{min_f_linear2}
\EX[\|\hat{\x} - \x\|^2]
     =\f^H  \Dmat_{\bar{\y}\bar{\y}}^\Gammamat\f-\ve{1}^T\Dmat_{\bar{\y}\bar{\x}}^\Gammamat - \f^H(\Dmat_{\bar{\x}\bar{\y}}^\Gammamat)^H\ve{1}+\bar\x^H\bar\x
     ,
\end{eqnarray}
where $\Dmat_{\bar{\x}\bar{\y}}^\Gammamat$ and $\Dmat_{\bar{\y}\bar{\y}}^\Gammamat $ are defined in \eqref{d_def1}.
Since \eqref{min_f_linear2} is a convex function of $\f$, 
the linear optimal graph filter is obtained by equating the derivative of the \ac{mse} in \eqref{min_f_linear2}  w.r.t. $\f^H$ to zero (by using complex derivative rules, see e.g., Appendix 2 in \cite{schreier2010statistical}), which results in
\begin{equation}
\label{new_normal}
\Dmat_{\bar{\y}\bar{\y}}^\Gammamat\hat{\f} = (\Dmat_{\bar{\x}\bar{\y}}^\Gammamat)^H\ve{1},
\end{equation}
where we substitute the definitions from \eqref{d_def1}.
Under the assumption that $\Dmat_{\bar{\y}\bar{\y}}^\Gammamat$ is a non-singular matrix, we obtain that
\begin{equation}
\label{new_normal3}
\hat{\f}  =(\Dmat_{\bar{\y}\bar{\y}}^\Gammamat)^{-1}
    (\Dmat_{\bar{\x}\bar{\y}}^\Gammamat)^H\ve{1}= 
    (\Dmat_{\bar{\y}\bar{\y}}^\Gammamat)^{-1}\Dmat_{\bar{\y}\bar{\x}}^\Gammamat\ve{1}
    =\Dmat_{\bar{\y}\bar{\x}}^\Gammamat(\Dmat_{\bar{\y}\bar{\y}}^\Gammamat)^{-1}\ve{1},
\end{equation}
where the last equality is since the multiplication of diagonal matrices can be reordered. By substituting \eqref{new_normal3} in \eqref{GSPlinear} we obtain the GSP-LMMSE estimator in \eqref{opt_LMMSE_GSP1}.

\section{Proof of Theorem \ref{Theorem1}} \label{appA}
%----begining of Appendix----
In this appendix, we develop the GSP-WLMMSE estimator, which is  the estimator that minimizes the \ac{mse}  over the subset of widely-linear GSP estimators in the form of \eqref{WLMMSE_general_form}.
For the sake of simplicity, in this appendix we replace $\f_i(\ve\lambda)$ by $\f_i$, $i=1,2$.
By substituting \eqref{WLMMSE_general_form} in the \ac{mse} definition, one obtains
\begin{eqnarray}
\label{min_f1_f2}
\EX[\|\hat{\x} - \x\|^2]\hspace{5.5cm}
  \nonumber\\=
     \EX[\|\V({\text{diag}}(\f_1) \bar\y + {\text{diag}}(\f_2) \bar{\y}^*  - \V^T\x)\|^2]
     \nonumber\\
     =
     \EX[\|{\text{diag}}(\f_1) \bar\y + {\text{diag}}(\f_2) \bar{\y}^*  - \bar\x\|^2]
     ,\hspace{1.1cm}
\end{eqnarray}
where  $\bar\x$, $\bar\y$, and $\bar{\y}^*$ 
are the GFTs of $\x$, $\y$ and ${\y^*}$, respectively,
as defined in \eqref{GFT_def}. 
%It can be verified that $\bar{\y}^*=(\bar\y)^*$.
We also use the fact that $\V$ is a unitary matrix, i.e. $\V^T\V=\Imat$.
Using the properties of multiplications of diagonal matrices, the \ac{mse} in \eqref{min_f1_f2} can be written as 
    \begin{eqnarray}
    \label{epsilon1}
    \EX[\|\hat{\x} - \x\|^2]\hspace{6.25cm}\nonumber\\=
    \f_1^H {\text{ddiag}}(\EX[\bar\y^*\bar\y^T])\f_1 
    + \f_2^H {\text{ddiag}}(\EX[\bar\y\bar\y^H]) \f_2 
    \hspace{1.4cm}
    \nonumber\\
    + \f_2^H {\text{ddiag}}(\EX[\bar\y\bar\y^T]) \f_1
    + \f_1^H {\text{ddiag}}(\EX[\bar\y^*\bar\y^H])\f_2\hspace{1.3cm}
     \nonumber\\
    -\ve{1}^T {\text{ddiag}}(\EX[\bar\x^* \bar\y^T ]) \f_1 - \ve{1}^T {\text{ddiag}}(\EX[\bar\x^*\bar\y^H]) \f_2 \hspace{1.2cm}\nonumber\\
    - \f_1^H {\text{ddiag}}(\EX[\bar\y^*\bar\x^T])\ve{1} - \f_2^H {\text{ddiag}}(\EX[\bar\y\bar\x^T]) \ve{1}
     +\EX[ \bar\x^H\bar\x]     
    .
\end{eqnarray}
Since \eqref{epsilon1} is a convex function of the pair $\f_1,\f_2\in  \mathbb{C}^{N}$, 
the optimal graph filters are obtained by equating the derivative of the \ac{mse} in \eqref{epsilon1} w.r.t. $\f_1,\f_2$ to zero (using complex derivative rules, see e.g., Appendix 2 in \cite{schreier2010statistical}), which results in
\begin{equation}
\label{normal_equation}
    \left[
    \begin{array}{cc}
      (\D_{\bar\y\bar\y}^{\Gammamat})^*    &   (\D_{\bar\y\bar\y}^{\C} )^*\\
      \D_{\bar\y\bar\y}^{\C}   & \D_{\bar\y\bar\y}^{\Gammamat}
    \end{array}
    \right]
    \left[
    \begin{array}{c}
        \hat{\f}_1  \\
        \hat{\f}_2
    \end{array}
    \right] 
    =
    \left[
    \begin{array}{c}
        (\d_{\bar\y\bar\x}^{\Gammamat} )^*  \\
        \d_{\bar\y\bar\x}^{\C}
    \end{array}
    \right],
\end{equation}
where $\d_{\bar\y\bar\x}^{\Gammamat}\define
{\text{diag}}(\Gammamat_{\bar{\y}\bar{\x}})
$ and $\d_{\bar\y\bar\x}^{\C} \define
{\text{diag}}(\C_{\bar{\y}\bar{\x}})
$, and $  \D_{\bar\y\bar\y}^{\Gammamat}$ and  $\D_{\bar\y\bar\y}^{\C} $
are defined in \eqref{d_def1} and \eqref{d_def2}, respectively. 
Applying the matrix inversion lemma on \eqref{normal_equation} leads to 
%\begin{eqnarray}    \left(
%    \begin{array}{c}
%        \hat{\f}_1  \\
%        \hat{\f}_2
%    \end{array}
%    \right)
 %  =
 %   \left(
%    \begin{array}{cc}
%        ( \D_{\bar\y\bar\y}^\P )^{-1} &
%        -( \D_{\bar\y\bar\y}^\P )^{-1} (\D_{\bar\y\bar\y}^{\C} )^* (\D_{\bar\y\bar\y}^{\Gammamat})^{-1}
%        \\
%        -( \D_{\bar\y\bar\y}^\P )^{-*}\D_{\bar\y\bar\y}^{\C}(\D_{\bar\y\bar\y}^{\Gammamat})^{-*} & ( \D_{\bar\y\bar\y}^\P )^{-*}
        
 %   \end{array}
 %   \right) 
 %   \nonumber\\
 %   \times
 %   \left(
 %   \begin{array}{c}
  %      (\d_{\bar\y\bar\x}^{\Gammamat} )^*  \\
  %      \d_{\bar\y\bar\x}^{\C}
  %  \end{array}
  %  \right), 
%\end{eqnarray} 

\begin{eqnarray}
    \hat{\f}_1 &=& 
    % old version Schur
    % ( \D_{\bar\y\bar\y}^\P )^{-1}
    ( \D_{\bar\y\bar\y}^\P )^{-1
    }
    (\d_{\bar\y\bar\x}^{\Gammamat} )^* -( \D_{\bar\y\bar\y}^\P )^{-1} (\D_{\bar\y\bar\y}^{\C} )^* (\D_{\bar\y\bar\y}^{\Gammamat})^{-1} \d_{\bar\y\bar\x}^{\C} \nonumber\\
    &=&
    % old version Schur
    % ( \D_{\bar\y\bar\y}^\P )^{-1}
    ( \D_{\bar\y\bar\y}^\P )^{-1}(
    (\d_{\bar\y\bar\x}^{\Gammamat} )^* - 
    (\D_{\bar\y\bar\y}^{\C} )^* (\D_{\bar\y\bar\y}^{\Gammamat})^{-1} \d_{\bar\y\bar\x}^{\C})
   \eeqna
   and
   \beqna
   \label{f2app}
    \hat{\f}_2 &=& 
    % old version Schur
    % -( \D_{\bar\y\bar\y}^\P)^{-*}
    -( \D_{\bar\y\bar\y}^\P)^{-1}
    \D_{\bar\y\bar\y}^{\C}(\D_{\bar\y\bar\y}^{\Gammamat})^{-*}(\d_{\bar\y\bar\x}^{\Gammamat} )^* 
    +
    % old version Schur
    % ( \D_{\bar\y\bar\y}^\P)^{-*}
    ( \D_{\bar\y\bar\y}^\P)^{-1}
    \d_{\bar\y\bar\x}^{\C} \nonumber\\
    &=&
    ( \D_{\bar\y\bar\y}^\P )^{-1}
    (
    \d_{\bar\y\bar\x}^{\C}
    -
    \D_{\bar\y\bar\y}^{\C}(\D_{\bar\y\bar\y}^{\Gammamat})^{-1}(\d_{\bar\y\bar\x}^{\Gammamat} )^*
    ),
\end{eqnarray}
where $\D_{\bar\y\bar\y}^{\Gammamat}$ is a real matrix, and the Schur complement  of the left matrix in \eqref{normal_equation}, $\D_{\bar\y\bar\y}^\P$, is defined 
in \eqref{Schur_diag} and is a real diagonal matrix 
(thus, we used $(\D_{\bar\y\bar\y}^\P)^*=\D_{\bar\y\bar\y}^\P$ and   $(\D_{\bar\y\bar\y}^\P)^{-H}=  (\D_{\bar\y\bar\y}^\P )^{-1}$).
Equivalently, 
if $| [\Gammamat_{\bar\y\bar\y}]_{n,n} |^2 \neq | [\C_{\bar\y\bar\y}]_{n,n}|^2$, then 
the $n$th elements of  $\hat{\f}_2$ and  $\hat{\f}_1$ satisfy
\begin{eqnarray}
\label{hat_f1_n}
    \hat{f}_{1,n}
    =
    \frac{ ([\Gammamat_{\bar\y\bar\x}]_{n,n})^*  ([\Gammamat_{\bar\y\bar\y}]_{n,n}) - [\C_{\bar\y\bar\x}]_{n,n}   ([\C_{\bar\y\bar\y}]_{n,n})^*}
   % {\v_n^T \Gammamat_{\y\y} \v_n \v_n^T (\Gammamat_{\y\y})^* \v_n - \v_n^T \C_{\y\y}\v_n\v_n^T (\C_{\y\y})^*\v_n} \nonumber\\
%    &=&
%    \frac{\v_n^T (\Gammamat_{\y\x})^*\v_n \v_n^T (\Gammamat_{\y\y})^* \v_n - \v_n^T \C_{\y\x} \v_n \v_n^T (\C_{\y\y})^*\v_n}
    {| [\Gammamat_{\bar\y\bar\y}]_{n,n} |^2 - | [\C_{\bar\y\bar\y}]_{n,n}|^2}, 
  \end{eqnarray}
\begin{eqnarray}
\label{hat_f2_n}
    \hat{f}_{2,n}
    =
    \frac{[\C_{\bar\y\bar\x}]_{n,n} [ \Gammamat_{\bar\y\bar\y}]_{n,n}  -  ([\Gammamat_{\bar\y\bar\x}]_{n,n})^*  [{\C}_{\bar\y\bar\y}]_{n,n}}
  %  { [\Gammamat_{\bar\y\bar\y}]_{n,n} \v_n^T (\Gammamat_{\y\y})^* \v_n - \v_n^T \C_{\y\y}\v_n\v_n^T (\C_{\y\y})^*\v_n} \nonumber\\
%    &=&
%    \frac{\v_n^T \C_{\y\x}\v_n \v_n^T \Gammamat_{\y\y} \v_n - \v_n^T (\Gammamat_{\y\x})^* \v_n \v_n^T \C_{\y\y}\v_n}
    {|[\Gammamat_{\bar\y\bar\y}]_{n,n}|^2 - | [\C_{\bar\y\bar\y}]_{n,n}|^2},
\end{eqnarray}
for any $n=1,\ldots,N$.
Now, if $ \rho_n \neq 1$, where  $\rho_n$ is defined in \eqref{rho_def}, then
\eqref{hat_f1_n} and \eqref{hat_f2_n} can be written as 
\begin{eqnarray}
\label{hat_f1_n_with_rho}
    \hat{f}_{1,n}
    =
    \frac{1}{1 - \rho_n}
    \frac{([\Gammamat_{\bar\y\bar\x}]_{n,n})^*}{ [\Gammamat_{\bar\y\bar\y}]_{n,n}} 
    +\left(1-\frac{1}{1 - \rho_n}\right)
    \frac{[\C_{\bar\y\bar\x}]_{n,n}}
    {[{\C}_{\bar\y\bar\y}]_{n,n}}
    \end{eqnarray}
    \begin{eqnarray}
    \hat{f}_{2,n}
   =
    \frac{1}{1-\rho_n}\frac{[\C_{\bar\y\bar\x}]_{n,n}}{([\Gammamat_{\bar\y\bar\y}]_{n,n})^*}
    +
    \left(1-\frac{1}{1 - \rho_n}\right)
    \frac{([\Gammamat_{\bar\y\bar\x}]_{n,n})^*}{([\C_{\bar\y\bar\y}]_{n,n})^*}.
    \label{hat_f2_n_with_rho}
\end{eqnarray}
By substituting \eqref{hat_f1_n_with_rho} and \eqref{hat_f2_n_with_rho} in \eqref{WLMMSE_general_form},
we obtain 
the GSP-WLMMSE estimator in \eqref{opt_WLMMSE_GSP}.

\section{Orthogonality  principle}
\label{app_calc}
In this appendix we develop the orthogonality  principle of the GSP-WLMMSE estimator. 
Using
\eqref{opt_WLMMSE_GSP2}, we obtain
\beqna
\label{ort1}
\EX[(\hat\x^{(\text{GSP-WLMMSE})}-\x)^H\y]
=-\EX[\x^H\y]
 \hspace{1cm}\nonumber
   \\
    +\EX[\bar{\y}^H 
    % old version Schur
    % ( \D_{\bar\y\bar\y}^\P )^{-H}
    ( \D_{\bar\y\bar\y}^\P )^{-1}
    (
    \D_{\bar\x\bar\y}^{\Gammamat}  - 
    (\D_{\bar\y\bar\y}^{\C} )^* (\D_{\bar\y\bar\y}^{\Gammamat})^{-1} \D_{\bar\y\bar\x}^{\C})^H \bar\y]\nonumber\\
    +\EX[\bar\y^T  
    % old version Schur
    % ( \D_{\bar\y\bar\y}^\P )^{-T}
    ( \D_{\bar\y\bar\y}^\P )^{-1}
    (
    (\D_{\bar\y\bar\x}^{\C}
    -
    \D_{\bar\y\bar\y}^{\C}(\D_{\bar\y\bar\y}^{\Gammamat})^{-1}(\D_{\bar\y\bar\x}^{\Gammamat} )^*
    )^H
    \bar\y],
\eeqna  
where we used \eqref{GFT_def} and the fact that $\V$ and $\D_{\bar\y\bar\y}^{\Gammamat}$ are real matrices.
By using the trace operator properties and  $\EX[\x^H\y]=\EX[\x^H\Vmat\Vmat^T\y]=\tr(\Gammamat_{ \bar\y \bar\x})$,  \eqref{ort1} can be rewritten as 
\beqna
\label{ort2}
\EX[(\hat\x^{(\text{GSP-WLMMSE})}-\x)^H\y]
=-\tr(\Gammamat_{ \bar\y \bar\x})
 \hspace{1cm}\nonumber
   \\ \tr \left(
    % old version Schur
    % ( \D_{\bar\y\bar\y}^\P )^{-H}
    ( \D_{\bar\y\bar\y}^\P )^{-1}
    (
    \D_{\bar\x\bar\y}^{\Gammamat}  - 
    (\D_{\bar\y\bar\y}^{\C} )^* (\D_{\bar\y\bar\y}^{\Gammamat})^{-1} \D_{\bar\y\bar\x}^{\C})^H \Gammamat_{ \bar\y \bar\y}\right)\nonumber\\
     + \tr \left( 
     % old version Schur
    % ( \D_{\bar\y\bar\y}^\P )^{-T}
     ( \D_{\bar\y\bar\y}^\P )^{-1}
    (
    (\D_{\bar\y\bar\x}^{\C}
    -
    \D_{\bar\y\bar\y}^{\C}(\D_{\bar\y\bar\y}^{\Gammamat})^{-*}(\D_{\bar\y\bar\x}^{\Gammamat} )^*
    )^H \C_{\bar\y\bar\y} \right)
    \nonumber\\
    =-\tr(\Dmat_{\bar{\y}\bar{\x}}^\Gammamat )
 \hspace{1cm}\nonumber
   \\ \tr \left(
    % old version Schur
    % ( \D_{\bar\y\bar\y}^\P )^{-H}
    ( \D_{\bar\y\bar\y}^\P )^{-1}
    (
    \D_{\bar\x\bar\y}^{\Gammamat}  - 
    (\D_{\bar\y\bar\y}^{\C} )^* (\D_{\bar\y\bar\y}^{\Gammamat})^{-1} \D_{\bar\y\bar\x}^{\C})^H \Dmat_{\bar{\y}\bar{\y}}^\Gammamat \right)\nonumber\\
     + \tr \left( 
     % old version Schur
    % ( \D_{\bar\y\bar\y}^\P )^{-T}
     ( \D_{\bar\y\bar\y}^\P )^{-1}
    (
    (\D_{\bar\y\bar\x}^{\C}
    -
    \D_{\bar\y\bar\y}^{\C}(\D_{\bar\y\bar\y}^{\Gammamat})^{-*}(\D_{\bar\y\bar\x}^{\Gammamat} )^*
    )^H \Dmat_{\bar{\y}\bar{\y}}^\C \right),
\eeqna
where the last equality is obtained by using the definitions in \eqref{d_def1} and \eqref{d_def2}, and since $\tr(\D \A)=\tr(\D {\text{ddiag}}(\Amat))$ for any diagonal matrix $\D$.
By substituting \eqref{Schur_diag} in \eqref{ort2}, it can be verified that
$\EX[(\hat\x^{(\text{GSP-WLMMSE})}-\x)^H\y]=0$.
Similarly, it can be shown that $\EX[(\hat\x^{(\text{GSP-WLMMSE})}-\x)^H\y^*]=0$.

%--------------------------------------

\section{Proof of Theorem \ref{Theorem5}} \label{appB}
%----begining of Appendix----
In this appendix, we develop the explicit terms of the \ac{mse} of the GSP-LMMSE estimator from \eqref{opt_LMMSE_GSP1} and  of the GSP-WLMMSE estimator from \eqref{opt_WLMMSE_GSP}.
%by developing the explicit terms of the MSEs. 
 The proof is along the lines of the proof of 
 \cite[Eq. (1.39)]{AL10} for the WLMSE estimator. In this appendix  we replace $\f_i(\ve\lambda)$ by $\f_i$ for $i=1,2$.

According to  the orthogonality principle of the GSP-WLMMSE estimator, developed in Appendix \ref{app_calc}, $\EX[(\hat\x^{(\text{GSP-WLMMSE})}-\x)^H\y]=0$ and $\EX[(\hat\x^{(\text{GSP-WLMMSE})}-\x)^H\y^*]=0$. In other words, the error vector of the GSP-WLMMSE estimator is orthogonal to any linear combination of these vectors. Since the estimate $\hat\x^{(\text{GSP-WLMMSE})}$ is a linear combination of $\y$ and $\y^*$ then $(\hat\x^{(\text{GSP-WLMMSE})}-\x)\perp\hat\x^{(\text{GSP-WLMMSE})}$, and as a result
\begin{equation}
    \EX[\|\hat\x^{(\text{GSP-WLMMSE})})\|^2]=\EX[(\hat\x^{(\text{GSP-WLMMSE})})^H\x].
\end{equation}
Thus,  the \ac{mse} of the GSP-WLMMSE estimator is given by
\begin{eqnarray}
\label{MSE_app2}
    \varepsilon_{GSP-WL}^2 =\EX[\|\hat{\x}^{(\text{GSP-WLMMSE})} - \x\|^2]
    \nonumber\hspace{1.1cm}\\= \EX[\|\x\|^2] - \EX[\|\hat\x^{(\text{GSP-WLMMSE})}\|^2].
\end{eqnarray}
By substituting the estimator $\hat\x^{(\text{GSP-WLMMSE})}$, which  is a widely-linear GSP
estimator described by
\eqref{WLMMSE_general_form} with 
the optimal graph frequency responses $\hat{\f}_1$ and $\hat{\f}_2$ from Appendix \ref{appA},
and after a few algebraic steps, the \ac{mse} in \eqref{MSE_app2} is also expressed in a compact form as
\begin{eqnarray}
\label{GSP_WL_app}
    \varepsilon_{GSP-WL}^2 =
    \tr(\Gammamat_{\x\x}) \hspace{4.8cm}\nonumber\\
    -
    \left[
    \begin{array}{c}
         \hat{\f}_1  \\
         \hat{\f}_2 
    \end{array}
    \right]^H
     \left[
    \begin{array}{cc}
         (\D^\Gammamat_{\bar\y\bar\y})^* & (\D^\C_{\bar\y\bar\y})^* \\
         \D^\C_{\bar\y\bar\y} & \D^\Gammamat_{\bar\y\bar\y}
    \end{array}
    \right]
     \left[
    \begin{array}{c}
         \hat{\f}_1  \\
         \hat{\f}_2 
    \end{array}
    \right]
    \nonumber\\
        =
    \tr(\Gammamat_{\x\x})
    -
    \left[
    \begin{array}{c}
        (\d_{\bar\y\bar\x}^{\Gammamat} )^*  \\
        \d_{\bar\y\bar\x}^{\C}
    \end{array}
    \right]^H\hspace{2.3cm}\nonumber\\
    \qquad \times
     \left[
    \begin{array}{cc}
         \D^\Gammamat_{\bar\y\bar\y} & (\D^\C_{\bar\y\bar\y})^* \\
         \D^\C_{\bar\y\bar\y} & \D^\Gammamat_{\bar\y\bar\y}
    \end{array}
    \right]^{-1}
    \left[
    \begin{array}{c}
        (\d_{\bar\y\bar\x}^{\Gammamat} )^*  \\
        \d_{\bar\y\bar\x}^{\C}
    \end{array}
    \right],  
\end{eqnarray}
where the last equality is obtained by substituting \eqref{normal_equation} and using $(\D^\Gammamat_{\bar\y\bar\y})^*=\D^\Gammamat_{\bar\y\bar\y}$.
By using the inverse of a block matrix, \eqref{GSP_WL_app} can be rewritten as 
\begin{eqnarray}
\label{GSP_WL_app2}
    \varepsilon_{GSP-WL}^2 =
    \tr(\Gammamat_{\x\x})
    -
    \left[
    \begin{array}{c}
        (\d_{\bar\y\bar\x}^{\Gammamat} )^*  \\
        \d_{\bar\y\bar\x}^{\C}
    \end{array}
    \right]^H\hspace{2.3cm}\nonumber\\
    \qquad \times
     \left[
    \begin{array}{lr}
      ( \D_{\bar\y\bar\y}^\P )^{-1} & \Emat_{1,2} \\
         \Emat_{2,1} &   ( \D_{\bar\y\bar\y}^\P )^{-1}
    \end{array}
    \right]
    \left[
    \begin{array}{c}
        (\d_{\bar\y\bar\x}^{\Gammamat} )^*  \\
        \d_{\bar\y\bar\x}^{\C}
    \end{array}
    \right],  
\end{eqnarray}
where the off-diagonal blocks are 
\begin{eqnarray}
\label{E12def}
    \Emat_{12} &=&  
    % old version Schur
    % -( \D_{\bar\y\bar\y}^\P )^{-1} 
    -( \D_{\bar\y\bar\y}^\P )^{-1} 
    (\D_{\bar\y\bar\y}^{\C} )^* (\D_{\bar\y\bar\y}^{\Gammamat})^{-1}\\
    \Emat_{21} &=& 
    -( \D_{\bar\y\bar\y}^\P)^{-1}
    \D_{\bar\y\bar\y}^{\C}(\D_{\bar\y\bar\y}^{\Gammamat})^{-1}
  ,\label{E21def}
\end{eqnarray}
and  $\D_{\bar\y\bar\y}^\P$    is a real diagonal matrix defined in \eqref{Schur_diag}.

On the other hand, by using \eqref{opt_LMMSE_GSP1}, it can be shown that the \ac{mse} of the GSP-LMMSE estimator is
\begin{eqnarray}
    \varepsilon_{GSP-L}^2
    =\EX[\|\hat{\x}^{(\text{GSP-LMMSE})} - \x\|^2]\hspace{1cm}
    \nonumber\\=
    \tr(\Gammamat_{\x\x})
    - (\d_{\bar\y\bar\x}^{\Gammamat})^T
    (\D^\Gammamat_{\bar\y\bar\y})^{-*}
    (\d_{\bar\y\bar\x}^{\Gammamat})^*,
\end{eqnarray}
which can also be expressed in an equivalent way as
\begin{eqnarray}
\label{GSP_L_app}
    \varepsilon_{GSP-L}^2 =
    \tr(\Gammamat_{\x\x})\hspace{4.5cm}  \nonumber\\
    -
    \left[
    \begin{array}{c}
        (\d_{\bar\y\bar\x}^{\Gammamat} )^*  \\
        \d_{\bar\y\bar\x}^{\C}
    \end{array}
    \right]^H
    \left[
    \begin{array}{cc}
         (\D^\Gammamat_{\bar\y\bar\y})^{-*} & \ve{0} \\
         \ve{0} & \ve{0}
    \end{array}
    \right]
    \left[
    \begin{array}{c}
        (\d_{\bar\y\bar\x}^{\Gammamat} )^*  \\
        \d_{\bar\y\bar\x}^{\C}
    \end{array}
    \right].
\end{eqnarray}
The difference between
the  MSEs from \eqref{GSP_WL_app} and \eqref{GSP_L_app}
is 
\beqna
\label{gap_MSE}
    \varepsilon_{GSP-L}^2 - \varepsilon_{GSP-WL}^2 =
    \left[
    \begin{array}{c}
        (\d_{\bar\y\bar\x}^{\Gammamat} )^*  \\
        \d_{\bar\y\bar\x}^{\C}
    \end{array}
    \right]^H \hspace{2.5cm}
    \nonumber\\  
    \times
    \left[
    \begin{array}{cc}
          ( \D_{\bar\y\bar\y}^\P )^{-1}- (\D^\Gammamat_{\bar\y\bar\y})^{-1}& \Emat_{1,2} \\
        \Emat_{2,1} &  ( \D_{\bar\y\bar\y}^\P )^{-1}
    \end{array}
    \right]
    \left[
    \begin{array}{c}
        (\d_{\bar\y\bar\x}^{\Gammamat} )^*  \\
        \d_{\bar\y\bar\x}^{\C}
    \end{array}
    \right].
\eeqna
By using \eqref{ppp}, it can be verified that
\beqna
\label{E11}
( \D_{\bar\y\bar\y}^\P )^{-1}- (\D^\Gammamat_{\bar\y\bar\y})^{-1}\hspace{3.5cm}
\nonumber\\
=
    (\D_{\bar\y\bar\y}^{\Gammamat})^{-1}
(\D_{\bar\y\bar\y}^\C)^{*}
% old version Schur
    % (\D_{\bar\y\bar\y}^\P)^{-1}
    (\D_{\bar\y\bar\y}^\P)^{-1}
\D_{\bar\y\bar\y}^{\C} (\D_{\bar\y\bar\y}^{\Gammamat})^{-1}\nonumber\\
=\Emat_{2,1} \D_{\bar\y\bar\y}^\P \Emat_{1,2},
\end{eqnarray}
where we used the fact that $\Emat_{2,1}, \D_{\bar\y\bar\y}^\P, \Emat_{1,2}$ are  based on   multiplications of diagonal matrices, and thus we can change the order  of the multiplications.
By substituting \eqref{E11} in \eqref{gap_MSE} and using the fact that $\Emat_{2,1}=\Emat_{1,2}^*$, we obtain that \begin{eqnarray}  
\label{gap}
    \varepsilon_{GSP-L}^2 - \varepsilon_{GSP-WL}^2
     \hspace{3.5cm} \nonumber\\=
     (\d_{\bar\y\bar\x}^{\Gammamat} )^T 
     \Emat_{2,1} \D_{\bar\y\bar\y}^\P \Emat_{1,2}
     (\d_{\bar\y\bar\x}^{\Gammamat} )^*
     +(\d_{\bar\y\bar\x}^{\Gammamat} )^T  \Emat_{1,2}
      \d_{\bar\y\bar\x}^{\C}
      \nonumber\\
    %%%
    + ( \d_{\bar\y\bar\x}^{\C})^H \Emat_{2,1} (\d_{\bar\y\bar\x}^{\Gammamat} )^*
    +
        ( \d_{\bar\y\bar\x}^{\C})^H
        (\D_{\bar\y\bar\y}^\P)^{-1}\d_{\bar\y\bar\x}^{\C}
 \nonumber\\=
    (\d_{\bar\y\bar\x}^{\C} - (\D_{\bar\y\bar\y}^{\C} )^* (\D_{\bar\y\bar\y}^{\Gammamat})^{-1} (\d_{\bar\y\bar\x}^{\Gammamat} )^*)^H 
    % old version Schur
    % ( \D_{\bar\y\bar\y}^\P )^{-*} 
    ( \D_{\bar\y\bar\y}^\P )^{-1} 
    \nonumber\\
    \times(\d_{\bar\y\bar\x}^{\C} - (\D_{\bar\y\bar\y}^{\C} )^* (\D_{\bar\y\bar\y}^{\Gammamat})^{-1} (\d_{\bar\y\bar\x}^{\Gammamat} )^*), 
\end{eqnarray}
where the last equality is obtained by substituting \eqref{E12def} and \eqref{E21def}.
Finally, by substituting \eqref{f2_paper} in \eqref{gap},
$\varepsilon_{GSP-L}^2 - \varepsilon_{GSP-WL}^2$ can be  expressed in a compact form as \eqref{eq:diff_mse}.

%-----------------------------------

\section{Proof of Theorem \ref{claim_graphical_Model}} \label{graphical_Model_Appendix}
 By using \eqref{g_filter}, we obtain that
\begin{eqnarray} \label{covariance_matrix_app_2}
   \Gammamat_{\bar{\y}\bar{\y}}
    =
    \EX[\bar{\y}\bar{\y}^H]
    = h_1(\ve\lambda)
    \Gammamat_{\bar{\x}\bar{\x}} h_1^*(\ve\lambda)
    +
    h_2(\ve\lambda)
    \Gammamat_{\bar{\x}\bar{\x}}^* h_2^*(\ve\lambda)
    \nonumber\\
    + h_1(\ve\lambda) \Cmat_{\bar{\x}\bar{\x}}h_2^*(\ve\lambda)
    + h_2(\ve\lambda)\Cmat_{\bar{\x}\bar{\x}}^* h_1^*(\ve\lambda)
    +\Gammamat_{\bar{\n}\bar{\n}},
\end{eqnarray}
where we use the GFT definition in \eqref{GFT_def}.
Since according to the Theorem conditions, 
 $\Gammamat_{\bar{\n}\bar{\n}}$, $\Gammamat_{\bar{\x}\bar{\x}}$, and $\Cmat_{\bar{\x}\bar{\x}}$ are diagonal matrices, we obtain that  $ \Gammamat_{\bar{\y}\bar{\y}}$ is a diagonal matrix, i.e.
\begin{equation}\label{graphical_Model_Appendix_C_xx_f}
    \Gammamat_{\bar{\y}\bar{\y}} = 	\Dmat_{\bar{\y}\bar{\y}}^\Gammamat.
\end{equation}
Similarly, the following covariance matrices are all diagonal: 
\begin{eqnarray} \label{covariance_matrix_app_2_C}
   \Cmat_{\bar{\y}\bar{\y}}
    =
    \EX[\bar{\y}\bar{\y}^T]
    = h_1(\ve\lambda)
    \Cmat_{\bar{\x}\bar{\x}} h_1(\ve\lambda)
    +
    h_2(\ve\lambda)
    \Cmat_{\bar{\x}\bar{\x}}^* h_2(\ve\lambda)
    \nonumber\\
    + h_1(\ve\lambda) \Gammamat_{\bar{\x}\bar{\x}}h_2(\ve\lambda)
    + h_2(\ve\lambda)\Gammamat_{\bar{\x}\bar{\x}}^* h_1(\ve\lambda)
    +\Cmat_{\bar{\n}\bar{\n}},
\end{eqnarray}
\begin{eqnarray} \label{graphical_Model_Appendix_C_tx_A}
   \Gammamat_{\bar\x\bar\y} = 
    \EX[\bar\x\bar\y^H]
=  \Gammamat_{\bar{\x}\bar{\x}} h_1^*(\ve\lambda)
+\Cmat_{\bar{\x}\bar{\x}}h_2^*(\ve\lambda),
\end{eqnarray}
\begin{eqnarray} \label{graphical_Model_Appendix_C_tx_B}
   \Cmat_{\bar\x\bar\y} = 
    \EX[\bar\x\bar\y^T]
=  \Cmat_{\bar{\x}\bar{\x}} h_1(\ve\lambda)
+\Gammamat_{\bar{\x}\bar{\x}}h_2(\ve\lambda).
\end{eqnarray}
As a result, the Schur complement from \eqref{Pyy} satisfies
\beqna
\label{Pyy_}
    \P_{\y\y} = {\text{ddiag}}(\Gammamat_{\y\y}) \hspace{4.5cm}\nonumber\\-  {\text{ddiag}}(\C_{\y\y}) {\text{ddiag}}((\Gammamat_{\y\y}^{-1})^*) {\text{ddiag}}(({\C}_{\y\y})^*)
  =
    % old version Schur
    %   \D_{\bar\y\bar\y}^\P,
    \D_{\bar\y\bar\y}^\P,
\eeqna
where $ \D_{\bar\y\bar\y}^\P$ is defined in \eqref{Schur_diag}.
 Therefore, since all the matrices involved are diagonal matrices,  \eqref{coincides_Appendix_to_prove_1} and \eqref{coincides_Appendix_to_prove_2} hold.

\section{Derivation of   \eqref{new_eq} and \eqref{opt_LMMSE_GSP1th_4}}
\label{new_appendix}
In this appendix we derive
     \eqref{new_eq} and \eqref{opt_LMMSE_GSP1th_4} for the model in \eqref{g_filter2}, under the assumptions that 
$h_2(\ve\lambda)$ is 
invertible
 such that $h_2^{-1}(\ve\lambda)
     h_2(\ve\lambda)=\Imat $. 
     By substituting $\Gammamat_{\bar{\n}\bar{\n}} =\Cmat_{\bar{\n}\bar{\n}}=\zerovec$ and $ h_1(\ve\lambda)=\zerovec$ in  \eqref{covariance_matrix_app_2}-\eqref{graphical_Model_Appendix_C_tx_B} from  Appendix \ref{graphical_Model_Appendix}, and using the fact that $(\D_{\bar{\x}\bar{\x}}^\Gammamat)^*=\D_{\bar{\x}\bar{\x}}^\Gammamat$, we obtain  that in this case
\begin{eqnarray} \label{covariance_matrix_app_2_0}
   \Gammamat_{\bar{\y}\bar{\y}}
    =
    h_2(\ve\lambda)
    \Gammamat_{\bar{\x}\bar{\x}}^* h_2^*(\ve\lambda)\Rightarrow \D_{\bar\y\bar\y}^{\Gammamat}= 
    h_2(\ve\lambda)
    \D_{\bar{\x}\bar{\x}}^\Gammamat h_2^*(\ve\lambda),
\end{eqnarray}
\begin{eqnarray} \label{covariance_matrix_app_2_C_0}
   \Cmat_{\bar{\y}\bar{\y}}
    =
    h_2(\ve\lambda)
    \Cmat_{\bar{\x}\bar{\x}}^* h_2(\ve\lambda)\Rightarrow \D_{\bar{\y}\bar{\y}}^\Cmat
    =
    h_2(\ve\lambda)
    (\D_{\bar{\x}\bar{\x}}^\Cmat)^*h_2(\ve\lambda),
\end{eqnarray}
\begin{eqnarray} \label{graphical_Model_Appendix_C_tx_0}
   \Gammamat_{\bar\x\bar\y} = 
    \Cmat_{\bar{\x}\bar{\x}}h_2^*(\ve\lambda)\Rightarrow \D_{\bar\x\bar\y}^\Gammamat = 
    \Dmat_{\bar{\x}\bar{\x}}^\Cmat h_2^*(\ve\lambda),
\end{eqnarray}
and
\begin{eqnarray} \label{graphical_Model_Appendix_C_tx_00}
   \Cmat_{\bar\x\bar\y} 
=  \Gammamat_{\bar{\x}\bar{\x}}h_2(\ve\lambda)\Rightarrow \Dmat_{\bar\x\bar\y}^\Cmat 
=  \D_{\bar{\x}\bar{\x}}^\Gammamat h_2(\ve\lambda).
\end{eqnarray}
By substituting \eqref{covariance_matrix_app_2_0}-\eqref{graphical_Model_Appendix_C_tx_00}
in \eqref{opt_WLMMSE_GSP2}-\eqref{Schur_diag}, we obtain 
\begin{equation}
\label{Schur_diag_sc4}
    \D_{\bar\y\bar\y}^\P=
        h_2(\ve\lambda) \left( 
    \D_{\bar{\x}\bar{\x}}^\Gammamat 
     -     
    (\D_{\bar{\x}\bar{\x}}^\Cmat)^*(\D_{\bar{\x}\bar{\x}}^\Gammamat)^{-1}
    \D_{\bar{\x}\bar{\x}}^\Cmat 
    \right)  h_2^*(\ve\lambda)
\end{equation}
and
\begin{eqnarray}
\label{opt_WLMMSE_GSP2_th4}
    \hat{\x}^{(\text{GSP-WLMMSE})}  \hspace{5.75cm}\nonumber
   \\
    =
    \V(
     \Dmat_{\bar{\x}\bar{\x}}^\Cmat  - 
  \D_{\bar{\x}\bar{\x}}^\Gammamat 
     (  \D_{\bar{\x}\bar{\x}}^\Gammamat)^{-1}   \D_{\bar{\x}\bar{\x}}^\Cmat )h_2^*(\ve\lambda)
     ( \D_{\bar\y\bar\y}^\P )^{-1}\bar\y\hspace{0.75cm} \nonumber\\
    +
    \V  h_2(\ve\lambda)
    (
    \D_{\bar{\x}\bar{\x}}^\Gammamat
    -
   (\D_{\bar{\x}\bar{\x}}^\Cmat)^* (\D_{\bar{\x}\bar{\x}}^\Gammamat)^{-1}\D_{\bar\x\bar\x}^{\C} 
    )
    ( \D_{\bar\y\bar\y}^\P )^{-1}
    \bar\y^*\nonumber\\
    =   
    \V  
    \D_{\bar\y\bar\y}^\P h_2^{-1}(\ve\lambda) ( \D_{\bar\y\bar\y}^\P )^{-1}
    \bar\y^*
    =   
    \V  
   h_2^{-1}(\ve\lambda)
    \bar\y^*,\hspace{1.5cm}
\end{eqnarray}
where we used the fact that for any diagonal matrices with appropriate dimensions $\D_A,\D_B,\D_C$, we have $\D_A\D_B\D_C=\D_B\D_C\D_A$.
The last equality gives the result in \eqref{new_eq}.
 On the other hand, 
by substituting
 \eqref{covariance_matrix_app_2_0} and \eqref{graphical_Model_Appendix_C_tx_0}
in \eqref{opt_LMMSE_GSP1}, we obtain that the GSP-LMMSE
 estimator for this case is given by \eqref{opt_LMMSE_GSP1th_4}.
 \bibliographystyle{IEEEtran}

% Generated by IEEEtran.bst, version: 1.14 (2015/08/26)
\begin{thebibliography}{10}
\providecommand{\url}[1]{#1}
\csname url@samestyle\endcsname
\providecommand{\newblock}{\relax}
\providecommand{\bibinfo}[2]{#2}
\providecommand{\BIBentrySTDinterwordspacing}{\spaceskip=0pt\relax}
\providecommand{\BIBentryALTinterwordstretchfactor}{4}
\providecommand{\BIBentryALTinterwordspacing}{\spaceskip=\fontdimen2\font plus
\BIBentryALTinterwordstretchfactor\fontdimen3\font minus
  \fontdimen4\font\relax}
\providecommand{\BIBforeignlanguage}[2]{{%
\expandafter\ifx\csname l@#1\endcsname\relax
\typeout{** WARNING: IEEEtran.bst: No hyphenation pattern has been}%
\typeout{** loaded for the language `#1'. Using the pattern for}%
\typeout{** the default language instead.}%
\else
\language=\csname l@#1\endcsname
\fi
#2}}
\providecommand{\BIBdecl}{\relax}
\BIBdecl

\bibitem{8347162}
A.~Ortega, P.~Frossard, J.~Kova$\check{\text{c}}$evi$\acute{\text{c}}$,
  J.~M.~F. Moura, and P.~Vandergheynst, ``Graph signal processing: Overview,
  challenges, and applications,'' \emph{Proc. IEEE}, vol. 106, no.~5, pp.
  808--828, May 2018.

\bibitem{Shuman_Ortega_2013}
D.~I. Shuman, S.~K. Narang, P.~Frossard, A.~Ortega, and P.~Vandergheynst, ``The
  emerging field of signal processing on graphs: {E}xtending high-dimensional
  data analysis to networks and other irregular domains,'' \emph{IEEE Signal
  Process. Mag.}, vol.~30, no.~3, pp. 83--98, May 2013.

\bibitem{9244650}
Y.~{Tanaka}, Y.~C. {Eldar}, A.~{Ortega}, and G.~{Cheung}, ``Sampling signals on
  graphs: From theory to applications,'' \emph{IEEE Signal Process. Mag.},
  vol.~37, no.~6, pp. 14--30, 2020.

\bibitem{6854325}
A.~{Anis}, A.~{Gadde}, and A.~{Ortega}, ``Towards a sampling theorem for
  signals on arbitrary graphs,'' in \emph{Proc. of ICASSP}, May 2014, pp.
  3864--3868.

\bibitem{routtenberg2020}
T.~Routtenberg, ``Non-{B}ayesian estimation framework for signal recovery on
  graphs,'' \emph{IEEE Trans. Signal Process.}, vol.~69, pp. 1169--1184, 2021.

\bibitem{kroizer2021bayesian}
A.~Kroizer, T.~Routtenberg, and Y.~C. Eldar, ``Bayesian estimation of graph
  signals,'' \emph{IEEE Trans. Signal Process.}, vol.~70, pp. 2207--2223, 2022.

\bibitem{han2015formation}
Z.~Han, L.~Wang, Z.~Lin, and R.~Zheng, ``Formation control with size scaling
  via a complex {L}aplacian-based approach,'' \emph{IEEE Trans. cybernetics},
  vol.~46, no.~10, pp. 2348--2359, 2015.

\bibitem{Tscherkaschin_2016}
K.~Tscherkaschin, B.~Knoop, J.~Rust, and S.~Paul, ``Design of a multi-core
  hardware architecture for consensus-based {MIMO} detection algorithms,'' in
  \emph{Proc. of Asilomar}, 2016, pp. 877--881.

\bibitem{2021Anna}
R.~Ramakrishna and A.~Scaglione, ``Grid-graph signal processing (grid-{GSP}): A
  graph signal processing framework for the power grid,'' \emph{IEEE Trans.
  Signal Process.}, vol.~69, pp. 2725--2739, 2021.

\bibitem{drayer2018detection}
E.~{Drayer} and T.~{Routtenberg}, ``Detection of false data injection attacks
  in smart grids based on graph signal processing,'' \emph{IEEE Systems
  Journal}, vol.~14, no.~2, pp. 1886--1896, 2020.

\bibitem{dabush2021state}
L.~Dabush, A.~Kroizer, and T.~Routtenberg, ``State estimation in partially
  observable power systems via graph signal processing tools,'' \emph{Sensors
  (MDPI)}, vol.~23, no.~3, p. 1387, 2023.

\bibitem{Morgenstern_2022}
G.~Morgenstern and T.~Routtenberg, ``Structural-constrained methods for the
  identification of false data injection attacks in power systems,'' \emph{IEEE
  Access}, vol.~10, pp. 94\,169--94\,185, 2022.

\bibitem{Tugnait_2019}
J.~K. Tugnait, ``Graphical lasso for high-dimensional complex gaussian
  graphical model selection,'' in \emph{Proc. of ICASSP}, 2019, pp. 2952--2956.

\bibitem{neeser1993proper}
F.~D. Neeser and J.~L. Massey, ``Proper complex random processes with
  applications to information theory,'' \emph{IEEE Trans. information theory},
  vol.~39, no.~4, pp. 1293--1302, 1993.

\bibitem{schreier2010statistical}
P.~J. Schreier and L.~L. Scharf, \emph{Statistical signal processing of
  complex-valued data: the theory of improper and noncircular signals}.\hskip
  1em plus 0.5em minus 0.4em\relax Cambridge university press, 2010.

\bibitem{nitta2019hypercomplex}
T.~Nitta, M.~Kobayashi, and D.~P. Mandic, ``Hypercomplex widely linear
  estimation through the lens of underpinning geometry,'' \emph{IEEE Trans.
  Signal Process.}, vol.~67, no.~15, pp. 3985--3994, 2019.

\bibitem{Yili_mandic}
Y.~Xia, S.~C. Douglas, and D.~P. Mandic, ``Adaptive frequency estimation in
  smart grid applications: Exploiting noncircularity and widely linear adaptive
  estimators,'' \emph{IEEE Signal Process. Mag.}, vol.~29, no.~5, pp. 44--54,
  2012.

\bibitem{eriksson2006complex}
J.~Eriksson and V.~Koivunen, ``Complex random vectors and {ICA} models:
  Identifiability, uniqueness, and separability,'' \emph{IEEE Trans.
  Information theory}, vol.~52, no.~3, pp. 1017--1029, 2006.

\bibitem{li2008complex}
H.~Li and T.~Adali, ``Complex-valued adaptive signal processing using nonlinear
  functions,'' \emph{EURASIP Journal on Advances in Signal Processing}, vol.
  2008, pp. 1--9, 2008.

\bibitem{Picinbono_Chevalier1995}
B.~Picinbono and P.~Chevalier, ``Widely linear estimation with complex data,''
  \emph{IEEE Trans. Signal Process.}, vol.~43, no.~8, pp. 2030--2033, 1995.

\bibitem{JAHANCHAHI201433}
C.~Jahanchahi, S.~Kanna, and D.~Mandic, ``Complex dual channel estimation: Cost
  effective widely linear adaptive filtering,'' \emph{Signal Processing}, vol.
  104, pp. 33--42, 2014.

\bibitem{boloix2017widely}
R.~Boloix-Tortosa, J.~J. Murillo-Fuentes, I.~Santos, and F.~P{\'e}rez-Cruz,
  ``Widely linear complex-valued kernel methods for regression,'' \emph{IEEE
  Trans. Signal Processing}, vol.~65, no.~19, pp. 5240--5248, 2017.

\bibitem{mandic2009complex}
D.~P. Mandic and V.~S.~L. Goh, \emph{Complex valued nonlinear adaptive filters:
  noncircularity, widely linear and neural models}.\hskip 1em plus 0.5em minus
  0.4em\relax John Wiley \& Sons, 2009.

\bibitem{Eldar_merhav}
Y.~C. {Eldar} and N.~{Merhav}, ``A competitive minimax approach to robust
  estimation of random parameters,'' \emph{IEEE Trans. Signal Process.},
  vol.~52, no.~7, pp. 1931--1946, 2004.

\bibitem{6778068}
S.~{Chen}, F.~{Cerda}, P.~{Rizzo}, J.~{Bielak}, J.~H. {Garrett}, and
  J.~{Kova$\check{\text{c}}$evi$\acute{\text{c}}$}, ``Semi-supervised
  multiresolution classification using adaptive graph filtering with
  application to indirect bridge structural health monitoring,'' \emph{IEEE
  Trans. Signal Process.}, vol.~62, no.~11, pp. 2879--2893, 2014.

\bibitem{7032244}
S.~{Chen}, A.~{Sandryhaila}, J.~M.~F. {Moura}, and
  J.~{Kova$\check{\text{c}}$evi$\acute{\text{c}}$}, ``Signal denoising on
  graphs via graph filtering,'' in \emph{Proc. of GlobalSIP}, 2014, pp.
  872--876.

\bibitem{7117446}
------, ``Signal recovery on graphs: Variation minimization,'' \emph{IEEE
  Trans. Signal Process.}, vol.~63, no.~17, pp. 4609--4624, Sept. 2015.

\bibitem{Isufi_Leus2017}
E.~{Isufi}, A.~{Loukas}, A.~{Simonetto}, and G.~{Leus}, ``Autoregressive moving
  average graph filtering,'' \emph{IEEE Trans. Signal Process.}, vol.~65,
  no.~2, pp. 274--288, 2017.

\bibitem{7891646}
N.~{Perraudin} and P.~{Vandergheynst}, ``Stationary signal processing on
  graphs,'' \emph{IEEE Trans. Signal Process.}, vol.~65, pp. 3462--3477, 2017.

\bibitem{Yagan_Ozen_2020}
A.~C. Yagan and M.~T. Ozgen, ``Spectral graph based vertex-frequency {W}iener
  filtering for image and graph signal denoising,'' \emph{IEEE Trans. Signal
  Inf. Process. Netw.,}, vol.~6, pp. 226--240, 2020.

\bibitem{8496842}
Z.~{Xiao} and X.~{Wang}, ``Nonlinear polynomial graph filter for signal
  processing with irregular structures,'' \emph{IEEE Trans. Signal Process.},
  vol.~66, no.~23, pp. 6241--6251, 2018.

\bibitem{Isufi_Ribeiro}
L.~Ruiz, F.~Gama, and A.~Ribeiro, ``Graph neural networks: Architectures,
  stability, and transferability,'' \emph{Proceedings of the IEEE}, vol. 109,
  no.~5, pp. 660--682, 2021.

\bibitem{gama2020graphs}
F.~Gama, E.~Isufi, G.~Leus, and A.~Ribeiro, ``Graphs, convolutions, and neural
  networks: From graph filters to graph neural networks,'' \emph{IEEE Signal
  Process. Mag.}, vol.~37, no.~6, pp. 128--138, 2020.

\bibitem{7763882}
J.~{Mei} and J.~M.~F. {Moura}, ``Signal processing on graphs: Causal modeling
  of unstructured data,'' \emph{IEEE Trans. Signal Process.}, vol.~65, no.~8,
  pp. 2077--2092, 2017.

\bibitem{Hua_Sayed_2020}
F.~{Hua}, R.~{Nassif}, C.~{Richard}, H.~{Wang}, and A.~H. {Sayed}, ``Online
  distributed learning over graphs with multitask graph-filter models,''
  \emph{IEEE Trans. Signal Inf. Process. Netw.}, vol.~6, pp. 63--77, 2020.

\bibitem{confPaper}
A.~Kroizer, Y.~C. Eldar, and T.~Routtenberg, ``Modeling and recovery of graph
  signals and difference-based signals,'' in \emph{Proc. of GlobalSIP}, Nov.
  2019, pp. 1--5.

\bibitem{Kayestimation}
S.~M. Kay, \emph{Fundamentals of statistical signal processing: Estimation
  Theory}.\hskip 1em plus 0.5em minus 0.4em\relax Englewood Cliffs (N.J.):
  Prentice Hall PTR, 1993, vol.~1.

\bibitem{Schreier_Scharf_2003}
P.~Schreier and L.~Scharf, ``Second-order analysis of improper complex random
  vectors and processes,'' \emph{IEEE Trans. Signal Process.}, vol.~51, no.~3,
  pp. 714--725, 2003.

\bibitem{SVD}
L.~Dieci and T.~Eirola, ``On smooth decompositions of matrices,'' \emph{SIAM J.
  on Matrix Anal. and Appl.}, vol.~20, no.~3, pp. 800--819, 1999.

\bibitem{Chebyshev_2017}
M.~Onuki, S.~Ono, K.~Shirai, and Y.~Tanaka, ``Fast singular value shrinkage
  with {C}hebyshev polynomial approximation based on signal sparsity,''
  \emph{IEEE Trans. Signal Process.}, vol.~65, no.~22, pp. 6083--6096, 2017.

\bibitem{Chebyshev_2018}
D.~I. Shuman, P.~Vandergheynst, D.~Kressner, and P.~Frossard, ``Distributed
  signal processing via {C}hebyshev polynomial approximation,'' \emph{IEEE
  Trans. Signal Inf. Process. Netw.}, vol.~4, no.~4, pp. 736--751, 2018.

\bibitem{stankovic2019graph}
L.~Stankovic, D.~Mandic, M.~Dakovic, M.~Brajovic, B.~Scalzo, and A.~G.
  Constantinides, ``Graph signal processing--part {II}: Processing and
  analyzing signals on graphs,'' \emph{arXiv preprint arXiv:1909.10325}, 2019.

\bibitem{Shuman_2020}
D.~I. Shuman, ``Localized spectral graph filter frames: A unifying framework,
  survey of design considerations, and numerical comparison,'' \emph{IEEE
  Signal Process. Mag.}, vol.~37, no.~6, pp. 43--63, 2020.

\bibitem{Adali_Schreier_Scharf2011}
T.~Adali, P.~J. Schreier, and L.~L. Scharf, ``Complex-valued signal processing:
  The proper way to deal with impropriety,'' \emph{IEEE Trans. Signal
  Process.}, vol.~59, no.~11, pp. 5101--5125, Nov. 2011.

\bibitem{6827237}
J.~{Serra} and M.~{N\'ajar}, ``Asymptotically optimal linear shrinkage of
  sample {LMMSE} and {MVDR} filters,'' \emph{IEEE Trans. Signal Process.},
  vol.~62, no.~14, pp. 3552--3564, 2014.

\bibitem{van2004optimum}
H.~L. Van~Trees, \emph{Optimum array processing: Part IV of detection,
  estimation, and modulation theory}.\hskip 1em plus 0.5em minus 0.4em\relax
  John Wiley \& Sons, 2004.

\bibitem{shlezinger2022discriminative}
N.~Shlezinger and T.~Routtenberg, ``Discriminative and generative learning for
  linear estimation of random signals [lecture notes],'' \emph{arXiv preprint
  arXiv:2206.04432}, 2022.

\bibitem{4914845}
F.~{Rubio} and X.~{Mestre}, ``Consistent reduced-rank {LMMSE} estimation with a
  limited number of samples per observation dimension,'' \emph{IEEE Trans.
  Signal Process.}, vol.~57, no.~8, pp. 2889--2902, 2009.

\bibitem{5484583}
Y.~{Chen}, A.~{Wiesel}, Y.~C. {Eldar}, and A.~O. {Hero}, ``Shrinkage algorithms
  for {MMSE} covariance estimation,'' \emph{IEEE Trans. Signal Process.},
  vol.~58, no.~10, pp. 5016--5029, 2010.

\bibitem{Lancewicki_Aladjem}
T.~Lancewicki and M.~Aladjem, ``Multi-target shrinkage estimation for
  covariance matrices,'' \emph{IEEE Trans. Signal Process.}, vol.~62, no.~24,
  pp. 6380--6390, 2014.

\bibitem{lameiro2019maximally}
C.~Lameiro, I.~Santamar{\'\i}a, P.~J. Schreier, and W.~Utschick, ``Maximally
  improper signaling in underlay {MIMO} cognitive radio networks,'' \emph{IEEE
  Trans. Signal Process.}, vol.~67, no.~24, pp. 6241--6255, 2019.

\bibitem{hellings2019combining}
C.~Hellings and W.~Utschick, ``Combining linear estimation with scalar widely
  linear estimation,'' in \emph{Proc. of ICASSP}, 2019, pp. 5177--5181.

\bibitem{AL10}
T.~Adali, H.~Li, and S.~Haykin, ``Complex-valued adaptive signal processing,''
  \emph{Adaptive signal processing: next generation solutions}, 2010.

\bibitem{Reed1974Rapid}
I.~Reed, J.~D. Mallett, and B.~L. E., ``Rapid convergence rate in adaptive
  arrays,'' \emph{IEEE Trans. Aerosp. Electron. Syst.}, vol.~6, pp. 853--863,
  1974.

\bibitem{Monticelli1999}
A.~Monticelli, \emph{State Estimation in Electric Power Systems: A Generalized
  Approach}.\hskip 1em plus 0.5em minus 0.4em\relax Boston, MA: Springer US,
  1999, pp. 39--61,91--101,161--199.

\bibitem{Giannakis_Wollenberg_2013}
G.~B. Giannakis, V.~Kekatos, N.~Gatsis, S.~J. Kim, H.~Zhu, and B.~F.
  Wollenberg, ``Monitoring and optimization for power grids: A signal
  processing perspective,'' \emph{IEEE Signal Process. Mag.}, vol.~30, no.~5,
  pp. 107--128, Sept. 2013.

\bibitem{Grotas_Routtenberg_2019}
S.~Grotas, Y.~Yakoby, I.~Gera, and T.~Routtenberg, ``Power systems topology and
  state estimation by graph blind source separation,'' \emph{IEEE Trans. Signal
  Process.}, vol.~67, no.~8, pp. 2036--2051, Apr. 2019.

\bibitem{bienstock2019strong}
D.~{Bienstock} and A.~{Verma}, ``Strong {NP}-hardness of {AC} power flows
  feasibility,'' \emph{Oper. Res. Lett.}, vol.~47, no.~6, pp. 494--501, 2019.

\bibitem{iEEEdata}
\BIBentryALTinterwordspacing
``Power systems test case archive.'' [Online]. Available:
  \url{http://www.ee.washington.edu/research/pstca/}
\BIBentrySTDinterwordspacing

\bibitem{Isufi2020}
E.~Isufi, P.~Banelli, P.~D. Lorenzo, and G.~Leus, ``Observing and tracking
  bandlimited graph processes from sampled measurements,'' \emph{Signal
  Processing}, vol. 177, 2020.

\bibitem{Routtenberg_Sagi_ICASSP}
G.~Sagi and T.~Routtenberg, ``Extended {K}alman filter for graph signals in
  nonlinear dynamic systems,'' in \emph{Proc. of ICASSP}, 2023.

\bibitem{dini2012class}
D.~H. Dini and D.~P. Mandic, ``Class of widely linear complex {K}alman
  filters,'' \emph{IEEE Trans. Neural Networks and Learning Systems}, vol.~23,
  no.~5, pp. 775--786, 2012.

\end{thebibliography}
% Generated by IEEEtran.bst, version: 1.14 (2015/08/26)

%%%%%%%%%%%%%%%

\end{document}